\documentclass[11pt]{article}

\usepackage[T1]{fontenc}
\usepackage[utf8]{inputenc}
\usepackage{mathpazo}
\usepackage[margin=1.04in,top=0.96in,bottom=1.05in]{geometry}
\usepackage{amsmath,amssymb,amsthm,mathtools,bm}
\usepackage{microtype}
\usepackage{booktabs}
\usepackage{enumitem}
\usepackage{xcolor}
\usepackage{needspace}
\usepackage{mdframed}
\usepackage{tikz}
\usetikzlibrary{arrows.meta,positioning,calc}
\usepackage[square]{natbib}
\usepackage{aliascnt}
\usepackage[colorlinks=true,linkcolor=blue!45!black,citecolor=blue!45!black,urlcolor=blue!45!black]{hyperref}
\usepackage[nameinlink,noabbrev]{cleveref}
\newcommand{\doi}[1]{\href{https://doi.org/#1}{\nolinkurl{doi:#1}}}

\definecolor{Navy}{HTML}{16324F}
\definecolor{Steel}{HTML}{476B82}
\definecolor{Pale}{HTML}{EEF3F6}
\definecolor{Ink}{HTML}{20262C}

\hypersetup{
  pdftitle={Polynomial-Time MIMO Detection at the Maximum-Likelihood Threshold},
  pdfauthor={Dimitris Papailiopoulos},
}
\setlist{itemsep=.25em,topsep=.35em}
\allowdisplaybreaks

\newtheorem{theorem}{Theorem}[section]
\newaliascnt{lemma}{theorem}
\newtheorem{lemma}[lemma]{Lemma}
\aliascntresetthe{lemma}
\newaliascnt{proposition}{theorem}
\newtheorem{proposition}[proposition]{Proposition}
\aliascntresetthe{proposition}
\newaliascnt{corollary}{theorem}
\newtheorem{corollary}[corollary]{Corollary}
\aliascntresetthe{corollary}
\theoremstyle{remark}
\newaliascnt{remark}{theorem}
\newtheorem{remark}[remark]{Remark}
\aliascntresetthe{remark}
\crefname{lemma}{Lemma}{Lemmas}
\Crefname{lemma}{Lemma}{Lemmas}
\crefname{proposition}{Proposition}{Propositions}
\Crefname{proposition}{Proposition}{Propositions}
\crefname{corollary}{Corollary}{Corollaries}
\Crefname{corollary}{Corollary}{Corollaries}
\crefname{remark}{Remark}{Remarks}
\Crefname{remark}{Remark}{Remarks}

\newmdenv[
  backgroundcolor=Pale,
  linecolor=Steel,
  linewidth=.8pt,
  roundcorner=3pt,
  innerleftmargin=10pt,innerrightmargin=10pt,
  innertopmargin=8pt,innerbottommargin=8pt,
  skipabove=8pt,skipbelow=8pt,
  nobreak=true
]{keybox}

\newcommand{\E}{\mathbb E}
\newcommand{\Pp}{\mathbb P}
\newcommand{\one}{\boldsymbol 1}
\newcommand{\zero}{\boldsymbol 0}
\newcommand{\sign}{\operatorname{sign}}
\newcommand{\tr}{\operatorname{tr}}
\newcommand{\dH}{d_{\mathrm H}}

\title{\vspace{-1.2cm}\textcolor{Navy}{\bfseries Polynomial-Time MIMO Detection at the\\ Maximum-Likelihood Threshold}}
\author{Dimitris Papailiopoulos\footnote{The results in this paper were proved by GPT-5.6 and
Claude Fable~5, which also drafted the initial manuscript.  The
author posed the problem, directed several rounds of proof
simplification, verified all mathematical arguments,
edited the manuscript, and takes full responsibility for its
content.}\\Microsoft Research \& University of Wisconsin}
\date{}

\begin{document}
\maketitle
\vspace{-1.2cm}

\begin{abstract}
\noindent We prove that exact block recovery in the square Gaussian binary MIMO
model can be achieved in polynomial time at the same first-order SNR
threshold as exhaustive maximum-likelihood detection.  Specifically, for
\[
\boldsymbol y=\sqrt{\rho/N}\,\boldsymbol H\boldsymbol x^\star+\boldsymbol w,
\qquad \boldsymbol x^\star\in\{\pm1\}^N,
\]
and independent standard Gaussian
\(\boldsymbol H\in\mathbb R^{N\times N}\) and \(\boldsymbol w\),
rounded linear MMSE followed by steepest single-bit descent
recovers \(\boldsymbol x^\star\) with failure probability tending to zero, uniformly
over every transmitted word and every \(\rho\ge2\log N\), using
\(O(N^3)\) unit-cost exact-real arithmetic operations.  The model is a
special case of Gaussian random linear estimation, for which AMP state
evolution and replica/MMSE formulas rigorously characterize fixed-parameter
normalized performance.  Those results predict the same \(2\log N\) scale,
but do not by themselves yield an all-coordinate guarantee in the
dimension-dependent regime considered here.  To the best of our knowledge,
no prior work gives polynomial-time exact block recovery at the ML boundary
for this setting; the closest prior square-system theorem, for the box
relaxation, has first-order threshold \(4\log N\).  The proof places the rounded
LMMSE estimate at sublinear Hamming distance from the truth, and then
establishes, uniformly over every error set the local search can
visit, that some wrong bit offers a quantified cost decrease while an
objective barrier confines the search path.  Conversely, if
\(0<\rho\le2\log N-\log\log N-s_N\) with \(s_N\to\infty\) and
\(s_N=o(\log N)\), then a one-bit neighbor beats the transmitted word with probability
tending to one, so even ML detection fails.  Therefore, the
statistical and polynomial-time exact-recovery thresholds coincide
to first order in the stated arithmetic model.
\end{abstract}

\tableofcontents

\section{Introduction}

\paragraph{The question.}
Binary MIMO detection has a simple formulation: recover a vector of
transmitted signs from noisy linear measurements.  Maximum-likelihood (ML)
detection solves the resulting binary least-squares problem.  The same
objective appears in optimum multiuser detection for synchronous CDMA and is
closely related to closest-vector search on a lattice with a noisy planted
point \citep{Verdu1989,TseHanly1999,FinckePohst1985,AgrellEtAl2002}.  For arbitrary
channel matrices and observations, optimum multiuser detection is NP-hard
\citep{Verdu1989}.  Here, however, the channel is Gaussian.  Worst-case
hardness therefore does not answer the average-case question that motivates
this paper:
\begin{center}
\emph{Can a polynomial-time detector recover the entire transmitted word at
the same SNR at which exhaustive ML detection first succeeds?}
\end{center}

This is deliberately an uncoded exact-detection question.  It does not ask
for the capacity of a coded MIMO channel, where coding changes the operational
problem and permits positive rates at constant SNR
\citep{Telatar1999,TseViswanath2005}.  The \(2\log N\) scale studied below is
the threshold for recovering all \(N\) uncoded binary symbols in the stated
model.

\paragraph{What was known about ML.}

Early average-case studies of ML detection include
\citet{HassibiVikaloAsilomar2001,HassibiVikalo2002,HassibiVikalo2005,HassibiVikalo2005PartII},
who derived the expected complexity of the sphere decoder
(Fincke--Pohst enumeration \citep{FinckePohst1985}), averaged over the
channel and noise, and reported polynomial behavior over broad ranges of practically relevant dimensions
and SNRs.
However, \citet{JaldenOttersten2005} showed that this behavior does not persist
in the growing-dimension limit: at every fixed SNR, the expected
complexity is exponential in the dimension.  A key geometric reason is
that, to contain the transmitted point with nonvanishing probability,
the search sphere must have squared radius proportional to the
dimension.

The first-order SNR scaling for successful ML block recovery was identified
by \citet{HansenEtAl2009}, and subsequent work
\citep[Lemma~IV.2]{HassibiEtAl2014} states the condition explicitly as
\[
\rho> 2\log N+f(N),
\qquad f(N)\longrightarrow\infty,
\]
under which the ML block-error probability vanishes.  At
the threshold scale \(\rho=\Theta(\log N)\), the finite-dimensional
lower bound underlying Eq.~(29) of \citet{JaldenOttersten2005}
specializes in the present normalization to
\[
 C(N)\ge 2^{N/(4\rho+2)}-1
      =\exp\{\Theta(N/\log N)\},
\]
which is superpolynomial but subexponential in \(N\).  Under only
\(\rho=O(\log N)\), this lower bound is at least
\(\exp\{\Omega(N/\log N)\}\) and can be exponential when \(\rho=O(1)\).
These works established an ML achievability result but did not prove a
matching converse.  A direct calculation with the \(N\) one-bit neighbors
predicts the same first-order constant.  Let
\(Q(t)=\Pr\{Z\ge t\}\), where \(Z\sim\mathcal N(0,1)\).  A given neighbor
then beats the transmitted word with probability
asymptotic to \(Q(\sqrt{\rho})\) in the relevant regime.  The expected
number of such neighbors is therefore approximately
\(NQ(\sqrt{\rho})\), and the transition
\(NQ(\sqrt{\rho})\asymp1\) occurs at
\[
\rho=2\log N-\log\log N+O(1).
\]
Our constructive theorem succeeds uniformly for every
\(\rho\ge2\log N\), including the boundary itself.  First-order ML
achievability was already known from the works just cited; the new point is
that an explicit polynomial-time detector reaches the same scale.
\paragraph{Polynomial-time detectors and relaxations.}
Among convex relaxations, semidefinite and box relaxations provide the
closest prior comparisons.  For semidefinite relaxation,
\citet{KisialiouLuo2010,So2010} establish probabilistic approximation
guarantees for the objective value, \citet{JaldenOttersten2008} show
that the relaxation attains the maximum-likelihood diversity order in
the fixed-dimensional high-SNR limit, and
\citet{LuLiuZhangZhang2019,JiangLiu2021} give explicit deterministic
tightness conditions.  None of these yields exact block recovery at
any logarithmic SNR scale in the square real Gaussian model.  For the
box relaxation, \citet{ThrampoulidisXuHassibi2018} derive precise
large-system symbol-error asymptotics at fixed SNR, and, extending the
analysis to the vanishing-noise regime, \citet{HuLu2020} prove a
Poisson limit for the number of errors and a sharp block-recovery
transition: in the normalization used here, their square-system
first-order threshold is \(4\log N\), with refined window
\(4\log N-2\log\log N+O(1)\), a factor of two above maximum likelihood
at first order.
To our knowledge, this is the only previous theorem establishing
polynomial-time exact block recovery at logarithmic SNR in this
model.

The algorithm analyzed here is a steepest-selection variant of
likelihood-ascent search (LAS).  One-symbol LAS iteratively accepts
likelihood-improving bit changes and has long been studied as a
low-complexity approximation to ML detection
\citep{Sun1998,VardhanEtAl2008}.  Most directly,
\citet{MohammedChockalingamRajan2008} analyze one-symbol LAS in the square
large-system limit, allowing MMSE initialization and using it in their
simulations, and state asymptotic agreement with ML error performance.
Linear-detector initialization, including MMSE, is also studied empirically
for sequential LAS by \citet{SilvaMarinelloAbrao2018}.
More precisely, Theorem~1 of \citet{MohammedChockalingamRajan2008} is
pointwise for a fixed candidate \(d\).  Their Theorem~2 then substitutes the
data-dependent LAS output \(d_{\mathrm{LAS}}\) without a simultaneous bound
over the possible outputs.  Such simultaneous control is needed for that
adaptive step, and it is supplied here by the uniform landscape event.
Their fixed-parameter analysis also does not establish the growing-SNR
block-recovery statement, confinement, or a high-probability iteration
bound proved here.

Another direct square-system comparison is the projected-Jacobi detector of
\citet{LiuMaTafazolli2023}, whose title-level claim is maximum-likelihood
performance with square-order complexity.  Its analysis concerns fixed-SNR
pairwise and symbol error, explicitly sets aside the iteration dynamics, and
uses a one-error input/output simplification in its key derivation.  It does
not give a total-iteration bound or a worst-word exact-block guarantee in the
triangular regime \(\rho_N\asymp\log N\).

Several rigorous algorithmic results are close in a different direction.
The generalized power method of \citet{LiuYueSoMa2017} has a finite-iteration
exact-recovery guarantee, but its random-channel theorem requires
\(m/n\ge20/\beta^2\) with \(\beta<1/4\), hence a strongly overdetermined
rather than square system.  There are also exact polynomial-time ML results
under structural restrictions that are absent here.
\citet{PapailiopoulosKarystinos2010} give a polynomial-time exact ML detector
for noncoherent OSTBC sequence detection when the relevant channel-covariance
rank is fixed as the sequence length grows.  The zonotope algorithm of
\citet{KhanSlock2003} computes exact ML in polynomial time when the receive
rank is fixed.  Neither result remains polynomial when the rank controlling
the enumeration grows with the square dimension \(N\).

\paragraph{Relation to AMP and statistical physics.}
Message-passing and statistical-physics analyses give a rigorous and
important account of a different asymptotic question.  After dividing the
observation by \(\sqrt{\rho}\), and placing an i.i.d. Rademacher prior on the
coordinates of \(\boldsymbol x^\star\), the present model is exactly the
scalar \(B=1\), square \(\alpha=1\) Gaussian random-linear-estimation model
with noise variance \(\Delta=1/\rho\).  AMP was introduced for Gaussian
compressed sensing by \citet{DonohoMalekiMontanari2009}; the state-evolution
theorems of \citet{BayatiMontanari2011,JavanmardMontanari2013} rigorously
track empirical observables of its iterates when the iteration index and
model parameters are held fixed as the dimensions tend to infinity.
Replica calculations predicted the corresponding Bayes-optimal formulas for
CDMA and Gaussian linear estimation \citep{Tanaka2002,GuoVerdu2005}.
For scalar discrete priors, the replica-symmetric mutual-information and
MMSE formulas, together with AMP optimality outside the hard phase, were
subsequently established rigorously under the at-most-three-stationary-points
condition on the replica-symmetric potential stated by
\citet{BarbierMacrisDiaKrzakala2020}; the Rademacher model here falls under
that conclusion when the stated condition holds.  Related
results for Gaussian linear and generalized-linear models appear in
\citet{ReevesPfister2019,BarbierEtAl2019}.

For the low-noise fixed point relevant here, state evolution describes the
empirical law of a typical coordinate through an effective scalar Gaussian
channel and predicts a hard-decision error of order \(Q(\sqrt{\rho})\).  The
heuristic block calculation
\(NQ(\sqrt{\rho})\asymp1\) therefore already predicts both the leading
\(2\log N\) scale and its \(-\log\log N\) correction.  This is genuine prior
understanding, not merely a formal analogy: the scalar-channel
state-evolution picture predicts why the \(2\log N\) scale should appear.
For a recent discussion of this prediction and its relationship to the
present block-recovery question, see \citet{Krzakala2026MIMOAMP}.  What those results do
not automatically provide is the theorem proved here.  Fixed-parameter
normalized error rates cannot distinguish zero errors from one error or from
\(o(N)\) errors.  One direct AMP route based on a coordinatewise union bound
would need triangular-array control for \(\rho_N\asymp\log N\), simultaneous
\(o(1/N)\)-scale coordinate tails, and control of any iteration count that
grows with \(N\); a different route would need comparably strong
all-coordinate control.  MIMO
specializations such as LAMA rigorously characterize large-system symbol-error
rates and identify fixed-parameter regimes of individually optimal detection
under additional fixed-point conditions \citep{JeonEtAl2018}; they do not
state this all-coordinate growing-SNR guarantee.

The CLuP line of \citet{Stojnic2019,Stojnic2019Complexity} analyzes
fixed-SNR bit-error performance and individual iterations through
random-duality calculations and reports a small empirical iteration count.
It does not establish exact block recovery at scaling SNR or a polynomial
bound on the total number of iterations required for the present guarantee.
Recent leave-one-out theory gives nonasymptotic coordinatewise
representations for AMP \citep{BaoHanXu2025} and may support a different
proof route, but it does not state the present block theorem.  Such a
specialization would still have to handle the triangular growing-SNR
regime, a possibly growing iteration count, leave-column variance
calibration, and simultaneous Gaussian tails at the \(1/N\) scale.  We
are not aware of a published specialization carrying out these steps at
\(\rho=2\log N\).  MCMC analyses show that, at a suitably
chosen temperature, the optimum receives inverse-polynomial (rather
than exponentially small) stationary mass on the $2\log N$ SNR scale, but
polynomial mixing for square Gaussian instances remains open
\citep{HassibiDimakisPapailiopoulos2010,HassibiEtAl2014}.

\paragraph{Our contributions.}
Our main result establishes that there is no first-order
computational--statistical gap for exact recovery in the square
Gaussian model.  Rounded linear MMSE followed by a steepest-selection
variant of one-bit likelihood-ascent search
\citep{Sun1998,VardhanEtAl2008} recovers the
transmitted word throughout the regime $\rho\ge2\log N$, including
the boundary $\rho=2\log N$ itself.  More precisely, writing
$\widehat{\boldsymbol{x}}_N$ for the detector output,
\[
\lim_{N\to\infty}
\sup_{\substack{
\boldsymbol{x}^{\star}\in\{\pm1\}^N\\
\rho\ge 2\log N}}
\Pr_{\boldsymbol H,\boldsymbol w}\!\left\{
\widehat{\boldsymbol{x}}_N
\neq
\boldsymbol{x}^{\star}
\right\}
=0.
\]
The detector accepts at most \(O(N\log N)\) bit flips and uses \(O(N^3)\)
unit-cost exact-real arithmetic operations, with the dense LMMSE solve
dominating the complexity.

The detector itself is classical.  Closely related LMMSE-initialized
one-symbol search procedures have been studied through asymptotic,
symbol-error, and local-optimality analyses
\citep{MohammedChockalingamRajan2008,Sun2009}.  Its two-stage
architecture---first obtain a partial estimate, then refine it through
local corrections---is also familiar from exact community recovery and
low-rank matrix completion
\citep{AbbeBandeiraHall2016,KeshavanMontanariOh2010}.  The contribution
here is the threshold-scale uniform analysis of that architecture for the
square Gaussian MIMO objective.
Fixed-candidate and empirical symbol-error arguments do not control the
data-dependent sequence of points visited by a local-search algorithm.
We instead prove geometric bounds that hold simultaneously for every
error set in a sufficiently large Hamming ball.  These uniform bounds
show that every nontruth point in the ball admits a decreasing
bit flip and that the decreasing-cost trajectory cannot leave
the ball.  They therefore apply directly to the local search path,
without conditioning on that path, and yield an explicit iteration
bound at \(\rho=2\log N\).

We complement the constructive result with an ML converse.  Let
\(s_N\to\infty\) with \(s_N=o(\log N)\).  Uniformly over deterministic
transmitted words and
\[
0<\rho\le
2\log N-\log\log N-s_N,
\]
a one-bit neighbor has strictly smaller least-squares cost than the
transmitted word with probability tending to one.  Consequently, even
ML detection fails in this regime.  The achievability
and converse boundaries differ by only \(o(\log N)\), establishing the
sharp first-order threshold \(2\log N\).  A standard uniform-prior MAP
argument, deferred to Appendix~\ref{app:decision}, upgrades this ML converse
to a minimax lower bound for arbitrary detectors.

\paragraph{Organization.}
The proof is presented in dependency order.  We first establish that
rounded LMMSE produces an estimate at sublinear Hamming distance from
the transmitted word.  We then derive the exact identities governing
a one-bit flip and prove two uniform geometric estimates: a coarse
estimate for the outer portion of the Hamming ball and a sharper
estimate near the truth.  Once these probabilistic events have been
established, the remainder of the recovery argument is deterministic:
the objective decreases, the trajectory cannot cross the boundary of
the controlled ball, and every nontruth point offers an admissible
decreasing move.  We prove the ML converse after completing the
constructive argument.  Each lemma is introduced by the question it
answers and by its role in the subsequent step.
Appendix~\ref{app:adjacent} collects exact non-MIMO formulations of the
model and comparisons with the closest results in adjacent literatures.

\section{System Setting and Main Result}\label{sec:model-result}

Let
\begin{equation}\label{eq:model}
 \boldsymbol y
 =
 \sqrt{\frac{\rho}{N}}\,
 \boldsymbol H\boldsymbol x^\star+\boldsymbol w,
 \qquad
 \boldsymbol x^\star\in\{\pm1\}^N,
\end{equation}
where \(\boldsymbol H\in\mathbb R^{N\times N}\) has independent
\(\mathcal N(0,1)\) entries and
\(\boldsymbol w\sim\mathcal N(\zero,\boldsymbol I_N)\) is independent of
\(\boldsymbol H\).  The transmitted vector
\(\boldsymbol x^\star\) is deterministic and arbitrary.  The receiver
knows \((\boldsymbol y,\boldsymbol H,\rho)\).  With this normalization,
\(\rho\) is the average received SNR per observation.

Throughout, \(N\in\{2,3,\ldots\}\) and \(\rho>0\); every asymptotic
statement concerns \(N\to\infty\).  All logarithms are natural, and all probabilities refer only to the
randomness of \((\boldsymbol H,\boldsymbol w)\), unless explicitly stated
otherwise.  Complexity is measured
in a unit-cost exact-real arithmetic model: additions, subtractions,
multiplications, divisions, comparisons, and evaluation of the known
scale \(\sqrt{\rho/N}\) are exact unit-cost operations.  We do not claim
a Turing bit-complexity or finite-precision bound.

The least-squares objective is
\begin{equation}\label{eq:objective}
 f(\boldsymbol x)
 =
 \left\|
 \boldsymbol y-
 \sqrt{\frac{\rho}{N}}\,
 \boldsymbol H\boldsymbol x
 \right\|_2^2,
 \qquad
 \boldsymbol x\in\{\pm1\}^N.
\end{equation}
An ML detector returns any minimizer
\begin{equation}\label{eq:ml-detector}
 \widehat{\boldsymbol x}_{\mathrm{ML}}
 \in
 \arg\min_{\boldsymbol x\in\{\pm1\}^N}
 f(\boldsymbol x).
\end{equation}

The polynomial-time detector analyzed in this paper has two stages.

\begin{keybox}
\textbf{Stage 1: rounded LMMSE.}
Compute
\begin{equation}\label{eq:detector-lmmse}
 \boldsymbol b
 =
 \sqrt{\frac{N}{\rho}}
 \left(
 \boldsymbol H^{\mathsf T}\boldsymbol H
 +\frac{N}{\rho}\boldsymbol I_N
 \right)^{-1}
 \boldsymbol H^{\mathsf T}\boldsymbol y,
 \qquad
 \boldsymbol x^{\mathrm{init}}
 =
 \sign(\boldsymbol b).
\end{equation}
A zero coordinate is rounded to \(+1\).  Such a tie occurs with
probability zero under the Gaussian model, so the convention does not
affect the theorem.  The positive prefactor \(\sqrt{N/\rho}\) does not
affect the rounded vector, but it is retained so that
\(\boldsymbol b\) is the correctly normalized unit-covariance LMMSE
estimate under an i.i.d. Rademacher prior.  In the deterministic-word
formulation of the theorem, it is equivalently an LMMSE-form ridge estimate.

\medskip
\textbf{Stage 2: steepest single-bit descent with a deterministic flip limit.}
For the current vector \(\boldsymbol x\), let
\(\boldsymbol x^{(i)}\) denote the vector obtained by flipping
coordinate \(i\), and define its objective decrease by
\[
 \Delta_i(\boldsymbol x)
 =
 f(\boldsymbol x)-f(\boldsymbol x^{(i)}).
\]
If
\[
 \max_{1\le i\le N}\Delta_i(\boldsymbol x)<\frac1N,
\]
return the current vector.  Otherwise, flip the smallest index attaining
the largest gain.  Define the deterministic flip limit
\[
 L_N=\min\{\ell\in\mathbb N:\ell\ge1,\ 2^\ell\ge N\},
 \qquad M_N=4NL_N.
\]
Thus \(L_N=\max\{1,\lceil\log_2N\rceil\}\), and it can be computed by
repeated integer doubling and comparison within the stated arithmetic
model.
After accepting the \(M_N\)-th flip, return the resulting vector
unconditionally.  Denote the vector returned in either case by
\(\widehat{\boldsymbol x}\).
\end{keybox}

Once
\(\boldsymbol H^{\mathsf T}\boldsymbol H\) and the initial gains have
been computed, all \(N\) gains can be updated in \(O(N)\) operations
after each accepted flip.  Since the flip limit is \(O(N\log N)\), the descent
stage costs \(O(N^2\log N)\) operations.  Forming and solving the dense
LMMSE system costs \(O(N^3)\) operations and therefore still dominates
the total complexity.

\Needspace{8\baselineskip}
\begin{theorem}[Sharp first-order exact-recovery threshold]
\label{thm:main}
The following two statements hold.
\begin{enumerate}[label=\textup{(\alph*)}]

\item The detector above obeys the uniform error-probability guarantee
\begin{equation}\label{eq:uniform-risk}
 \lim_{N\to\infty}
 \sup_{\substack{
       \boldsymbol x^\star\in\{\pm1\}^N\\
       \rho\ge2\log N}}
 \Pp_{\boldsymbol H,\boldsymbol w}
 \!\left\{
 \widehat{\boldsymbol x}\ne\boldsymbol x^\star
 \right\}
 =0.
\end{equation}
Its worst-case cost is \(O(N^3)\) unit-cost exact-real arithmetic
operations.

\item Let \(s_N\to\infty\) with \(s_N=o(\log N)\).  Then
\begin{equation}\label{eq:uniform-converse-risk}
 \lim_{N\to\infty}
 \sup_{\substack{
       \boldsymbol x^\star\in\{\pm1\}^N\\
       0<\rho\le
       2\log N-\log\log N-s_N}}
 \Pp_{\boldsymbol H,\boldsymbol w}
 \!\left\{
 \boldsymbol x^\star
 \text{ is an ML minimizer of }f
 \right\}
 =0.
\end{equation}
Consequently, the block-error probability of every ML detector tends
to one uniformly throughout this lower regime.

\end{enumerate}
\end{theorem}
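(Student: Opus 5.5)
By sign symmetry, replacing \(\boldsymbol H\) by \(\boldsymbol H\,\mathrm{diag}(\boldsymbol x^\star)\) preserves the Gaussian law of \(\boldsymbol H\), and the detector is equivariant under this change. So I may take \(\boldsymbol x^\star=\one\), which removes the supremum over words. For a candidate \(\boldsymbol x\), write \(S\) for its error set and \(\boldsymbol h_i\) for the \(i\)-th column of \(\boldsymbol H\), and set \(a=\sqrt{\rho/N}\). Then \(\boldsymbol y-a\boldsymbol H\boldsymbol x=\boldsymbol w+2a\sum_{j\in S}\boldsymbol h_j\), which gives the exact identity
\[
f(\boldsymbol x)-f(\boldsymbol x^\star)=4a^2\Bigl\|\sum_{j\in S}\boldsymbol h_j\Bigr\|^2+4a\Bigl\langle \boldsymbol w,\sum_{j\in S}\boldsymbol h_j\Bigr\rangle .
\]
A similar identity holds for the one-flip gain. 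Flipping \(i\in S\) gives
\[
\Delta_i=4a^2\|\boldsymbol h_i\|^2\Bigl(\tfrac{2\langle \boldsymbol h_i,\boldsymbol r_S\rangle}{a\|\boldsymbol h_i\|^2}-1\Bigr),
\qquad
\boldsymbol r_S=\boldsymbol w+2a\textstyle\sum_{j\in S}\boldsymbol h_j .
\]
This decomposes into a main term \(\approx 4\rho\), noise \(4a\langle \boldsymbol h_i,\boldsymbol w\rangle\approx 4\sqrt{\rho}\,Z\), and cross terms from \(j\in S\setminus\{i\}\).

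\textbf{Part (a).} The plan has three steps.

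\emph{Step 1: initial estimate.} I will show that \(|S_{\mathrm{init}}|\le k_N=o(N)\) with high probability, say \(k_N=N/\log N\) or \(N e^{-c\sqrt{\rho}}\). The ridge estimate satisfies \(\boldsymbol b-\one=-(\boldsymbol H^{\mathsf T}\boldsymbol H+\tfrac N\rho \boldsymbol I)^{-1}\bigl(\tfrac N\rho\one-\sqrt{N/\rho}\,\boldsymbol H^{\mathsf T}\boldsymbol w\bigr)\). I will bound \(\E\|\boldsymbol b-\one\|^2\) through the Marchenko--Pastur behaviour of the resolvent. One point needs care: the smallest singular value of a square Gaussian matrix is of order \(N^{-1/2}\), and the regularization \(N/\rho\) is what controls it. With this bound, Markov's inequality yields \(\#\{i:b_i\le0\}\le\|\boldsymbol b-\one\|^2=O(N/\sqrt{\rho})\) or better. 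Any sublinear bound suffices here.

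\emph{Step 2: uniform landscape events over the ball \(|S|\le k_N\).} I will prove two events, each simultaneously for all such \(S\).
\begin{itemize}
\item[(i)] \emph{Coarse estimate, for \(\log N\ll|S|\le k_N\).} The sum \(\sum_{i\in S}\Delta_i\) is at least \(c\rho|S|\). This follows from restricted-isometry-type bounds: \(\|\sum_{j\in S}\boldsymbol h_j\|^2\approx N|S|\) and \(\max_S\|\boldsymbol H_S^{\mathsf T}\boldsymbol w\|\), via a union bound over \(\binom{N}{s}\) sets with \(\chi^2\) tails. Hence the best flip gains at least \(c\rho\ge1/N\).
\item[(ii)] \emph{Sharp estimate, for small \(|S|\).} For every \(i\), \(4a\langle \boldsymbol h_i,\boldsymbol w\rangle\ge -(4-\epsilon)\rho\) fails only with probability about \(NQ(\sqrt{\rho})\cdot\mathrm{poly}\to0\) once \(\rho\ge 2\log N\). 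More precisely, \(\min_i\langle \boldsymbol h_i,\boldsymbol w\rangle/\|\boldsymbol w\|\ge-\sqrt{2\log N}(1+o(1))\). This makes the noise term at most about \(2\sqrt{2\log N}\sqrt\rho\le 2\rho\), which leaves roughly \(2\rho\) of margin. The cross terms \(\langle\boldsymbol h_i,\boldsymbol h_j\rangle\sim\sqrt N\) are controlled uniformly at \(O(\sqrt{N\log N})\). Their total contribution is \(O(|S|a^2\sqrt{N\log N})=O(\rho|S|\sqrt{\log N/N})\), which is negligible for \(|S|\ll\sqrt{N/\log N}\). Averaging over \(i\in S\) then shows that some wrong bit has gain \(\ge c\rho\). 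I will also record the barrier property: \(f(\boldsymbol x)-f(\boldsymbol x^\star)\ge c\rho|S|\) for all \(|S|\le k_N\).
\end{itemize}
I expect the main obstacle to be exactly the boundary \(\rho=2\log N\). There the margin in (ii) is only a constant fraction of \(\rho\), so the union over \(i\), the cross terms, and the transition between regimes (i) and (ii) must all be quantified without losing a constant factor. I would first try a Gaussian conditioning argument: condition on \(\boldsymbol w\), so that \(\boldsymbol H^{\mathsf T}\boldsymbol w/\|\boldsymbol w\|\) is exactly i.i.d. \(\mathcal N(0,1)\). Any residual gap would be closed by the coarse bound with an intermediate scale \(|S|\in[\log^2N,\sqrt N]\).

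\emph{Step 3: deterministic finish.} The initial cost gap is at most about \(C\rho k_N\), and each accepted flip lowers \(f\). So if the trajectory were to cross the sphere of radius \(k_N\), it would need a point with gap \(\ge c\rho k_N'\) exceeding the initial gap, which the barrier inequality rules out when I take \(k_N'=Ck_N\). Inside the ball, every non-truth point admits a flip with gain \(\ge1/N\), so the algorithm stops only at \(\boldsymbol x^\star\). The truth itself has all \(\Delta_i<0\), by (ii) with \(S=\emptyset\). Each flip decreases the gap by at least \(c\rho\) until small \(|S|\). More precisely, the potential drops by a constant factor every \(O(N)\) flips, since the maximal gain is at least an average gain \(\gtrsim(\text{gap})/|S|\). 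This gives at most \(O(N\log N)\le M_N\) flips. The complexity count was already given after the algorithm.

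\textbf{Part (b).} Again take \(\boldsymbol x^\star=\one\) and \(S=\{i\}\). Then \(f(\boldsymbol x^{(i)})<f(\boldsymbol x^\star)\) if and only if \(\langle\boldsymbol h_i,\boldsymbol w\rangle<-a\|\boldsymbol h_i\|^2\). Conditionally on \(\boldsymbol w\), the variables \(Z_i=\langle\boldsymbol h_i,\boldsymbol w\rangle/\|\boldsymbol w\|\) are i.i.d. standard Gaussians. Moreover \(\|\boldsymbol h_i\|^2/N\) and \(\|\boldsymbol w\|^2/N\) equal \(1\pm O(\sqrt{\log N/N})\) uniformly, although \(Z_i\) and \(\boldsymbol h_i\) are dependent. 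To handle this, I decompose \(\boldsymbol h_i=Z_i\boldsymbol w/\|\boldsymbol w\|+\boldsymbol h_i^\perp\), so that the event becomes \(Z_i<-\sqrt\rho(1+o(1/\log N))\), with independence across \(i\). The failure probability is then \((1-Q(\sqrt\rho(1+\delta)))^N\le\exp(-NQ(\cdot))\). Under \(\rho\le2\log N-\log\log N-s_N\), we get \(NQ(\sqrt\rho)\asymp e^{s_N/2}\to\infty\), and the perturbation \(\delta\rho=o(1)\) preserves this. Small \(\rho\) is handled by monotonicity. Hence \(\boldsymbol x^\star\) is not an ML minimizer with probability tending to one.
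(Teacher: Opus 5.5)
\textbf{Part (b).} This part is sound and follows the paper's route: condition on $\boldsymbol w$, use i.i.d.\ projections and $(1-p)^N\le e^{-Np}$, evaluate at $\rho_0$, and use monotonicity in $\rho$. Your decomposition $\boldsymbol h_i=Z_i\boldsymbol w/\|\boldsymbol w\|_2+\boldsymbol h_i^\perp$ is a clean alternative to the paper's subtraction of the long-column event.

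\textbf{The gap: Step~2 of part (a).} The gap sits exactly at the boundary you flag, and it comes from a miscounted margin.

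Correcting the lone wrong bit of $S=\{i\}$ gains
$4T_i(\rho)=\frac{4\rho}{N}\|\boldsymbol h_i\|_2^2+4\sqrt{\rho/N}\,\boldsymbol h_i^{\mathsf T}\boldsymbol w\approx4\sqrt\rho\,(\sqrt\rho+Z_i)$.
At $\rho=2\log N$ the noise term therefore reaches $4\sqrt\rho\min_iZ_i\approx-4\sqrt\rho\sqrt{2\log N}\approx-4\rho$, not $-2\rho$. The true margin $4\sqrt\rho(\sqrt\rho+\min_iZ_i)$ is only of order $\log\log N$. Your own part (b) shows that it turns negative once $\rho\le2\log N-\log\log N-s_N$.

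This has several consequences.
\begin{itemize}
\item The bound ``$4a\langle\boldsymbol h_i,\boldsymbol w\rangle\ge-(4-\epsilon)\rho$ for all $i$'' fails with high probability for every fixed $\epsilon>0$: the expected number of violations is of order $N^{\epsilon/2-\epsilon^2/16}/\sqrt{\log N}\to\infty$.
\item The bound $\min_iZ_i\ge-\sqrt{2\log N}(1+o(1))$ does not even give a positive margin.
\item The claims ``some wrong bit gains $\ge c\rho$'' and ``$f(\boldsymbol x)-f(\boldsymbol x^\star)\ge c\rho|S|$ on the whole ball'' are false for small $|S|$.
\item The coarse regime (i) also fails where you place it. If $S$ collects the $k$ indices with the most negative $Z_i$, then $\sum_{i\in S}Z_i\approx-k\sqrt{2\log(N/k)}$. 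For $k=N^{o(1)}$ (e.g.\ $k=\log^2N$) this cancels the signal $4\rho k$ to first order and leaves a summed gain of order $k\log\log N=o(\rho k)$.
\end{itemize}
A constant-fraction bound holds only for $|S|\ge N^{c}$. So the range $[\log^2N,\sqrt N]$ cannot be closed by the coarse bound. Your cross-term bound $O(\rho|S|\sqrt{\log N/N})$ must be compared against an $o(\rho)$ margin, not against $2\rho$. Step~3 inherits the problem, because its barrier and per-step decrease are exactly these $c\rho|S|$ and $c\rho$ bounds.

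\textbf{What is missing.} You need a uniform estimate with \emph{vanishing} relative slack, taken set-by-set rather than coordinate-by-coordinate. The paper proves, simultaneously for all $1\le|S|\le\lceil N/(\log N)^3\rceil$:
\begin{itemize}
\item $\|\sum_{i\in S}\boldsymbol h_i\|_2^2=(1\pm(\log N)^{-5/4})N|S|$;
\item $|\boldsymbol w^{\mathsf T}\sum_{i\in S}\boldsymbol h_i|\le(1-4(\log N)^{-5/4})\sqrt{\rho N}\,|S|$.
\end{itemize}
The second bound comes from $\binom Nk\le N^k/k!$ cancelling the Gaussian tail $N^{-k(1-o(1))}/\sqrt{k\log N}$ exactly, which leaves $\sum_k2^k/(k!\sqrt{k\log N})\to0$. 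Substituted into the summed-gain identity, this gives a wrong bit with gain $\ge4\rho/(\log N)^{5/4}$ (far below $\rho$, but far above $1/N$) and excess cost $\ge12\rho|S|/(\log N)^{5/4}$. The constant-factor region begins only at $N/(\log N)^3$. Confinement and the count $T<4N\log N<M_N$ are then run with these small gains.

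Your per-coordinate route can be rescued only with the second-order term: with high probability $\min_iZ_i\ge-\sqrt{2\log N}+\log\log N/(4\sqrt{2\log N})$. This gives a margin of order $|S|\log\log N$, which absorbs pairwise cross terms only for $|S|\lesssim\sqrt N\log\log N/(\log N)^{3/2}$. The coarse union bound would then have to start at a polynomial scale such as $N^{1/3}$.

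\textbf{Two smaller points.}
\begin{itemize}
\item The factor $2$ inside your $\Delta_i$ formula is spurious, since $\Delta_i=4a\langle\boldsymbol h_i,\boldsymbol r_S\rangle-4a^2\|\boldsymbol h_i\|_2^2$.
\item In Step~1, Marchenko--Pastur controls $\tr\boldsymbol B_\rho$ but not the bias term $\one^{\mathsf T}\boldsymbol B_\rho^2\one$. For that you need right-orthogonal invariance of $\boldsymbol H$, which gives $\E\,\one^{\mathsf T}\boldsymbol B_\rho^2\one=\E\tr\boldsymbol B_\rho^2$. Also, fixed-test-function Marchenko--Pastur yields only $o(N)$ at $\rho_N=2\log N$, not $O(N/\sqrt\rho)$; that is still enough for your purposes.
\end{itemize}
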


For all sufficiently large \(N\), the upper endpoint in
\eqref{eq:uniform-converse-risk} is positive, so its parameter set is
nonempty.

The supremum in \eqref{eq:uniform-risk} concerns the worst error
probability over the displayed deterministic parameters.  It does not
assert that one realization of
\((\boldsymbol H,\boldsymbol w)\) succeeds simultaneously for every
transmitted word and every SNR.  Likewise,
\eqref{eq:uniform-converse-risk} is a uniform probability statement,
not a simultaneous assertion over all words and SNR values on one
realization.

The gap between the two proven boundaries is
\(\log\log N+s_N=o(\log N)\).  The theorem therefore identifies the
common first-order constant \(2\), but does not claim to determine the
complete lower-order transition window.  The decision-theoretic minimax
interpretation is recorded separately in Appendix~\ref{app:decision}.

\section{Roadmap of the achievability proof}

The proof is easiest to read as a sequence of questions.  How close does
rounded LMMSE get to the truth?  What does the objective landscape look like
inside that neighborhood?  Once the landscape bounds hold, why can the greedy
path neither escape, stop incorrectly, nor exhaust its allowed number of
flips before recovery?  The three panels below answer these questions in the
same order as the formal proof.

\subsubsection*{A. Rounded LMMSE comes within sublinear distance of the ground truth}

\begin{center}
\begin{tikzpicture}[
  roadstep/.style={draw=Steel,rounded corners=2pt,fill=Pale,
    text width=12.6cm,align=center,inner sep=7pt,font=\small},
  arr/.style={-{Stealth[length=2.1mm]},line width=.9pt,draw=Navy}
]
\node[roadstep] (mse) {
  \textbf{1. Convert estimation error into one trace.} With
  \(\boldsymbol B_\rho=(\boldsymbol I_N+\frac\rho N
  \boldsymbol H^{\mathsf T}\boldsymbol H)^{-1}\), the exact LMMSE identity is
  \[
    \E_{\boldsymbol H,\boldsymbol w}
    \|\boldsymbol b-\boldsymbol x^\star\|_2^2
    =\E_{\boldsymbol H}\tr(\boldsymbol B_\rho).
  \]
  {\footnotesize \Cref{lem:mse}.}
};
\node[roadstep,below=5mm of mse] (trace) {
  \textbf{2. Bound that trace.} A leave-one-row identity and a direct
  expectation bound give, uniformly for \(\rho\ge2\log N\),
  \[
    \E\tr(\boldsymbol B_\rho)
    \le\frac{2N}{\sqrt{\log N}}.
  \]
  {\footnotesize \Cref{lem:trace-identity,lem:trace-bound}.}
};
\node[roadstep,below=5mm of trace] (warm) {
 \textbf{3. Convert squared error into sign errors.}
Define
\[
\boldsymbol x^{\mathrm{init}}
=\sign(\boldsymbol b),
\qquad
K=
d_{\mathrm H}
(\boldsymbol x^{\mathrm{init}},\boldsymbol x^\star).
\]
If coordinate \(j\) is rounded incorrectly, then
$
|b_j-x_j^\star|\ge 1.
$
Therefore every wrong coordinate contributes at least one to the squared
error, and hence
$
K
\le
\|\boldsymbol b-\boldsymbol x^\star\|_2^2.
$
Consequently, Markov's inequality and the preceding trace bound give
\[
\begin{aligned}
\Pr\!\left\{
K>\left\lceil\frac{N}{(\log N)^{1/4}}\right\rceil
\right\}
&\le
\Pr\!\left\{
\|\boldsymbol b-\boldsymbol x^\star\|_2^2
>
\frac{N}{(\log N)^{1/4}}
\right\}\\
&\le
\frac{
\E_{\boldsymbol H,\boldsymbol w}
\|\boldsymbol b-\boldsymbol x^\star\|_2^2
}{
N/(\log N)^{1/4}
}
\le
\frac{2}{(\log N)^{1/4}}
\longrightarrow 0.
\end{aligned}
\]
  {\footnotesize \Cref{prop:warm}.}
};
\draw[arr] (mse)--(trace);
\draw[arr] (trace)--(warm);
\end{tikzpicture}
\end{center}

\clearpage
\subsubsection*{B. The landscape near the ground truth has favorable geometry}

{\footnotesize
For a current vector \(\boldsymbol x\), set
\(S=\{i:x_i\ne x_i^\star\}\), \(k=|S|\), and
\(\boldsymbol h_i=x_i^\star\boldsymbol H_{:,i}\), using column-sign
symmetry to normalize \(\boldsymbol x^\star=\one\).  This sign is needed
in the noise terms.  The integer endpoints below match the formal proof.
\par}
\vspace{-2mm}
\begin{center}
\begin{tikzpicture}[
  top/.style={draw=Steel,rounded corners=2pt,fill=Pale,
    text width=14.75cm,align=center,inner sep=5pt,font=\footnotesize,
    execute at begin node={\setlength{\abovedisplayskip}{2pt}%
      \setlength{\belowdisplayskip}{2pt}%
      \setlength{\jot}{1pt}}},
  region/.style={draw=Steel,rounded corners=2pt,fill=Pale,
    text width=6.82cm,align=left,inner sep=5pt,font=\footnotesize,
    execute at begin node={\setlength{\abovedisplayskip}{2pt}%
      \setlength{\belowdisplayskip}{2pt}%
      \setlength{\jot}{1pt}}},
  merge/.style={draw=Navy,rounded corners=6pt,fill=Navy!6,
    text width=14.75cm,align=left,inner sep=5pt,font=\footnotesize,
    execute at begin node={\setlength{\abovedisplayskip}{2pt}%
      \setlength{\belowdisplayskip}{2pt}%
      \setlength{\jot}{1pt}}},
  event/.style={draw=Navy,rounded corners=6pt,fill=Navy!9,
    text width=14.75cm,align=center,inner sep=4pt,font=\footnotesize,
    execute at begin node={\setlength{\abovedisplayskip}{1.5pt}%
      \setlength{\belowdisplayskip}{1.5pt}}},
  arr/.style={-{Stealth[length=1.7mm]},line width=.75pt,draw=Navy}
]

\node[top] (algebra) {
  \textbf{Exact flip algebra: the two quantities we need}\\[-1pt]
  \[
  \begin{aligned}
  f(\boldsymbol x)-f(\boldsymbol x^\star)
  &=\frac{4\rho}{N}\left\|\sum_{i\in S}\boldsymbol h_i\right\|_2^2
    +4\sqrt{\frac{\rho}{N}}\,
      \boldsymbol w^{\mathsf T}\sum_{i\in S}\boldsymbol h_i,\\[-1pt]
  \sum_{i\in S}\Delta_i(\boldsymbol x)
  &=4\!\left[\frac{\rho}{N}\!\left(
      2\left\|\sum_{i\in S}\boldsymbol h_i\right\|_2^2
      -\sum_{i\in S}\|\boldsymbol h_i\|_2^2\right)
      +\sqrt{\frac{\rho}{N}}\,
       \boldsymbol w^{\mathsf T}\sum_{i\in S}\boldsymbol h_i\right].
  \end{aligned}
  \]
  The first line is the excess cost.  The second sums the gains from
  correcting all \(k\) wrong bits: if it is at least \(ak\), some wrong
  bit has gain at least \(a\), so the maximal available gain is no smaller;
  if \(a\ge1/N\), the detector accepts such a gain.
  {\footnotesize
  \Cref{lem:objective-gap,lem:total-wrong-gain,lem:average-wrong-gain}.}
};

\node[region,below=3mm of algebra,xshift=-3.65cm] (outer) {
  \textbf{Outer region: coarse bounds suffice}
  \[
    \left\lceil\frac{N}{(\log N)^3}\right\rceil<k
    \le\left\lceil\frac{4N}{(\log N)^{1/4}}\right\rceil.
  \]
  With probability tending to one, simultaneously for every such set,
  \[
  \begin{aligned}
  \frac52\rho k&\le f(\boldsymbol x)-f(\boldsymbol x^\star)
    \le\frac{11}{2}\rho k,\\[-1pt]
  \max_{i\in S}\Delta_i(\boldsymbol x)&\ge\frac\rho2.
  \end{aligned}
  \]
  \textbf{Uniformity mechanism.}  Conditional on the uniform channel
  bounds, fixed-set failure is at most
  \[
    2e^{-k\log N/80}.
  \]
  There are exactly \(\binom Nk\) sets and
  \[
    \log\binom Nk\le k\log\frac{eN}{k}
      \le k(1+3\log\log N).
  \]
  This exponent is \(o(k\log N)\), so the union bound tends to zero.
  {\footnotesize
  \Cref{lem:outer-channel,lem:outer-noise,lem:outer-gain}.}
};

\node[region,below=3mm of algebra,xshift=3.65cm] (inner) {
  \textbf{Inner region: sharp tails are required}
  \[
    1\le k\le\left\lceil\frac{N}{(\log N)^3}\right\rceil.
  \]
  With probability tending to one, simultaneously for every such set,
  \[
  \begin{aligned}
  \frac{12\rho k}{(\log N)^{5/4}}
    &\le f(\boldsymbol x)-f(\boldsymbol x^\star)\le8\rho k,\\[-1pt]
    \max_{i\in S}\Delta_i(\boldsymbol x)
    &\ge\frac{4\rho}{(\log N)^{5/4}}.
  \end{aligned}
  \]
  \textbf{Uniformity mechanism.}  Conditional on the sharper channel
  bounds, fixed-set failure is at most
  \[
    \frac{2}{\sqrt{k\log N}}\,
    N^{-k(1-9/(\log N)^{5/4})}.
  \]
  Since \(\binom Nk\le N^k/k!\), the leading powers cancel; the residual
  correction is at most \(2^k\), and the full union bound is at most
  \[
  \frac{2}{\sqrt{\log N}}
  \sum_{k=1}^{\lceil N/(\log N)^3\rceil}
    \frac{2^k}{k!\sqrt{k}}
  \le\frac{2(e^2-1)}{\sqrt{\log N}}\longrightarrow0.
  \]
  {\footnotesize
  \Cref{lem:inner-channel,lem:inner-noise,lem:inner-gain}.}
};

\draw[arr] (algebra.south) -- (outer.north);
\draw[arr] (algebra.south) -- (inner.north);

\node[merge,below=3mm of inner.south,xshift=-3.65cm] (combined) {
  \centering\textbf{What the two regions give together}\par
  \raggedright
  The ranges meet at \(\lceil N/(\log N)^3\rceil\), so every nontruth vector with
  \(1\le d_{\mathrm H}(\boldsymbol x,\boldsymbol x^\star)
  \le\lceil4N/(\log N)^{1/4}\rceil\) is covered uniformly before the algorithm runs.
  Consequently,
  \[
  \begin{array}{@{}ll@{}}
  \text{decreasing move:}&
  \exists i\in S:\quad
  \Delta_i(\boldsymbol x)\ge\dfrac{4\rho}{(\log N)^{5/4}},\\[3pt]
  \text{outer barrier:}&
  d_{\mathrm H}(\boldsymbol x,\boldsymbol x^\star)
  =\left\lceil\dfrac{4N}{(\log N)^{1/4}}\right\rceil
  \Rightarrow
  f(\boldsymbol x)-f(\boldsymbol x^\star)
  \ge\dfrac52\rho
  \left\lceil\dfrac{4N}{(\log N)^{1/4}}\right\rceil.
  \end{array}
  \]
  Thus there is no wrong stopping point in the ball; a decreasing path that
  starts at most at
  \(8\rho\lceil N/(\log N)^{1/4}\rceil\) cannot cross its boundary.
  The statements are uniform, so they apply along the adaptive path.
  {\footnotesize
  \Cref{lem:outer-gain,lem:inner-gain}; \Cref{prop:descent}.}
};

\draw[arr] (outer.south) |- ([xshift=-3.5cm]combined.north);
\draw[arr] (inner.south) |- ([xshift=3.5cm]combined.north);

\node[event,below=2.5mm of combined] (joint) {
  \textbf{Warm start \(\cap\) landscape.}\;
  \(\Pr\{\text{both hold}\}\ge
  1-\Pr\{\text{warm start fails}\}
  -\Pr\{\text{landscape fails}\}\longrightarrow1.\)
  No independence is required.
};

\draw[arr] (combined)--(joint);

\end{tikzpicture}
\end{center}
\clearpage
\subsubsection*{C. A decreasing objective forces exact recovery}

Fix a realization on which the warm-start and landscape events both hold.
The remainder of the argument is deterministic.

\begin{center}
\begin{tikzpicture}[
  scale=.85,transform shape,
  fact/.style={draw=Steel,rounded corners=2pt,fill=Pale,
    text width=6.25cm,align=left,inner sep=6pt,font=\footnotesize,
    execute at begin node={\setlength{\abovedisplayskip}{4pt}%
      \setlength{\belowdisplayskip}{4pt}}},
  roadwide/.style={draw=Steel,rounded corners=2pt,fill=Pale,
    text width=13.25cm,align=left,inner sep=6pt,font=\footnotesize,
    execute at begin node={\setlength{\abovedisplayskip}{4pt}%
      \setlength{\belowdisplayskip}{4pt}}},
  merge/.style={draw=Navy,rounded corners=8pt,fill=Navy!6,
    text width=11.4cm,align=center,inner sep=6pt,font=\footnotesize,
    execute at begin node={\setlength{\abovedisplayskip}{4pt}%
      \setlength{\belowdisplayskip}{4pt}}},
  result/.style={draw=Navy,rounded corners=2pt,fill=Navy!7,
    text width=13.25cm,align=center,inner sep=6pt,font=\small},
  arr/.style={-{Stealth[length=2.1mm]},line width=.9pt,draw=Navy}
]

\node[fact] (startcost) {
  \textbf{1a. The path starts below objective level \(8\)}\\[4pt]
  Rounded LMMSE produces at most
  \(\lceil N/(\log N)^{1/4}\rceil\) wrong signs.  The
  applicable upper landscape bound therefore gives
  \[
    f(\boldsymbol x^{\mathrm{init}})-f(\boldsymbol x^\star)
    \le8\rho\left\lceil\frac{N}{(\log N)^{1/4}}\right\rceil.
  \]
  {\footnotesize
   \Cref{prop:warm,lem:inner-gain,lem:outer-gain}.}
};

\node[fact,right=7mm of startcost] (wall) {
  \textbf{1b. The boundary lies above objective level \(10\)}\\[4pt]
  Every vector at Hamming distance
  \(\lceil4N/(\log N)^{1/4}\rceil\) belongs to the
  outer region and satisfies
  \[
    f(\boldsymbol x)-f(\boldsymbol x^\star)
    \ge\frac52\rho
    \left\lceil\frac{4N}{(\log N)^{1/4}}\right\rceil.
  \]
  {\footnotesize This is the boundary case of
   \Cref{lem:outer-gain}.}
};

\node[merge,below=5mm of startcost,xshift=3.45cm] (confine) {
  \textbf{2. A decreasing path cannot leave the controlled ball}\\[3pt]
  Every accepted flip lowers \(f\), so every later iterate has objective at
  most the initial level.  A one-bit path leaving the controlled ball would
  first have to visit its boundary.  But the path starts below level \(8\),
  whereas every boundary point lies above level \(10\).  Therefore the
  boundary is unreachable.  Hamming distance itself need not decrease.\par
  {\footnotesize Steps 1--2 of \Cref{prop:descent}.}
};

\draw[arr] (startcost.south) -- ([xshift=-23mm]confine.north);
\draw[arr] (wall.south) -- ([xshift=23mm]confine.north);

\node[roadwide,below=5mm of confine] (nostop) {
  \textbf{3. The confined path cannot stop at a wrong vector}\\[3pt]
  At every nontruth vector in the controlled ball, some currently wrong bit
  offers gain at least
  \[
    \min\!\left\{\frac\rho2,
      \frac{4\rho}{(\log N)^{5/4}}\right\}
    =\frac{4\rho}{(\log N)^{5/4}}>\frac1N
  \]
  for all sufficiently large \(N\).  The detector checks every coordinate
  and chooses the largest gain, so its stopping condition is false at every
  wrong vector.  The selected bit need not itself be wrong; the profitable
  wrong bit certifies that a sufficiently good move exists.\par
  {\footnotesize The gain certificates are
   \Cref{lem:inner-gain,lem:outer-gain}; this is Step 3 of
   \Cref{prop:descent}.}
};

\node[roadwide,below=5mm of nostop] (count) {
  \textbf{4. Recovery occurs within the allowed number of flips}\\[3pt]
  Every accepted flip starting from a nontruth vector decreases the objective
  by at least \(4\rho/(\log N)^{5/4}\).  If \(T\) flips occurred without
  reaching the truth, the endpoint would still have positive excess cost.
  Comparing the total decrease with the initial excess cost gives
  \[
    T\frac{4\rho}{(\log N)^{5/4}}
    <\frac{8\rho N}{(\log N)^{1/4}},
    \qquad\text{so}\qquad T<2N\log N.
  \]
  The formal proof retains the ceiling terms and obtains
  \(T<4N\log N<M_N\) for all sufficiently large \(N\).  Thus the detector
  reaches the truth before exhausting its allowed number of flips.  The two
  landscape regions are probability cases, not successive phases of the
  algorithm.\par
  {\footnotesize Step 4 of \Cref{prop:descent}.}
};

\node[result,below=5mm of count] (recovery) {
  \textbf{The path cannot escape or stop at a wrong vector, and it reaches
  \(\boldsymbol x^\star\) within the allowed number of flips.  At the truth,
  every one-bit neighbor has larger cost, so the detector stops and returns
  \(\boldsymbol x^\star\): exact recovery.}\\[2pt]
  {\footnotesize \Cref{lem:inner-gain}, \Cref{prop:descent}, and
   \Cref{thm:main}\textup{(a)}.}
};

\draw[arr] (confine)--(nostop);
\draw[arr] (nostop)--(count);
\draw[arr] (count)--(recovery);

\end{tikzpicture}
\end{center}

\section{Standard inequalities used in the proof}

Throughout the paper, an event holds ``with probability tending to one'' if
its probability, under the joint randomness of \(\boldsymbol H\) and
\(\boldsymbol w\), converges to one as \(N\to\infty\).  Equivalently, the
probability of its complement tends to zero.  Whenever useful below, we keep
the explicit finite-\(N\) upper bound on that complementary probability
rather than using only the asymptotic phrase.

We state the external ingredients before using them.  This makes every later
application identifiable by name.

\begin{enumerate}[label=\textbf{S\arabic*.}]
\item \textbf{Markov.}  If \(X\ge0\) and \(t>0\), then
\(\Pp\{X\ge t\}\le\E X/t\).

\item \textbf{McDiarmid's inequality.}  Suppose \(F\) is a function of
independent inputs and replacing input \(i\) changes \(F\) by at most
\(c_i\).  Then, for \(t>0\),
\begin{equation}\label{eq:mcdiarmid}
 \Pp\{F-\E F\le-t\}
 \le\exp\left(-\frac{2t^2}{\sum_i c_i^2}\right).
\end{equation}

\item \textbf{Union bound.}  For arbitrary events \(A_1,\ldots,A_m\),
\(\Pp(\bigcup_j A_j)\le\sum_j\Pp(A_j)\).

\item \textbf{Subset count.}  For \(1\le k\le N\),
\begin{equation}\label{eq:subset-count}
 \binom Nk\le\frac{N^k}{k!}
 \le\left(\frac{eN}{k}\right)^k.
\end{equation}

\item \textbf{Gaussian and chi-square tails.}  If \(Z\sim\mathcal N(0,1)\),
then
\begin{equation}\label{eq:gaussian-tail}
 \Pp\{|Z|\ge t\}=2Q(t)
 \le\min\left\{2e^{-t^2/2},\frac{2\phi(t)}{t}\right\},
 \qquad t>0,
\end{equation}
where \(\phi(t)=(2\pi)^{-1/2}e^{-t^2/2}\) and
\(Q(t)=\int_t^\infty\phi(u)\,du\).  If \(X\sim\chi_N^2\) and
\(0<\varepsilon<1\), then
\begin{equation}\label{eq:chi-tail}
 \Pp\{|X-N|>\varepsilon N\}
 \le2\exp\left(-\frac{N\varepsilon^2}{8}\right).
\end{equation}

\item \textbf{Gaussian quadratic-form lower tail.}  If
\(\boldsymbol g\sim\mathcal N(\zero,\boldsymbol I_N)\) and
\(\boldsymbol 0\preceq\boldsymbol M\preceq\boldsymbol I_N\), then
\begin{equation}\label{eq:quadratic-lower-tail}
 \Pp\left\{
 \boldsymbol g^{\mathsf T}\boldsymbol M\boldsymbol g
 <\frac12\tr\boldsymbol M\right\}
 \le\exp\left(-\frac{\tr\boldsymbol M}{16}\right).
\end{equation}
Indeed, the Laurent--Massart bound gives
\(\boldsymbol g^{\mathsf T}\boldsymbol M\boldsymbol g
\ge\tr\boldsymbol M-2\sqrt{\tr(\boldsymbol M^2)t}\) except with
probability \(e^{-t}\).  Take \(t=\tr(\boldsymbol M)/16\) and use
\(\tr(\boldsymbol M^2)\le\tr\boldsymbol M\).

\item \textbf{Rank-one inverse formula.}  For invertible
\(\boldsymbol A\) satisfying
\(1+\boldsymbol u^{\mathsf T}\boldsymbol A^{-1}\boldsymbol u\ne0\),
\begin{equation}\label{eq:sherman}
 (\boldsymbol A+\boldsymbol u\boldsymbol u^{\mathsf T})^{-1}
 =\boldsymbol A^{-1}
 -\frac{\boldsymbol A^{-1}\boldsymbol u\boldsymbol u^{\mathsf T}
        \boldsymbol A^{-1}}
       {1+\boldsymbol u^{\mathsf T}\boldsymbol A^{-1}\boldsymbol u}.
\end{equation}

\end{enumerate}

Markov, McDiarmid's inequality, the union bound, and the subset estimate are
standard; see, for example, \citet{BoucheronLugosiMassart2013}.  The bound
\(Q(t)\le\phi(t)/t\) in \eqref{eq:gaussian-tail} is the classical
Mills-ratio inequality \citep{Gordon1941}; the exponential bound follows
from the Gaussian Chernoff bound.  The
chi-square estimate follows from the chi-square moment-generating function;
see \citet{LaurentMassart2000}; the same reference gives the weighted
quadratic-form lower tail used in \eqref{eq:quadratic-lower-tail}.  The
rank-one identity is due to \citet{ShermanMorrison1950}.

\section{How close rounded LMMSE gets}\label{sec:warm-start}

The descent proof will need exactly one fact from its initialization:
\begin{equation}\label{eq:warm-target}
 \dH(\boldsymbol x^{\mathrm{init}},\boldsymbol x^\star)
 \le\left\lceil\frac{N}{(\log N)^{1/4}}\right\rceil
 \quad\text{with probability tending to one.}
\end{equation}
We derive backward from this goal before proving anything about the random
matrix.

The vector \(\boldsymbol b\) in \eqref{eq:detector-lmmse} is already
normalized to estimate \(\boldsymbol x^\star\) directly.  We therefore
analyze \(\boldsymbol b\) itself.  Although the same formula is the standard
unit-covariance LMMSE estimator under a Rademacher prior, the theorem keeps
\(\boldsymbol x^\star\) deterministic; the exact identity below therefore
also uses rotational invariance of the random channel.

\subsection{What is sufficient for the desired Hamming radius?}

If coordinate \(j\) has the wrong sign, then
\(|b_j-x_j^\star|\ge1\).  Therefore, for every realization,
\begin{equation}\label{eq:signs-below-mse}
 \dH(\boldsymbol x^{\mathrm{init}},\boldsymbol x^\star)
 \le\left\|\boldsymbol b-\boldsymbol x^\star\right\|_2^2.
\end{equation}
Markov's inequality now shows what remains to be proved.  Any estimate of the
form
\begin{equation}\label{eq:mse-sufficient-goal}
 \E_{\boldsymbol H,\boldsymbol w}
 \left\|\boldsymbol b-\boldsymbol x^\star\right\|_2^2
 \le C\frac{N}{\sqrt{\log N}}
\end{equation}
with an absolute constant \(C\) implies
\[
 \Pp\left\{\dH(\boldsymbol x^{\mathrm{init}},\boldsymbol x^\star)>
 \frac{N}{(\log N)^{1/4}}\right\}
 \le\frac{C}{(\log N)^{1/4}}\longrightarrow0.
\]
Thus the LMMSE stage does not require coordinatewise accuracy.  We need only
the total expected squared error at the scale \(N/\sqrt{\log N}\).

Define the LMMSE error matrix
\begin{equation}\label{eq:B-rho}
 \boldsymbol B_\rho
 =\left(\boldsymbol I_N+
   \frac{\rho}{N}\boldsymbol H^{\mathsf T}\boldsymbol H\right)^{-1}.
\end{equation}

\subsection{The exact mean-square identity reduces the problem to a trace}

The conditional error calculation below is the standard LMMSE
error-covariance calculation; the scalar case is worked out in
\citet[Appendix~A.3]{TseViswanath2005}.  For the deterministic-word
formulation, the exact identity additionally uses right-orthogonal
invariance of the Gaussian channel.  We include both steps to make the
normalization and averaging explicit.

\begin{lemma}[Exact mean-square error of the LMMSE-form estimate]\label{lem:mse}
For every fixed \(\boldsymbol x^\star\in\{\pm1\}^N\),
\begin{equation}\label{eq:mse-identity}
 \E_{\boldsymbol H,\boldsymbol w}
 \left\|\boldsymbol b-\boldsymbol x^\star\right\|_2^2
 =\E_{\boldsymbol H}\tr(\boldsymbol B_\rho).
\end{equation}
\end{lemma}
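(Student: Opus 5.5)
We compute the error conditionally on \(\boldsymbol H\) and then average using rotational invariance. Write \(\boldsymbol A=\sqrt{\rho/N}\,\boldsymbol H\), so that \(\boldsymbol y=\boldsymbol A\boldsymbol x^\star+\boldsymbol w\). Then
\[
\boldsymbol b=(\boldsymbol A^{\mathsf T}\boldsymbol A+\boldsymbol I_N)^{-1}\boldsymbol A^{\mathsf T}\boldsymbol y ,
\]
because \(\sqrt{N/\rho}\,(\boldsymbol H^{\mathsf T}\boldsymbol H+\tfrac N\rho\boldsymbol I)^{-1}\boldsymbol H^{\mathsf T}=(\boldsymbol A^{\mathsf T}\boldsymbol A+\boldsymbol I)^{-1}\boldsymbol A^{\mathsf T}\). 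Moreover \((\boldsymbol A^{\mathsf T}\boldsymbol A+\boldsymbol I)^{-1}=\boldsymbol B_\rho\). Using \(\boldsymbol B_\rho\boldsymbol A^{\mathsf T}\boldsymbol A=\boldsymbol I-\boldsymbol B_\rho\), the error splits as
\[
\boldsymbol b-\boldsymbol x^\star=-\boldsymbol B_\rho\boldsymbol x^\star+\boldsymbol B_\rho\boldsymbol A^{\mathsf T}\boldsymbol w .
\]

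Next take the expectation over \(\boldsymbol w\) with \(\boldsymbol H\) fixed. The cross term vanishes, which gives
\[
\E_{\boldsymbol w}\|\boldsymbol b-\boldsymbol x^\star\|_2^2=\boldsymbol x^{\star\mathsf T}\boldsymbol B_\rho^2\boldsymbol x^\star+\tr(\boldsymbol B_\rho\boldsymbol A^{\mathsf T}\boldsymbol A\boldsymbol B_\rho).
\]
The second term equals \(\tr(\boldsymbol B_\rho(\boldsymbol I-\boldsymbol B_\rho))=\tr\boldsymbol B_\rho-\tr\boldsymbol B_\rho^2\). If \(\boldsymbol x^\star\) were Rademacher, the first term would average to \(\tr\boldsymbol B_\rho^2\), and the identity would be immediate. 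Since \(\boldsymbol x^\star\) is deterministic, we need a different averaging step.

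That step, which is the main point of the proof, is to show
\[
\E_{\boldsymbol H}\,\boldsymbol x^{\star\mathsf T}\boldsymbol B_\rho^2\boldsymbol x^\star=\E_{\boldsymbol H}\tr\boldsymbol B_\rho^2 .
\]
For any orthogonal \(\boldsymbol O\), the matrices \(\boldsymbol H\boldsymbol O\) and \(\boldsymbol H\) have the same law, so \(\boldsymbol B_\rho\) and \(\boldsymbol O^{\mathsf T}\boldsymbol B_\rho\boldsymbol O\) also have the same law. Hence \(\boldsymbol M:=\E\boldsymbol B_\rho^2\) commutes with every orthogonal matrix and must be a scalar multiple of \(\boldsymbol I\), namely \(\boldsymbol M=\tfrac1N\E\tr(\boldsymbol B_\rho^2)\,\boldsymbol I\). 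Since \(\|\boldsymbol x^\star\|_2^2=N\), this gives
\[
\E\,\boldsymbol x^{\star\mathsf T}\boldsymbol B_\rho^2\boldsymbol x^\star=\boldsymbol x^{\star\mathsf T}\boldsymbol M\boldsymbol x^\star=\E\tr\boldsymbol B_\rho^2 .
\]

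Finally, we check integrability. We have \(\boldsymbol 0\preceq\boldsymbol B_\rho\preceq\boldsymbol I\), so the first term is at most \(N\). The noise term is also integrable: \(\|\boldsymbol B_\rho\boldsymbol A^{\mathsf T}\|_{\mathrm{op}}\le 1/2\), since the singular values of \(\boldsymbol B_\rho\boldsymbol A^{\mathsf T}\) have the form \(s/(1+s^2)\le 1/2\). Therefore all expectations are finite, Fubini applies, and summing the two terms gives \(\E\tr\boldsymbol B_\rho^2+\E\tr\boldsymbol B_\rho-\E\tr\boldsymbol B_\rho^2=\E\tr\boldsymbol B_\rho\), which is \eqref{eq:mse-identity}.
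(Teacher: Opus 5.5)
Your proof is correct and follows the paper's argument step for step: the same decomposition \(\boldsymbol b-\boldsymbol x^\star=-\boldsymbol B_\rho\boldsymbol x^\star+\sqrt{\rho/N}\,\boldsymbol B_\rho\boldsymbol H^{\mathsf T}\boldsymbol w\), the same conditional noise covariance \(\boldsymbol B_\rho-\boldsymbol B_\rho^2\), and the same right-orthogonal-invariance argument that forces \(\E\boldsymbol B_\rho^2\) to be a multiple of the identity. Your explicit integrability check via \(\|\boldsymbol B_\rho\boldsymbol A^{\mathsf T}\|_{\mathrm{op}}\le 1/2\) is a small addition in rigor that the paper leaves implicit.
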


\begin{proof}
From \eqref{eq:B-rho},
\[
 \left(\boldsymbol H^{\mathsf T}\boldsymbol H+
       \frac N\rho\boldsymbol I_N\right)^{-1}
 =\frac\rho N\boldsymbol B_\rho,
 \qquad
 \frac\rho N\boldsymbol B_\rho
 \boldsymbol H^{\mathsf T}\boldsymbol H
 =\boldsymbol I_N-\boldsymbol B_\rho.
\]
Substituting the model into the already normalized estimator
\eqref{eq:detector-lmmse} therefore gives
\[
 \boldsymbol b
 =(\boldsymbol I_N-\boldsymbol B_\rho)\boldsymbol x^\star
 +\sqrt{\frac\rho N}\,
  \boldsymbol B_\rho\boldsymbol H^{\mathsf T}\boldsymbol w.
\]
After subtracting \(\boldsymbol x^\star\),
\begin{equation}\label{eq:error-decomposition}
 \boldsymbol b-\boldsymbol x^\star
 =-\boldsymbol B_\rho\boldsymbol x^\star
 +\sqrt{\frac\rho N}\,
  \boldsymbol B_\rho\boldsymbol H^{\mathsf T}\boldsymbol w.
\end{equation}

Condition on \(\boldsymbol H\) and expand the squared norm.  The cross term
has conditional expectation zero because \(\E(\boldsymbol w)=\zero\).  The
noise term has conditional covariance
\begin{align*}
 &\frac\rho N\boldsymbol B_\rho
 \boldsymbol H^{\mathsf T}\boldsymbol H\boldsymbol B_\rho\\
 &\qquad=
 \boldsymbol B_\rho
 \left(\boldsymbol B_\rho^{-1}-\boldsymbol I_N\right)
 \boldsymbol B_\rho
 =\boldsymbol B_\rho-\boldsymbol B_\rho^2.
\end{align*}
Therefore
\begin{equation}\label{eq:mse-before-rotation}
 \E_{\boldsymbol w}\!\left[
 \left\|\boldsymbol b-\boldsymbol x^\star\right\|_2^2
 \middle|\boldsymbol H\right]
 =\boldsymbol x^{\star\mathsf T}\boldsymbol B_\rho^2\boldsymbol x^\star
  +\tr(\boldsymbol B_\rho-\boldsymbol B_\rho^2).
\end{equation}

It remains to average the first term over \(\boldsymbol H\).  For every fixed
orthogonal matrix \(\boldsymbol O\), the matrices
\(\boldsymbol H\boldsymbol O\) and \(\boldsymbol H\) have the same
distribution.  Consequently,
\(\E(\boldsymbol B_\rho^2)\) commutes with every orthogonal matrix and must be
a scalar multiple of \(\boldsymbol I_N\).  Its trace determines that scalar:
\[
 \E(\boldsymbol B_\rho^2)
 =\frac{\E\tr(\boldsymbol B_\rho^2)}{N}\boldsymbol I_N.
\]
Since \(\|\boldsymbol x^\star\|_2^2=N\),
\[
 \E\!\left[
 \boldsymbol x^{\star\mathsf T}\boldsymbol B_\rho^2\boldsymbol x^\star
 \right]
 =\E\tr(\boldsymbol B_\rho^2).
\]
Average \eqref{eq:mse-before-rotation} over \(\boldsymbol H\).  The positive
and negative copies of \(\E\tr(\boldsymbol B_\rho^2)\) cancel, leaving
\eqref{eq:mse-identity}.
\end{proof}

\paragraph{Reduction to a trace bound.}
Equations \eqref{eq:signs-below-mse} and \eqref{eq:mse-identity} reduce the
entire warm-start claim to one question:
\[
 \text{Can we prove}\qquad
 \E\tr(\boldsymbol B_\rho)=O\left(\frac{N}{\sqrt{\log N}}\right)?
\]
No coordinatewise LMMSE analysis is needed.

\subsection{From a leave-one-row identity to a trace bound}

Increasing \(\rho\) increases the positive matrix inside the inverse in
\eqref{eq:B-rho}, and therefore decreases its inverse.  Thus, for
\(\rho\ge2\log N\),
\[
 \boldsymbol B_\rho\preceq
 \left(\boldsymbol I_N+
 \frac{2\log N}{N}\boldsymbol H^{\mathsf T}\boldsymbol H\right)^{-1}.
\]
For this subsection only, write
\begin{equation}\label{eq:B-boundary}
 \boldsymbol B=\left(\boldsymbol I_N+
 \frac{2\log N}{N}\boldsymbol H^{\mathsf T}\boldsymbol H\right)^{-1}.
\end{equation}
It is enough to bound \(\E\tr(\boldsymbol B)\).

A direct row-by-row calculation with \(\boldsymbol B\) would be invalid:
row \(\boldsymbol H_{i,:}\) already appears inside \(\boldsymbol B\).  We
therefore remove that row first and define
\begin{align}
 \boldsymbol B_{-i}
 &=\left(\boldsymbol I_N+
 \frac{2\log N}{N}
 \sum_{\ell\ne i}\boldsymbol H_{\ell,:}^{\mathsf T}
                    \boldsymbol H_{\ell,:}\right)^{-1},
 \label{eq:B-minus-i}\\
 q_i
 &=\frac{2\log N}{N}\,
 \boldsymbol H_{i,:}\boldsymbol B_{-i}
 \boldsymbol H_{i,:}^{\mathsf T}.
 \label{eq:q-i}
\end{align}
Now \(\boldsymbol B_{-i}\) depends only on the other rows and is independent
of \(\boldsymbol H_{i,:}\).  The scalar \(q_i\) is not computed by the
detector.  It is the quadratic form that appears in the
Sherman--Morrison denominator when row \(i\) is restored.  Conditional on
\(\boldsymbol B_{-i}\), its unscaled quadratic part has mean
\(\tr(\boldsymbol B_{-i})\); this independence is what will connect the
random denominator to the trace we want to bound.

\begin{lemma}[The trace identity]\label{lem:trace-identity}
For every realization,
\begin{equation}\label{eq:trace-identity}
 \tr(\boldsymbol B)=\sum_{i=1}^N\frac1{1+q_i},
 \qquad
 0\le\tr(\boldsymbol B_{-i})-\tr(\boldsymbol B)\le1.
\end{equation}
More generally, when one rank-one positive-semidefinite term is added to or
removed from a matrix of the form
\(\boldsymbol I_N+\sum_j\boldsymbol u_j\boldsymbol u_j^{\mathsf T}\), the
traces of the corresponding inverses differ by at most one, regardless of
the number of terms in the sum.
\end{lemma}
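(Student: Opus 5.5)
The plan is to obtain both claims from the Sherman--Morrison formula \eqref{eq:sherman} applied to the rank-one update that restores row \(i\). Set \(\boldsymbol u_i=\sqrt{2\log N/N}\,\boldsymbol H_{i,:}^{\mathsf T}\). Then \(\boldsymbol B^{-1}=\boldsymbol B_{-i}^{-1}+\boldsymbol u_i\boldsymbol u_i^{\mathsf T}\) and \(q_i=\boldsymbol u_i^{\mathsf T}\boldsymbol B_{-i}\boldsymbol u_i\ge0\). The denominator \(1+q_i\) is therefore positive and the formula applies, giving
\[
 \boldsymbol B=\boldsymbol B_{-i}-\frac{\boldsymbol B_{-i}\boldsymbol u_i\boldsymbol u_i^{\mathsf T}\boldsymbol B_{-i}}{1+q_i},
 \qquad
 \tr(\boldsymbol B_{-i})-\tr(\boldsymbol B)=\frac{\boldsymbol u_i^{\mathsf T}\boldsymbol B_{-i}^2\boldsymbol u_i}{1+q_i}.
\]

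For the two-sided bound, the numerator on the right is nonnegative, which gives the lower bound \(0\). For the upper bound, use \(\boldsymbol 0\prec\boldsymbol B_{-i}\preceq\boldsymbol I_N\), which holds because \(\boldsymbol B_{-i}^{-1}\succeq\boldsymbol I_N\). It follows that \(\boldsymbol B_{-i}^2\preceq\boldsymbol B_{-i}\), so the numerator is at most \(q_i\) and the difference is at most \(q_i/(1+q_i)\le1\).

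The general statement follows from the same computation. Here \(\boldsymbol A=\boldsymbol I_N+\sum_j\boldsymbol u_j\boldsymbol u_j^{\mathsf T}\succeq\boldsymbol I_N\), so \(\boldsymbol 0\prec\boldsymbol A^{-1}\preceq\boldsymbol I_N\) however many terms the sum contains. Adding or removing a single term \(\boldsymbol u\boldsymbol u^{\mathsf T}\) therefore changes the trace of the inverse by \(\boldsymbol u^{\mathsf T}\boldsymbol A^{-2}\boldsymbol u/(1+\boldsymbol u^{\mathsf T}\boldsymbol A^{-1}\boldsymbol u)\in[0,1]\). Removing a term is the same step with the roles of the two matrices exchanged, since both matrices have this form.

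For the trace identity, I would write \(\boldsymbol B^{-1}=\boldsymbol I_N+\sum_i\boldsymbol u_i\boldsymbol u_i^{\mathsf T}\). This gives \(\boldsymbol I_N=\boldsymbol B+\sum_i\boldsymbol B\boldsymbol u_i\boldsymbol u_i^{\mathsf T}\), and taking traces,
\[
 N=\tr(\boldsymbol B)+\sum_i\boldsymbol u_i^{\mathsf T}\boldsymbol B\boldsymbol u_i .
\]
Multiplying the Sherman--Morrison formula on the right by \(\boldsymbol u_i\) gives \(\boldsymbol B\boldsymbol u_i=\boldsymbol B_{-i}\boldsymbol u_i/(1+q_i)\). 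Hence \(\boldsymbol u_i^{\mathsf T}\boldsymbol B\boldsymbol u_i=q_i/(1+q_i)=1-1/(1+q_i)\). Summing over \(i\) and substituting yields \(\tr(\boldsymbol B)=\sum_i 1/(1+q_i)\).

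The only step needing care is this last identity. It is the one place where a plain rank-one trace difference is not enough: one has to use the full row decomposition \(\boldsymbol B^{-1}=\boldsymbol I_N+\sum_i\boldsymbol u_i\boldsymbol u_i^{\mathsf T}\) together with the \(\boldsymbol B\boldsymbol u_i\) form of Sherman--Morrison. The remaining steps are routine positive-semidefinite ordering arguments.
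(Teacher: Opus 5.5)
Your proposal is correct and follows the paper's proof essentially step for step. You apply Sherman--Morrison to the restored row to get \(\boldsymbol u_i^{\mathsf T}\boldsymbol B\boldsymbol u_i=q_i/(1+q_i)\), trace \(\boldsymbol B\boldsymbol B^{-1}=\boldsymbol I_N\) to obtain the identity, and use \(\boldsymbol B_{-i}^2\preceq\boldsymbol B_{-i}\) (valid because the pre-update inverse is at most \(\boldsymbol I_N\)) to get the \([0,1]\) bound and its general version; your remark that removal is the same step with the roles exchanged makes explicit what the paper leaves implicit.
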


\begin{proof}
Adding row \(i\) adds the rank-one matrix
\((2\log N/N)\boldsymbol H_{i,:}^{\mathsf T}\boldsymbol H_{i,:}\) to
\(\boldsymbol B_{-i}^{-1}\).  Apply the rank-one inverse formula
\eqref{eq:sherman}.  It gives
\[
 \frac{2\log N}{N}\boldsymbol H_{i,:}\boldsymbol B
 \boldsymbol H_{i,:}^{\mathsf T}
 =\frac{q_i}{1+q_i}.
\]
Now use \(\boldsymbol B^{-1}\boldsymbol B=\boldsymbol I_N\), take the trace,
and substitute the preceding equality:
\begin{align*}
 N
 &=\tr(\boldsymbol B)
 +\sum_{i=1}^N\frac{2\log N}{N}
   \boldsymbol H_{i,:}\boldsymbol B\boldsymbol H_{i,:}^{\mathsf T}\\
 &=\tr(\boldsymbol B)+\sum_{i=1}^N\frac{q_i}{1+q_i}\\
 &=\tr(\boldsymbol B)+N-\sum_{i=1}^N\frac1{1+q_i}.
\end{align*}
Cancel \(N\) to obtain the first identity.

The same rank-one formula gives
\[
 \tr(\boldsymbol B_{-i})-\tr(\boldsymbol B)
 =\frac{(2\log N/N)\boldsymbol H_{i,:}\boldsymbol B_{-i}^2
          \boldsymbol H_{i,:}^{\mathsf T}}
        {1+q_i}.
\]
Because \(\boldsymbol 0\preceq\boldsymbol B_{-i}\preceq\boldsymbol I_N\),
we have \(\boldsymbol B_{-i}^2\preceq\boldsymbol B_{-i}\).  The numerator is
therefore between zero and \(q_i\), proving that the ratio lies in
\([0,1]\).  This last calculation uses only that the inverse before the
rank-one update is bounded above by \(\boldsymbol I_N\).  It therefore proves
the stated general trace-sensitivity bound for any number of pre-existing
rank-one terms.
\end{proof}

The trace identity also suggests how to prove the correct scale.  Conditional
on \(\boldsymbol B_{-i}\), Gaussian quadratic-form concentration predicts
\[
 \boldsymbol H_{i,:}\boldsymbol B_{-i}
 \boldsymbol H_{i,:}^{\mathsf T}
 \approx\tr(\boldsymbol B_{-i}),
\]
while removing one row changes the trace by at most one.  Thus one expects
\[
 q_i\approx\frac{2\log N}{N}\tr(\boldsymbol B).
\]
If this approximation is inserted into the exact identity
\eqref{eq:trace-identity}, it yields the self-consistent relation
\[
 \tr(\boldsymbol B)
 \approx\frac{N}{1+(2\log N/N)\tr(\boldsymbol B)},
\]
whose positive solution has order \(N/\sqrt{\log N}\).  This calculation is
not used as a proof; it identifies the estimate to establish.  The next lemma
replaces the two approximations by lower-tail bounds and derives a rigorous
inequality for \(\E\tr(\boldsymbol B)\).

\begin{lemma}[Direct expected-trace bound]\label{lem:trace-bound}
For all sufficiently large \(N\),
\begin{equation}\label{eq:trace-bound}
 \E\tr(\boldsymbol B)\le\frac{2N}{\sqrt{\log N}}.
\end{equation}
\end{lemma}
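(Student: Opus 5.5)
We start from the exact identity of \Cref{lem:trace-identity}, \(\tr(\boldsymbol B)=\sum_i 1/(1+q_i)\), and take expectations. The plan is to show that each \(q_i\) is, with overwhelming probability, at least a constant multiple of \((\log N/N)\tr(\boldsymbol B)\). We then close a self-consistent inequality for \(T:=\E\tr(\boldsymbol B)\). Since \(1/(1+q)\) is decreasing and bounded by one, it suffices to control the \emph{lower} tail of \(q_i\) and pay at most one for the bad event.

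\textbf{Step 1 (conditional lower tail).} Condition on the rows other than \(i\). Then \(\boldsymbol H_{i,:}^{\mathsf T}\sim\mathcal N(\zero,\boldsymbol I_N)\) is independent of \(\boldsymbol M=\boldsymbol B_{-i}\), and \(\zero\preceq\boldsymbol M\preceq\boldsymbol I_N\). Applying S6 gives
\[
\Pp\Bigl\{q_i<\tfrac{\log N}{N}\tr(\boldsymbol B_{-i})\Bigm|\boldsymbol B_{-i}\Bigr\}\le e^{-\tr(\boldsymbol B_{-i})/16}.
\]
By \Cref{lem:trace-identity}, \(\tr(\boldsymbol B_{-i})\ge\tr(\boldsymbol B)\). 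Moreover \(\tr(\boldsymbol B_{-i})\) is itself a good proxy, so one can work with \(\tau_i:=\tr(\boldsymbol B_{-i})\) throughout. Hence
\[
\E\frac1{1+q_i}\le \E\Bigl[\frac{1}{1+(\log N/N)\tau_i}\Bigr]+\E e^{-\tau_i/16}.
\]

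\textbf{Step 2 (concentration of the trace).} The function \(x\mapsto 1/(1+cx)\) is convex, so Jensen's inequality goes the wrong way. We therefore need \(\tau_i\) to be concentrated below its mean. Using the general trace-sensitivity bound of \Cref{lem:trace-identity}, replacing any single row changes \(\tr(\boldsymbol B)\) or \(\tau_i\) by at most \(2\). McDiarmid (S2), applied with the \(N\) independent rows as inputs, then gives \(\Pp\{\tau_i\le \E\tau_i - t\}\le e^{-t^2/(2N)}\). Taking \(t=N^{2/3}\), we get \(\tau_i\ge T-N^{2/3}\) except with probability \(e^{-N^{1/3}/2}\), since \(\E\tau_i\ge T\). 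Also \(\tau_i\ge\tr(\boldsymbol B)\ge N/(1+(2\log N/N)\|\boldsymbol H\|^2)\). By S5, \(\|\boldsymbol H\|^2\le CN\) with high probability, so \(\tau_i\gtrsim N/\log N\) and the term \(\E e^{-\tau_i/16}\) is superpolynomially small. (Alternatively we can use the crude bound \(\tau_i\ge T- N^{2/3}\) on the good event, provided we first note \(T\ge cN/\log N\).) Summing over \(i\) yields
\[
T\le \frac{N}{1+(\log N/N)(T-N^{2/3})}+o(1).
\]

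\textbf{Step 3 (solve).} Write \(T=\theta N/\sqrt{\log N}\). The inequality becomes roughly \(\theta(1+\theta\sqrt{\log N})\le\sqrt{\log N}\,(1+o(1))\), so \(\theta^2\le 1+o(1)\). This gives \(T\le (1+o(1))N/\sqrt{\log N}\le 2N/\sqrt{\log N}\) for large \(N\). Here the \(N^{2/3}\log N/N\) correction and the \(o(1)\) terms are negligible. The factor \(1/2\) lost in S6 changes only the constant, \(\sqrt{1}\) versus \(\sqrt{2}\), and the target constant \(2\) leaves room.

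\textbf{Main obstacle.} The delicate point is the direction of Jensen in Step 2, which forces a genuine lower-deviation bound on \(\tr(\boldsymbol B_{-i})\) rather than just its mean. McDiarmid with bounded differences via \Cref{lem:trace-identity} is what I would try first. A second subtlety is checking that the S6 failure probability is negligible. This requires \(\tau_i\) to be large, which is supplied by the deterministic lower bound \(\tau_i\ge \tr\boldsymbol B\) combined with an operator-norm bound on \(\boldsymbol H\).
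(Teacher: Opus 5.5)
Your proposal is correct and is essentially the paper's proof: the identity $\tr(\boldsymbol B)=\sum_i(1+q_i)^{-1}$, the S6 lower tail for the fresh row given $\boldsymbol B_{-i}$, McDiarmid with bounded differences $2$ from the trace-sensitivity bound, and a scalar inequality for $\E\tr(\boldsymbol B)$. The paper takes the McDiarmid deviation to be $m/2$ rather than $N^{2/3}$, so its inequality gives roughly $\sqrt2\,N/\sqrt{\log N}$ where yours gives $(1+o(1))N/\sqrt{\log N}$. The one slip is attributing $\|\boldsymbol H\|^2\le CN$ to S5, which contains no operator-norm bound, and a case split in the spirit of the paper's removes the need for it: if $T\le 2N^{2/3}$ the claim is trivial, and otherwise $\tau_i\ge T-N^{2/3}\ge N^{2/3}$ on the McDiarmid event, so $\E e^{-\tau_i/16}\le e^{-N^{2/3}/16}+e^{-N^{1/3}/2}$.
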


\begin{proof}
Write \(m=\E\tr(\boldsymbol B)\).  We turn the heuristic preceding the
lemma into a scalar inequality for \(m\).

\textbf{Step 1: reduce the trace identity to one representative row.}
The random variables \(q_1,\ldots,q_N\) are identically distributed because
the rows of \(\boldsymbol H\) are exchangeable.  Taking expectations in
\eqref{eq:trace-identity} therefore gives
\begin{equation}\label{eq:trace-one-row}
 \frac mN
 =\E\left[\frac1{1+q_1}\right].
\end{equation}
Thus it is enough to show that \(q_1\) is usually large.  The leave-one-row
construction makes this possible because \(\boldsymbol B_{-1}\) is
independent of the Gaussian row appearing in \(q_1\).

\textbf{Step 2: the inverse built from the other rows usually has a large
trace.}
Deleting a positive-semidefinite rank-one term can only increase the inverse
in the positive-semidefinite order.  Consequently,
\[
 \E\tr(\boldsymbol B_{-1})\ge m.
\]
Replacing any one of the remaining rows changes
\(\tr(\boldsymbol B_{-1})\) by at most two: delete the old row, which changes
the trace by at most one by \eqref{eq:trace-identity}, and add the new row,
which changes it by at most one again.  Hence the bounded-difference
constants are at most two.  Applying McDiarmid's inequality at a downward
deviation of \(m/2\), and using \(\E\tr(\boldsymbol B_{-1})\ge m\), gives
\begin{equation}\label{eq:deleted-trace-prob}
 \Pp\left\{\tr(\boldsymbol B_{-1})
 <\frac m2\right\}
 \le\exp\left(
 -\frac{m^2}{8N}\right).
\end{equation}

Let
\[
 A=\left\{\tr(\boldsymbol B_{-1})\ge\frac m2\right\}.
\]

\textbf{Step 3: the independent Gaussian row produces a large quadratic
form.}
Condition on \(\boldsymbol B_{-1}\).  This matrix is now fixed, whereas
\(\boldsymbol H_{1,:}\) remains a standard Gaussian row.  Substituting
\(\boldsymbol g=\boldsymbol H_{1,:}^{\mathsf T}\) and
\(\boldsymbol M=\boldsymbol B_{-1}\) into
\eqref{eq:quadratic-lower-tail} is valid because
\(\boldsymbol 0\preceq\boldsymbol B_{-1}\preceq\boldsymbol I_N\).  Hence
\begin{align}
 &\Pp\left\{
 \boldsymbol H_{1,:}\boldsymbol B_{-1}
 \boldsymbol H_{1,:}^{\mathsf T}
 <\frac12\tr(\boldsymbol B_{-1})
 \,\middle|\,\boldsymbol B_{-1}\right\}
 \notag\\
 &\qquad\le
 \exp\left(-\frac{\tr(\boldsymbol B_{-1})}{16}\right)
 \le
 \exp\left(-\frac m{32}\right)
 \label{eq:independent-row-prob}
\end{align}
on the event \(A\), where the last inequality uses
\(\tr(\boldsymbol B_{-1})\ge m/2\).  Let
\[
 C=\left\{
 \boldsymbol H_{1,:}\boldsymbol B_{-1}
 \boldsymbol H_{1,:}^{\mathsf T}
 \ge\frac12\tr(\boldsymbol B_{-1})
 \right\}.
\]
Equation \eqref{eq:independent-row-prob} says that, conditional on any
\(\boldsymbol B_{-1}\) for which \(A\) holds, \(C\) fails with probability
at most \(e^{-m/32}\).

\textbf{Step 4: convert the two lower bounds into a reciprocal bound.}
Call \(G=A\cap C\) the good event.  On \(G\), first use \(C\), then \(A\),
in the definition of \(q_1\):
\begin{equation}\label{eq:q-direct-lower}
 \begin{aligned}
 q_1
 &=\frac{2\log N}{N}
   \boldsymbol H_{1,:}\boldsymbol B_{-1}
   \boldsymbol H_{1,:}^{\mathsf T}\\
 &\ge\frac{2\log N}{N}\cdot
       \frac12\tr(\boldsymbol B_{-1})
 \ge\frac{2\log N}{4N}m.
 \end{aligned}
\end{equation}
The function \(u\mapsto(1+u)^{-1}\) is decreasing on \([0,\infty)\).
Therefore \eqref{eq:q-direct-lower} gives, on \(G\),
\[
 \frac1{1+q_1}
 \le
 \frac{1}{1+(2\log N/4N)m}.
\]
This is the reciprocal inequality used below: it follows directly from the
preceding lower bound on \(q_1\).

\textbf{Step 5: control the complement of the good event.}
The event \(G\) requires both \(A\) and \(C\).  It therefore fails in exactly
one of the following disjoint ways:
\[
\begin{array}{ll}
\text{(i)}&
 A^c:\quad \tr(\boldsymbol B_{-1})<\dfrac m2,\\[2mm]
\text{(ii)}&
 A\cap C^c:\quad \tr(\boldsymbol B_{-1})\ge\dfrac m2
 \quad\text{but}\quad
 \boldsymbol H_{1,:}\boldsymbol B_{-1}
 \boldsymbol H_{1,:}^{\mathsf T}
 <\dfrac12\tr(\boldsymbol B_{-1}).
\end{array}
\]
The first probability is bounded by \eqref{eq:deleted-trace-prob}.  Averaging
the conditional estimate \eqref{eq:independent-row-prob} over deleted matrices in
\(A\) bounds the second.  Thus
\[
 \Pp(G^c)\le e^{-m^2/(8N)}+e^{-m/32}.
\]
Also \(q_1\ge0\), because \(\boldsymbol B_{-1}\succeq\boldsymbol0\), so
\((1+q_1)^{-1}\le1\) on \(G^c\).  Splitting the expectation in
\eqref{eq:trace-one-row} over \(G\) and \(G^c\) now yields
\begin{align}
 \frac mN
 &\le
 \frac{1}{1+(2\log N/4N)m}
 \notag\\
 &\quad+
 \exp\left(-\frac{m^2}{8N}\right)
 +\exp\left(-\frac m{32}\right).
 \label{eq:direct-trace-inequality}
\end{align}

\textbf{Step 6: solve the scalar inequality.}
We now extract \(m\le2N/\sqrt{\log N}\) from
\eqref{eq:direct-trace-inequality}.
The comparison value \(4\sqrt{N\log N}\) separates two easy cases.

If
\[
 m\le4\sqrt{N\log N},
\]
then the desired bound already holds for all sufficiently large \(N\),
because
\[
 4\sqrt{N\log N}\le\frac{2N}{\sqrt{\log N}}
 \quad\Longleftrightarrow\quad
 2\log N\le\sqrt N.
\]

It remains to consider
\[
 m>4\sqrt{N\log N}.
\]
In this case,
\[
 \exp\left(-\frac{m^2}{8N}\right)
 \le e^{-2\log N}=\frac1{N^2}.
\]
Also, for sufficiently large \(N\),
\[
 \exp\left(-\frac m{32}\right)
 \le
 \exp\left(-\frac{\sqrt{N\log N}}8\right)
 \le\frac1{N^2}.
\]
Thus the two failure contributions together are at most \(2/N^2\).  The
remaining reciprocal term is bounded using \(1/(1+u)\le1/u\).  Multiplying
the resulting inequality by \(N\) gives
\[
 m
 \le
 \frac{4N^2}{(2\log N)m}+\frac2N.
\]
Multiply by \(m\) and use \(m\le N\), which follows from
\(\boldsymbol0\preceq\boldsymbol B\preceq\boldsymbol I_N\):
\[
 m^2
 \le\frac{4N^2}{2\log N}+2.
\]
This is again at most \(4N^2/\log N\) for sufficiently large \(N\), which
is \eqref{eq:trace-bound} after taking square roots.
\end{proof}

\begin{remark}[Random-matrix interpretation of the trace scale]
The trace in \eqref{eq:mse-identity} is a linear spectral statistic of the
square Wishart matrix \(N^{-1}\boldsymbol H^{\mathsf T}\boldsymbol H\).
For \(\rho\) fixed, the Marchenko--Pastur law gives the almost-sure limit
\begin{align*}
 \frac1N\tr(\boldsymbol B_\rho)
 &\longrightarrow
 \int_0^4 \frac{1}{1+\rho x}\,
       \frac{1}{2\pi}\sqrt{\frac{4-x}{x}}\,dx \\
 &=\frac{\sqrt{1+4\rho}-1}{2\rho}.
\end{align*}
Letting \(\rho\to\infty\) after this limit gives the sharper scale
\(\rho^{-1/2}(1+o(1))\), and hence predicts
\(\tr(\boldsymbol B_\rho)\sim N/\sqrt{\rho}\)
\citep{MarchenkoPastur1967}.  Concentration of linear spectral statistics,
including Wishart examples, is treated systematically by
\citet{GuionnetZeitouni2000}.

This observation does not by itself replace the argument above at
\(\rho=\rho_N=2\log N\): the test function
\(x\mapsto(1+\rho_Nx)^{-1}\) varies with \(N\) and increasingly probes the
hard edge at zero.  A direct transfer therefore requires triangular-array
control that is not supplied merely by fixed-test-function weak convergence.
The elementary leave-one-row proof deliberately avoids that issue.  Its
constant is non-sharp, but its \(N/\sqrt{\log N}\) scale is exactly what the
downstream warm-start argument needs.

The author thanks Boaz Nadler for pointing out this spectral-statistics
interpretation, the sharper Marchenko--Pastur prediction, and the
dimension-dependent regularization caveat.
\end{remark}

The trace estimate now has exactly the scale required in
\eqref{eq:mse-sufficient-goal}.  It remains to combine it with the exact MSE
identity and the deterministic implication from a wrong sign to unit squared
error.

\subsection{Convert the trace bound into a Hamming-distance warm start}

\begin{proposition}[Rounded LMMSE enters the outer region]\label{prop:warm}
For all sufficiently large \(N\), uniformly over
\(\boldsymbol x^\star\in\{\pm1\}^N\) and \(\rho\ge2\log N\),
\begin{equation}\label{eq:warm-probability}
 \Pp\left\{\dH(\boldsymbol x^{\mathrm{init}},\boldsymbol x^\star)>
 \left\lceil\frac{N}{(\log N)^{1/4}}\right\rceil\right\}
 \le\frac{2}{(\log N)^{1/4}}
 \longrightarrow0.
\end{equation}
\end{proposition}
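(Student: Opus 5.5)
The plan is to chain together the three ingredients already established in \Cref{sec:warm-start}: the deterministic comparison \eqref{eq:signs-below-mse}, the exact identity of \Cref{lem:mse}, and the trace bound of \Cref{lem:trace-bound}. Fix \(\boldsymbol x^\star\in\{\pm1\}^N\) and \(\rho\ge2\log N\). Every wrong sign forces \(|b_j-x_j^\star|\ge1\), including the tie convention, since a zero coordinate rounded to \(+1\) is wrong only when \(x_j^\star=-1\), and then \(|b_j-x_j^\star|=1\). Consequently the event \(\{\dH(\boldsymbol x^{\mathrm{init}},\boldsymbol x^\star)>\lceil N/(\log N)^{1/4}\rceil\}\) is contained in \(\{\|\boldsymbol b-\boldsymbol x^\star\|_2^2> N/(\log N)^{1/4}\}\), because \(\lceil t\rceil\ge t\).

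Next I would apply Markov's inequality (S1) to the nonnegative variable \(\|\boldsymbol b-\boldsymbol x^\star\|_2^2\). The difference between the strict and non-strict inequality is harmless. This bounds the probability by \((\log N)^{1/4}N^{-1}\,\E_{\boldsymbol H,\boldsymbol w}\|\boldsymbol b-\boldsymbol x^\star\|_2^2\). By \Cref{lem:mse} this expectation equals \(\E\tr(\boldsymbol B_\rho)\), and the lemma holds for every fixed word, which gives uniformity in \(\boldsymbol x^\star\).

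The only step needing care is uniformity in \(\rho\). Since \(\tfrac{\rho}{N}\boldsymbol H^{\mathsf T}\boldsymbol H\succeq\tfrac{2\log N}{N}\boldsymbol H^{\mathsf T}\boldsymbol H\) for \(\rho\ge2\log N\), operator monotonicity of inversion gives \(\boldsymbol B_\rho\preceq\boldsymbol B\) realization by realization, with \(\boldsymbol B\) as in \eqref{eq:B-boundary}. Hence \(\tr(\boldsymbol B_\rho)\le\tr(\boldsymbol B)\), and this pointwise domination is preserved by the expectation. \Cref{lem:trace-bound} then gives \(\E\tr(\boldsymbol B_\rho)\le 2N/\sqrt{\log N}\) for all sufficiently large \(N\). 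The threshold on \(N\) comes only from that lemma, so it is independent of \(\rho\) and \(\boldsymbol x^\star\).

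Multiplying the bounds gives \((\log N)^{1/4}N^{-1}\cdot 2N(\log N)^{-1/2}=2(\log N)^{-1/4}\), which is \eqref{eq:warm-probability}. I do not expect a real obstacle here, since the substantive work sits in \Cref{lem:trace-bound}. The only points to state explicitly are the monotone domination, which makes the bound uniform in \(\rho\) rather than valid just at \(\rho=2\log N\), and the fact that \Cref{lem:mse} is exact for each deterministic word.
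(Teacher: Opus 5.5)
Your proposal is correct and follows the paper's proof step for step: the pointwise bound $\dH(\boldsymbol x^{\mathrm{init}},\boldsymbol x^\star)\le\|\boldsymbol b-\boldsymbol x^\star\|_2^2$, the exact identity of \Cref{lem:mse}, the domination $\boldsymbol B_\rho\preceq\boldsymbol B$ for uniformity in $\rho$, \Cref{lem:trace-bound}, and Markov at the threshold $N/(\log N)^{1/4}$. Your explicit check that the $\sign(0)$ tie convention still gives $|b_j-x_j^\star|\ge1$ is a small addition that the paper leaves implicit.
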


\begin{proof}
The exact LMMSE calculation proved
\[
 \E_{\boldsymbol H,\boldsymbol w}
 \left\|\boldsymbol b-\boldsymbol x^\star\right\|_2^2
 =\E_{\boldsymbol H}\tr(\boldsymbol B_\rho).
\]
For every \(\rho\ge2\log N\), increasing the positive matrix inside the
inverse gives
\[
 \boldsymbol B_\rho
 \preceq
 \boldsymbol B
 =
 \left(\boldsymbol I_N+
 \frac{2\log N}{N}\boldsymbol H^{\mathsf T}\boldsymbol H\right)^{-1}.
\]
The trace lemma just proved
\(\E_{\boldsymbol H}\tr(\boldsymbol B)\le2N/\sqrt{\log N}\).  Combining these three
statements yields
\[
 \E_{\boldsymbol H,\boldsymbol w}
 \left\|\boldsymbol b-\boldsymbol x^\star\right\|_2^2
 \le\frac{2N}{\sqrt{\log N}}.
\]

We also proved the following pointwise implication: every wrong rounded sign
contributes at least one to the squared error.  Hence, for every realization,
\[
 \dH(\boldsymbol x^{\mathrm{init}},\boldsymbol x^\star)
 \le
 \left\|\boldsymbol b-\boldsymbol x^\star\right\|_2^2.
\]
Markov's inequality at the threshold \(N/(\log N)^{1/4}\) now gives
\[
 \begin{aligned}
 \Pp\left\{\dH(\boldsymbol x^{\mathrm{init}},\boldsymbol x^\star)>
 \frac{N}{(\log N)^{1/4}}\right\}
 &\le
 \Pp\left\{
 \left\|\boldsymbol b-\boldsymbol x^\star\right\|_2^2>
 \frac{N}{(\log N)^{1/4}}\right\}\\
 &\le
 \frac{2N/\sqrt{\log N}}{N/(\log N)^{1/4}}
 =\frac{2}{(\log N)^{1/4}}.
 \end{aligned}
\]
The ceiling can only enlarge the allowed set.

This probability tends to zero slowly, but that causes no later loss.  In
the final theorem it is added to the failure probabilities of the uniform
landscape events.  Once all those events hold, the descent proof is
deterministic; the warm-start probability is never multiplied by the number
of iterations or used in a denominator.
\end{proof}

\begin{keybox}
\textbf{Stage 1 is complete.}
Rounded LMMSE need not recover every bit.  Its only job is to enter the ball
of radius \(\lceil N/(\log N)^{1/4}\rceil\), and the exact MSE identity plus Markov does
this directly.  The exponent \(1/4\) is deliberately loose: the same argument
gives a high-probability radius
\((N/\sqrt{\log N})g_N\) for any \(g_N\to\infty\), but a tighter radius does
not simplify or strengthen the final exact-recovery theorem.
\end{keybox}

\section{Exact algebra for objective descent}\label{sec:flip-algebra}

Rounded LMMSE has placed us inside a controlled Hamming ball.  Our remaining
goal is to show that greedy descent reaches the truth in \(O(N\log N)\)
accepted flips, before exhausting the detector's allowed number of flips.

The one-bit gain expansion is standard in LAS analyses
\citep{Sun1998,VardhanEtAl2008,Sun2009}.  We derive it in the present
normalization because the summed wrong-bit identity below is the starting
point for the uniform landscape argument.

One cannot prove this by claiming that every accepted flip corrects a wrong
bit.  The detector chooses the largest objective decrease among all
coordinates, and that coordinate need not always be one of the currently
wrong coordinates.  Hamming distance may therefore increase at an
intermediate step.  The quantity that decreases monotonically is the
least-squares objective \(f\).

The descent proof consequently needs two facts at every nontruth vector in
the controlled ball:
\begin{enumerate}[label=\textbf{\arabic*.}]
\item The current objective is above the objective at the truth.  Upper and
lower bounds on this objective gap will confine the path and give a total
budget for the number of flips.
\item At least one wrong bit offers a definite objective decrease.  The
greedy detector chooses the best flip among all bits, so its chosen decrease
is at least as large.
\end{enumerate}
This section derives the exact identities from which
\Cref{sec:outer-region,sec:inner-region} will prove these two facts.

\subsection{Normalize the truth and encode the current errors}

The channel distribution is invariant under column sign changes.  We first
use this symmetry to normalize the transmitted word, and then represent any
current iterate solely by the set of coordinates on which it is wrong.  This
notation makes the two quantities needed by descent explicit.

Multiplying column \(i\) of \(\boldsymbol H\) by \(x_i^\star\) preserves its
standard Gaussian distribution.  It maps the transmitted vector to
\(\one\), preserves Hamming distance, and maps every bit-flip gain to the
corresponding gain in the transformed problem.  The LMMSE initialization
transforms in the same way almost surely.  Indeed, the unrounded LMMSE
coordinates have continuous distributions, so the convention used for
\(\sign(0)\) is encountered with probability zero.  We may therefore prove
the remaining statements for
\[
 \boldsymbol x^\star=\one.
\]

For a current vector \(\boldsymbol x\), let
\begin{equation}\label{eq:error-set}
 S=\{i:x_i=-1\}
\end{equation}
be its set of wrong coordinates, and write \(\boldsymbol h_i\) for column
\(i\) of \(\boldsymbol H\).  Every coordinate in \(S\) equals \(-1\) instead
of \(+1\), so its error has size two.  Therefore
\[
 \boldsymbol H(\one-\boldsymbol x)
 =2\sum_{i\in S}\boldsymbol h_i.
\]
Substituting the channel model gives the current residual:
\begin{equation}\label{eq:residual}
 \begin{aligned}
 \boldsymbol y-\sqrt{\frac\rho N}\boldsymbol H\boldsymbol x
 &=
 \boldsymbol w+
 \sqrt{\frac\rho N}\boldsymbol H(\one-\boldsymbol x)\\
 &=
 \boldsymbol w+2\sqrt{\frac\rho N}
 \sum_{i\in S}\boldsymbol h_i.
 \end{aligned}
\end{equation}
It is the original noise plus the channel contribution of the wrong bits.

\subsection{First question: how far above the truth is the objective?}

At the truth, the residual is \(\boldsymbol w\), and hence
\[
 f(\one)=\|\boldsymbol w\|_2^2.
\]
The following identity compares the current objective with that value.

\begin{lemma}[Exact objective gap]\label{lem:objective-gap}
For every error set \(S\),
\begin{equation}\label{eq:gap-identity}
 f(\boldsymbol x)-f(\one)
 =\frac{4\rho}{N}
 \left\|\sum_{i\in S}\boldsymbol h_i\right\|_2^2
+4\sqrt{\frac\rho N}\,
 \boldsymbol w^{\mathsf T}\sum_{i\in S}\boldsymbol h_i.
\end{equation}
\end{lemma}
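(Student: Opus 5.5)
The identity is a direct expansion of two squared norms, using the residual formula \eqref{eq:residual} and the value of the objective at the truth. Write
\[
 \boldsymbol v=\sum_{i\in S}\boldsymbol h_i .
\]
By \eqref{eq:residual}, the residual at the current vector \(\boldsymbol x\) is \(\boldsymbol w+2\sqrt{\rho/N}\,\boldsymbol v\). Hence, from the definition \eqref{eq:objective},
\[
 f(\boldsymbol x)=\left\|\boldsymbol w+2\sqrt{\frac\rho N}\,\boldsymbol v\right\|_2^2 .
\]
At the truth \(\boldsymbol x=\one\) the error set is empty, so \(\boldsymbol v=\zero\) and \(f(\one)=\|\boldsymbol w\|_2^2\), as noted just before the lemma.

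The plan has three steps.

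\textbf{Step 1: expand the square.} Expanding gives
\[
 f(\boldsymbol x)=\|\boldsymbol w\|_2^2+4\sqrt{\frac\rho N}\,\boldsymbol w^{\mathsf T}\boldsymbol v+\frac{4\rho}{N}\|\boldsymbol v\|_2^2 .
\]

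\textbf{Step 2: subtract the value at the truth.} Subtracting \(f(\one)=\|\boldsymbol w\|_2^2\) cancels the pure-noise term and leaves exactly \eqref{eq:gap-identity}.

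\textbf{Step 3: check the normalization.} The only point needing care is the reduction to \(\boldsymbol x^\star=\one\). In the general case \(\boldsymbol h_i=x_i^\star\boldsymbol H_{:,i}\), and column-sign symmetry maps the problem to \(\boldsymbol x^\star=\one\) while preserving \(f\) exactly. So the identity holds deterministically, for every realization of \((\boldsymbol H,\boldsymbol w)\) and every error set \(S\).

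There is no real obstacle: no probabilistic input is used. The one thing to watch is the factor \(2\) per wrong coordinate, coming from \(\boldsymbol H(\one-\boldsymbol x)=2\boldsymbol v\). This factor is what produces the constants \(4\rho/N\) and \(4\sqrt{\rho/N}\) in \eqref{eq:gap-identity}.
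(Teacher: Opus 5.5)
Your proof is correct and is essentially the paper's own argument: substitute the residual \eqref{eq:residual}, expand the square, and subtract \(f(\one)=\|\boldsymbol w\|_2^2\). Step 3 is not needed, because the paper normalizes to \(\boldsymbol x^\star=\one\) before stating the lemma; it is also harmless, since the identity holds deterministically for general \(\boldsymbol x^\star\) with \(\boldsymbol h_i=x_i^\star\boldsymbol H_{:,i}\).
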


\begin{proof}
Insert the residual from \eqref{eq:residual}, expand the square, and subtract
the objective at the truth:
\begin{align*}
 f(\boldsymbol x)-f(\one)
 &=
 \left\|\boldsymbol w+2\sqrt{\frac\rho N}
 \sum_{i\in S}\boldsymbol h_i\right\|_2^2
 -\|\boldsymbol w\|_2^2\\
 &=
 \frac{4\rho}{N}
 \left\|\sum_{i\in S}\boldsymbol h_i\right\|_2^2
+4\sqrt{\frac\rho N}\,
 \boldsymbol w^{\mathsf T}\sum_{i\in S}\boldsymbol h_i.
\end{align*}
\end{proof}

The first term is the positive signal penalty caused by the wrong bits.  For
a typical Gaussian channel it is approximately \(4\rho|S|\).  The second
term is the projection of the noise onto the same sum of channel columns; it
can have either sign.  The later probability lemmas will show uniformly that
the noise term cannot cancel the signal penalty.  That will place every
controlled nontruth vector strictly above the truth.

\subsection{Second question: does some wrong bit offer progress?}

For \(i\in S\), let \(\boldsymbol x^{(i)}\) be the vector obtained by
correcting wrong bit \(i\).  Correcting it subtracts
\(2\sqrt{\rho/N}\boldsymbol h_i\) from the residual.  Expanding this single
change gives its exact objective improvement:
\begin{equation}\label{eq:one-wrong-gain}
\begin{aligned}
 f(\boldsymbol x)-f(\boldsymbol x^{(i)})
 &=
 4\sqrt{\frac\rho N}\boldsymbol h_i^{\mathsf T}\boldsymbol w
 +\frac{8\rho}{N}\boldsymbol h_i^{\mathsf T}
   \sum_{j\in S}\boldsymbol h_j
 -\frac{4\rho}{N}\|\boldsymbol h_i\|_2^2.
\end{aligned}
\end{equation}

Controlling every expression in \eqref{eq:one-wrong-gain} separately would
be unnecessarily difficult.  Instead, sum the gains offered by all wrong
bits.  The cross terms then collapse into one squared norm.

\begin{lemma}[Exact total gain offered by the wrong bits]
\label{lem:total-wrong-gain}
For every nonempty error set \(S\),
\begin{equation}\label{eq:gain-identity}
\begin{aligned}
 &\sum_{i\in S}
 \bigl[f(\boldsymbol x)-f(\boldsymbol x^{(i)})\bigr]\\
 &\quad=4\Bigg[
 \frac\rho N\left(
 2\left\|\sum_{i\in S}\boldsymbol h_i\right\|_2^2
 -\sum_{i\in S}\|\boldsymbol h_i\|_2^2\right)
 +\sqrt{\frac\rho N}\,
 \boldsymbol w^{\mathsf T}\sum_{i\in S}\boldsymbol h_i
 \Bigg].
\end{aligned}
\end{equation}
\end{lemma}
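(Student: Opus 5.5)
The identity follows by summing the single-bit expansion \eqref{eq:one-wrong-gain} over \(i\in S\) and simplifying the three sums term by term. Write \(\boldsymbol s=\sum_{j\in S}\boldsymbol h_j\).

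First, I would justify \eqref{eq:one-wrong-gain} itself, since it is only displayed and not proved. By \eqref{eq:residual}, the residual at \(\boldsymbol x\) is \(\boldsymbol r=\boldsymbol w+2\sqrt{\rho/N}\,\boldsymbol s\). Correcting bit \(i\in S\) changes \(x_i\) from \(-1\) to \(+1\). This subtracts \(2\sqrt{\rho/N}\,\boldsymbol h_i\) from the residual. Hence
\[
f(\boldsymbol x)-f(\boldsymbol x^{(i)})
=\|\boldsymbol r\|_2^2-\Bigl\|\boldsymbol r-2\sqrt{\tfrac\rho N}\boldsymbol h_i\Bigr\|_2^2
=4\sqrt{\tfrac\rho N}\,\boldsymbol h_i^{\mathsf T}\boldsymbol r-\tfrac{4\rho}{N}\|\boldsymbol h_i\|_2^2 .
\]
Substituting \(\boldsymbol r\) gives exactly the three terms of \eqref{eq:one-wrong-gain}.

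Second, I would sum over \(i\in S\), using linearity in each term:
\begin{itemize}
\item The noise term gives \(4\sqrt{\rho/N}\,\boldsymbol w^{\mathsf T}\boldsymbol s\).
\item The cross term gives \((8\rho/N)\bigl(\sum_{i\in S}\boldsymbol h_i\bigr)^{\mathsf T}\boldsymbol s=(8\rho/N)\|\boldsymbol s\|_2^2\). This is where the cross terms collapse into one squared norm.
\item The last term gives \(-(4\rho/N)\sum_{i\in S}\|\boldsymbol h_i\|_2^2\).
\end{itemize}
Factoring out \(4\) yields \eqref{eq:gain-identity}.

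There is no real obstacle here. The only care needed is the sign bookkeeping: \(\boldsymbol x^\star=\one\) after the column-sign normalization, and each wrong coordinate contributes an error of size two.
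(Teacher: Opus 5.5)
Your proposal is correct and matches the paper's proof: you sum the one-bit expansion \eqref{eq:one-wrong-gain} over \(i\in S\), the noise and column-norm terms add directly, and the cross terms collapse into \(\|\sum_{i\in S}\boldsymbol h_i\|_2^2\). Your explicit derivation of \eqref{eq:one-wrong-gain} from the residual \eqref{eq:residual} is also correct; the paper states that expansion without working it out.
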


\begin{proof}
Sum the individual identity \eqref{eq:one-wrong-gain} over \(i\in S\).
The noise terms sum directly.  For the middle terms,
\[
 \sum_{i\in S}\boldsymbol h_i^{\mathsf T}
 \sum_{j\in S}\boldsymbol h_j
 =
 \left(\sum_{i\in S}\boldsymbol h_i\right)^{\mathsf T}
 \left(\sum_{j\in S}\boldsymbol h_j\right)
 =
 \left\|\sum_{i\in S}\boldsymbol h_i\right\|_2^2.
\]
Substitution gives \eqref{eq:gain-identity}.
\end{proof}

\begin{lemma}[From total gain to one profitable wrong bit]
\label{lem:average-wrong-gain}
If the right side of \eqref{eq:gain-identity} is at least \(a|S|\), then
some wrong bit \(i\in S\) satisfies
\[
 f(\boldsymbol x)-f(\boldsymbol x^{(i)})\ge a.
\]
The maximal available gain is therefore also at least \(a\).  If
\(a\ge1/N\), the detector accepts a flip with gain at least \(a\).
\end{lemma}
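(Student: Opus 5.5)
The plan is a pigeonhole (averaging) argument applied to the exact total-gain identity of \Cref{lem:total-wrong-gain}, followed by unpacking the detector's stopping and selection rules.

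First, I would note that the left side of \eqref{eq:gain-identity} is the sum
\[
 \sum_{i\in S}\Delta_i(\boldsymbol x),
 \qquad
 \Delta_i(\boldsymbol x)=f(\boldsymbol x)-f(\boldsymbol x^{(i)}),
\]
taken over the \(|S|\) wrong coordinates. For \(i\in S\), the vector \(\boldsymbol x^{(i)}\) obtained by correcting bit \(i\) is exactly the one-bit flip used by the detector. The hypothesis says this sum is at least \(a|S|\). If every \(i\in S\) had \(\Delta_i(\boldsymbol x)<a\), then summing the \(|S|\ge1\) strict inequalities would give a total strictly less than \(a|S|\), a contradiction. Hence some \(i\in S\) satisfies \(\Delta_i(\boldsymbol x)\ge a\). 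Nonemptiness of \(S\) is needed here and is part of the setting of \Cref{lem:total-wrong-gain}.

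Second, because this \(i\) is one of the \(N\) coordinates the detector examines, we get
\[
 \max_{1\le j\le N}\Delta_j(\boldsymbol x)\ge\Delta_i(\boldsymbol x)\ge a .
\]
This gives the claim about the maximal available gain.

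Third, suppose \(a\ge1/N\). Then the maximum is at least \(1/N\), so the stopping test \(\max_j\Delta_j<1/N\) fails. The detector therefore flips the smallest index \(j^\ast\) attaining the maximum, and its gain satisfies \(\Delta_{j^\ast}(\boldsymbol x)=\max_j\Delta_j\ge a\). This holds whether or not \(j^\ast\) lies in \(S\).

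One caveat is that the flip-limit rule can also end the run. I would state the conclusion as applying whenever the flip budget \(M_N\) has not yet been exhausted, since at that point the detector returns unconditionally. This is presumably the intended reading, and the budget is handled separately in \Cref{prop:descent}.

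There is no real obstacle in this argument. The only points needing care are the averaging step over a nonempty \(S\), and keeping track of the fact that the accepted flip need not be a wrong bit while its gain is still at least \(a\).
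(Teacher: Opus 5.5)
Your proof is correct and matches the paper's argument: you average over the nonempty set $S$ to find a wrong bit with gain at least $a$, and because the detector maximizes over all $N$ coordinates, its stopping test fails when $a\ge 1/N$ and its selected gain is at least $a$. Your caveat about the flip limit $M_N$ is a valid precision that the paper leaves implicit here, since it handles the flip budget separately in \Cref{prop:descent}.
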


\begin{proof}
The left side of \eqref{eq:gain-identity} is a sum of \(|S|\) wrong-bit
gains.  At least one summand is at least their average.  The detector
maximizes the gain over all \(N\) bits, so its chosen gain cannot be smaller
than that particular wrong-bit gain.
\end{proof}

\begin{keybox}
\textbf{Exact-algebra stage complete.}
The two exact identities depend only on the length of
\(\sum_{i\in S}\boldsymbol h_i\), the individual column lengths, and the
noise projection onto that column sum.
\Cref{sec:outer-region,sec:inner-region} control these quantities uniformly
over all relevant error sets.  In the outer region they
will produce a wrong-bit gain of at least \(\rho/2\); in the inner region,
at least \(4\rho/(\log N)^{5/4}\).  \Cref{sec:descent} will use the objective
gap as a barrier and as the finite budget that limits the total number of
flips.
\end{keybox}

\section{The outer region: constant-factor bounds are enough}\label{sec:outer-region}

The rounded LMMSE estimate begins with at most
\(\lceil N/(\log N)^{1/4}\rceil\) wrong bits.  Greedy descent need not reduce Hamming
distance at every step, so we prove the landscape estimates on the larger
ball extending to \(\lceil4N/(\log N)^{1/4}\rceil\) wrong bits.  The factor four is
chosen later to create an objective barrier; it is not a statistical
threshold.

We split this ball into two cardinality ranges.  This section treats
\[
 \left\lceil\frac{N}{(\log N)^3}\right\rceil
 <|S|\le
 \left\lceil\frac{4N}{(\log N)^{1/4}}\right\rceil.
\]
We call this the \emph{outer region}.  The word ``outer'' refers only to
Hamming distance from the transmitted word.

For a fixed number \(k\) of wrong bits, a binary vector is determined by
choosing which \(k\) coordinates are wrong.  Therefore the number of vectors
at Hamming distance exactly \(k\) from the truth is
\[
 \binom Nk.
\]
For every \(1\le k\le N\), the standard counting inequality gives
\[
 \binom Nk\le\left(\frac{eN}{k}\right)^k,
 \qquad
 \log\binom Nk\le k\log\frac{eN}{k}.
\]
Here ``the logarithm of the number of sets'' means the logarithm of the
integer \(\binom Nk\); it is not a logarithm applied to a set.  The relevant
sets are precisely the possible error sets \(S\) in the Hamming range under
discussion.

In the outer region, \(k>N/(\log N)^3\), so
\[
 \frac{N}{k}< (\log N)^3.
\]
Consequently,
\[
 \log\binom Nk
 \le k\log\frac{eN}{k}
 \le k\log\!\bigl(e(\log N)^3\bigr)
 =k(1+3\log\log N).
\]
Equivalently, the number of vectors at distance \(k\) is at most
\[
 \exp\{k(1+3\log\log N)\}.
\]
This is the origin of the expression
\(e^{O(k\log\log N)}\).  There are at most \(N\) possible integer values of
\(k\), because \(k\in\{1,\ldots,N\}\).  This elementary observation is what
later produces a multiplicative factor \(N\) when we sum bounds over all
cardinalities.

Two deliberately coarse estimates suffice here.  First, the squared norm of
the erroneous channel sum
\(\|\sum_{i\in S}\boldsymbol h_i\|_2^2\) is within fixed constant factors of
its mean \(N|S|\).  For one fixed set this estimate fails with probability
\(e^{-cN}\), and there are only \(\exp\{o(N)\}\) sets in the controlled ball.
Second, for one fixed set the probability that its noise projection is too
large is \(e^{-c|S|\log N}\).  There are only
\(e^{O(|S|\log\log N)}\) sets of that size in this region.  Thus both union
bounds have ample slack.

\paragraph{Why two concentration regimes are needed.}
The sharper channel estimate used in \Cref{sec:inner-region} has relative
error \((\log N)^{-5/4}\).  Instead of merely
requiring the squared channel-sum length to lie between fixed multiples of
its mean \(N|S|\), we require it to lie in
\[
 \left[1-\frac1{(\log N)^{5/4}},
       1+\frac1{(\log N)^{5/4}}\right]N|S|.
\]
For one fixed set \(S\), the probability of violating this interval is at
most
\[
 2\exp\left\{-\frac{N}{8(\log N)^{5/2}}\right\}.
\]
The positive quantity \(N/[8(\log N)^{5/2}]\) appearing after the minus sign
is commonly called the concentration exponent or failure exponent.  We use
the phrase only as shorthand for this displayed probability bound.

If the inner region ended at
\(N/(\log N)^a\), the logarithm of the number of relevant sets would be of
order \(N\log\log N/(\log N)^a\).  The latter is negligible compared with
the former exactly when \(a>5/2\).  We use the simple nearby choice \(a=3\).
The split is not intrinsic: any fixed \(a>5/2\) would work.  The choice
\(a=5/2\) misses by a \(\log\log N\) factor; a value such as \(a=11/2\)
would work but would make the sharp inner region unnecessarily small.

\begin{lemma}[Uniform length of erroneous channel sums]\label{lem:outer-channel}
With probability tending to one, simultaneously for every nonempty set with
\[
 |S|\le\left\lceil\frac{4N}{(\log N)^{1/4}}\right\rceil,
\]
we have
\begin{equation}\label{eq:outer-channel}
 \frac34N|S|
 \le\left\|\sum_{i\in S}\boldsymbol h_i\right\|_2^2
 \le\frac54N|S|.
\end{equation}
In particular, \(\|\boldsymbol h_i\|_2^2\le5N/4\) for every column.
\end{lemma}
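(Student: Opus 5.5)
For a fixed nonempty set \(S\) with \(|S|=k\), the vector \(\sum_{i\in S}\boldsymbol h_i\) is distributed as \(\sqrt{k}\,\boldsymbol g\) with \(\boldsymbol g\sim\mathcal N(\zero,\boldsymbol I_N)\). Hence
\[
 \left\|\sum_{i\in S}\boldsymbol h_i\right\|_2^2\sim k\,\chi^2_N .
\]
The plan is to apply the chi-square tail \eqref{eq:chi-tail} with \(\varepsilon=1/4\). This shows that, for one fixed set, \eqref{eq:outer-channel} fails with probability at most \(2\exp(-N/128)\). Crucially, this fixed-set bound does not depend on \(k\).

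Next we take a union bound (S3) over all sets of cardinality \(1\le k\le K:=\lceil 4N/(\log N)^{1/4}\rceil\). Their number is at most
\[
 \sum_{k\le K}\binom Nk\le N\max_{k\le K}\left(\frac{eN}{k}\right)^k .
\]
The function \(k\mapsto k\log(eN/k)\) is increasing for \(k\le N\), so the maximum is attained at \(k=K\). Since \(N/K\le(\log N)^{1/4}/4\), we get
\[
 \log\binom NK\le K\log\bigl(e(\log N)^{1/4}\bigr)
 =O\!\left(\frac{N\log\log N}{(\log N)^{1/4}}\right)=o(N).
\]
Adding the factor \(N\) for the range of cardinalities contributes only \(\log N\) more to the exponent. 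The total failure probability is therefore at most
\[
 2\exp\{-N/128+o(N)\}\longrightarrow0.
\]
This gives \eqref{eq:outer-channel} simultaneously for every set in the range.

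The column-norm statement is the special case \(|S|=1\): each singleton is one of the sets already covered, so \(\|\boldsymbol h_i\|_2^2\le 5N/4\) holds on the same event. No extra union bound is needed.

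The only step needing care is checking that the entropy \(\log\binom NK\) is \(o(N)\) at the largest radius. This is where the \((\log N)^{1/4}\) factor in \(K\) matters: the entropy is of order \(N\log\log N/(\log N)^{1/4}\), which beats a constant-\(\varepsilon\) concentration exponent of order \(N\). I expect no real obstacle; the main task is to track the constants for large \(N\).
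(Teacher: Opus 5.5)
Your proposal is correct and follows essentially the same route as the paper: a fixed-set chi-square bound with $\varepsilon=1/4$ giving failure probability $2e^{-N/128}$, then a union bound. The union bound uses monotonicity of $k\log(eN/k)$ to bound the number of sets by $N\exp\{O(N\log\log N/(\log N)^{1/4})\}=e^{o(N)}$, and the singleton case gives the column norms; the monotonicity step implicitly needs $K<N$, which holds for all sufficiently large $N$, as the paper also notes.
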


\begin{proof}
Fix one nonempty set \(S\).  Since the columns are independent standard
Gaussians,
\[
 \sum_{i\in S}\boldsymbol h_i
 \sim\mathcal N(\boldsymbol 0,|S|\boldsymbol I_N).
\]
It follows that
\[
 \frac1{|S|}\left\|\sum_{i\in S}\boldsymbol h_i\right\|_2^2
 \sim\chi_N^2.
\]
The chi-square bound with relative error \(1/4\) says that this one fixed
set violates the interval
\([3N/4,5N/4]\) with probability at most \(2e^{-N/128}\).

We now count all sets for which this bound must hold.  Put
\[
 m=\left\lceil\frac{4N}{(\log N)^{1/4}}\right\rceil.
\]
This abbreviation is used only in the present counting paragraph.  Since
\(k\mapsto k\log(eN/k)\) is increasing for \(k<N\), for every allowed \(k\),
\[
 \binom Nk
 \le \exp\left\{
 m\log\left(\frac{eN}{m}\right)
 \right\}
\]
for all sufficiently large \(N\), when \(m<N\).  There are at most \(N\) allowed values of
\(k\).  Therefore the total number of nonempty sets in the stated range is at
most \(N\) times the preceding display.  Taking its logarithm gives
\[
 O\left(
 \frac{N\log\log N}{(\log N)^{1/4}}
 \right)=o(N).
\]
The union bound over all those sets is consequently at most
\[
 2\exp\left\{
 O\left(\frac{N\log\log N}{(\log N)^{1/4}}\right)
 -\frac{N}{128}
 \right\},
\]
which tends to zero because
\(N\log\log N/(\log N)^{1/4}=o(N)\).

Finally take a singleton set \(S=\{i\}\).  Then
\(\sum_{j\in S}\boldsymbol h_j=\boldsymbol h_i\).  The upper half of the
same uniform bound gives
\[
 \|\boldsymbol h_i\|_2^2\le\frac54N
\]
simultaneously for every column \(i\).  This is the ``individual-column
statement'' used later when bounding
\(\sum_{i\in S}\|\boldsymbol h_i\|_2^2\).
\end{proof}

\begin{lemma}[Uniform control of outer-region noise projections]\label{lem:outer-noise}
There is one event, depending only on \(\boldsymbol H\) and \(\boldsymbol w\),
whose probability tends to one and on which the following inequality holds
for every \(\rho\ge2\log N\) and every set in the outer region:
\begin{equation}\label{eq:outer-noise}
 \left|\sqrt{\frac\rho N}\,
 \boldsymbol w^{\mathsf T}\sum_{i\in S}\boldsymbol h_i\right|
 \le\frac{\rho|S|}{8}.
\end{equation}
\end{lemma}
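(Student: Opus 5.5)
First remove the dependence on \(\rho\), so that one event serves every SNR. Dividing \eqref{eq:outer-noise} by \(\sqrt\rho\) turns it into
\[
 \left|\boldsymbol w^{\mathsf T}\sum_{i\in S}\boldsymbol h_i\right|
 \le\frac{\sqrt{\rho N}\,|S|}{8}.
\]
The right side increases with \(\rho\). It therefore suffices to prove the inequality at \(\rho=2\log N\), namely
\[
 \left|\boldsymbol w^{\mathsf T}\boldsymbol s_S\right|\le\frac{|S|\sqrt{2N\log N}}{8},
 \qquad \boldsymbol s_S=\sum_{i\in S}\boldsymbol h_i .
\]
This event depends only on \((\boldsymbol H,\boldsymbol w)\) and implies \eqref{eq:outer-noise} for every \(\rho\ge2\log N\).

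Next, condition on \(\boldsymbol H\) and intersect with the event of \Cref{lem:outer-channel}. On that event, \(\|\boldsymbol s_S\|_2^2\le\frac54N|S|\) for every set in the outer range. Given \(\boldsymbol H\), the projection \(\boldsymbol w^{\mathsf T}\boldsymbol s_S\) is \(\mathcal N(0,\|\boldsymbol s_S\|_2^2)\), because \(\boldsymbol w\) is independent of \(\boldsymbol H\). The Gaussian tail \eqref{eq:gaussian-tail} then bounds the failure probability for one fixed set by
\[
 2\exp\!\left(-\frac{|S|^2\,2N\log N/64}{2\cdot\frac54 N|S|}\right)
 =2\exp\!\left(-\frac{|S|\log N}{80}\right).
\]
This matches the fixed-set rate quoted in the roadmap.

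Finally, take a union bound over the outer range \(\lceil N/(\log N)^3\rceil<k\le\lceil4N/(\log N)^{1/4}\rceil\). The counting estimate already derived gives at most \(\exp\{k(1+3\log\log N)\}\) sets of size \(k\), so the total conditional failure probability is at most
\[
 \sum_k 2\exp\!\left\{-k\left(\frac{\log N}{80}-1-3\log\log N\right)\right\}
 \le\sum_{k\ge1}2e^{-k\log N/160}.
\]
For large \(N\) this is \(O(N^{-1/160})\to0\). Here the outer lower limit on \(k\) is not even needed, since every \(k\ge1\) contributes; the \(\log\log N\) control on the count is what matters. Adding the failure probability of \Cref{lem:outer-channel} and averaging over \(\boldsymbol H\) completes the argument.

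The main obstacle is making sure the exponent \(\log N/80\) beats the entropy term \(1+3\log\log N\) with room to spare. This is where the restriction \(k>N/(\log N)^3\) to the outer region is used, and it only works for large \(N\). A secondary subtlety is that conditioning on the channel event is legitimate because of the independence of \(\boldsymbol w\) and \(\boldsymbol H\).
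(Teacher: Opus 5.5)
Your proof is correct and follows the paper's argument essentially step for step: you reduce to $\rho=2\log N$ by monotonicity of the right-hand side, condition on the coarse channel event of \Cref{lem:outer-channel} to get the fixed-set tail $2e^{-|S|\log N/80}$, and beat the counting exponent $k(1+3\log\log N)$ with a union bound, and that counting exponent is where $k>N/(\log N)^3$ enters. The only difference is in the final summation. You sum a geometric series and obtain an $O(N^{-1/160})$ failure bound without using the lower limit on $k$ a second time. The paper instead bounds the sum by $N$ times the largest term and reuses $k>N/(\log N)^3$ to get the much smaller $2N\exp\{-N/(160(\log N)^2)\}$; either bound suffices.
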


\begin{proof}
Condition on a channel for which every erroneous column sum has the length
asserted in the preceding lemma.  Fix one set \(S\) in the outer region.  The
vector \(\sum_{i\in S}\boldsymbol h_i\) is now fixed, while
\(\boldsymbol w\) is still standard Gaussian.  Therefore
\[
 \sqrt{\frac\rho N}\,
 \boldsymbol w^{\mathsf T}\sum_{i\in S}\boldsymbol h_i
\]
is a one-dimensional centered Gaussian.  Its variance is
\[
 \frac\rho N
 \left\|\sum_{i\in S}\boldsymbol h_i\right\|_2^2,
\]
which is at most
\[
 \frac\rho N\cdot\frac54N|S|=\frac54\rho|S|.
\]
Thus its typical magnitude is of order \(\sqrt{\rho|S|}\).  We need it to
stay below \(\rho|S|/8\), which is of order \(\sqrt{\rho|S|}\) standard
deviations from zero.  Applying
\(\Pp\{|Z|\ge t\}\le2e^{-t^2/2}\) to this one fixed set gives
\begin{equation}\label{eq:outer-one-set}
 \Pp\left\{\left|\sqrt{\frac\rho N}\,
 \boldsymbol w^{\mathsf T}\sum_{i\in S}\boldsymbol h_i\right|
 >\frac{\rho|S|}{8}\ \middle|\ \boldsymbol H\right\}
 \le2\exp\left(-\frac{\rho|S|}{160}\right).
\end{equation}

We now ask whether this estimate holds for every set simultaneously.  At the
smallest allowed SNR, \(\rho=2\log N\), the failure exponent for one fixed
set is \(|S|\log N/80\).  Here ``one fixed set'' refers to the conditional
failure probability in the preceding display for one specified \(S\), before
we union-bound over the possible choices of \(S\).

The number of sets having this cardinality is \(\binom N{|S|}\).  Because
every outer-region cardinality satisfies
\(|S|>N/(\log N)^3\), the counting calculation given at the start of this
section gives
\[
 \log\binom N{|S|}
 \le |S|\log\frac{eN}{|S|}
 \le |S|(1+3\log\log N).
\]
The phrase ``for all sufficiently large \(N\)'' means that there exists a
fixed finite integer \(N_0\) such that the assertion holds for every
\(N\ge N_0\).  It is needed here because the asymptotic fact
\(\log\log N=o(\log N)\) does not imply the displayed numerical inequality
for every small or moderate \(N\).  For \(N\ge N_0\),
\[
 1+3\log\log N<\frac{\log N}{160}.
\]
Therefore, for one fixed cardinality \(k\) in the outer region,
\begin{align*}
 &\binom Nk\,2\exp\left(-\frac{k\log N}{80}\right)\\
 &\quad\le
 2\exp\left\{k(1+3\log\log N)-\frac{k\log N}{80}\right\}\\
 &\quad\le2\exp\left(-\frac{k\log N}{160}\right).
\end{align*}
There are fewer than \(N\) possible cardinalities because \(k\) is an integer
between one and \(N\); this does not mean that their number is \(o(N)\).
Summing the last bound over those cardinalities and using
\(k>N/(\log N)^3\) gives
\[
 2N\exp\left(
 -\frac{N\log N}{160(\log N)^3}
 \right)=o(1).
\]
This tends to zero.  Increasing \(\rho\) makes the required threshold larger
in standard-deviation units: after dividing both sides of the desired bound
by \(\sqrt\rho\), its right-hand side grows as \(\sqrt\rho\).  Hence the
event proved at \(\rho=2\log N\) automatically implies the corresponding
event for every larger \(\rho\) on the same realization.  This is the precise
meaning of ``uniformly over \(\rho\ge2\log N\).''

If the channel-sum event from \Cref{lem:outer-channel} is denoted in words by ``all coarse
channel bounds hold,'' the total failure probability in this lemma is at most
\[
 \Pp\{\text{some coarse channel bound fails}\}
 +2N\exp\left(-\frac{N\log N}{160(\log N)^3}\right).
\]
The first probability tends to zero by \Cref{lem:outer-channel}, and the second displayed
quantity tends to zero by direct inspection.  This supplies an explicit
probability bound rather than only the phrase ``with probability tending to
one.''
\end{proof}

\begin{lemma}[Excess cost and one-step progress in the outer region]
\label{lem:outer-gain}
Suppose the following two statements hold simultaneously for every relevant
error set: the squared channel-sum norm lies between
\((3/4)N|S|\) and \((5/4)N|S|\), and the magnitude of the scaled noise
projection is at most \(\rho|S|/8\).  The preceding two lemmas show that this
joint event has probability tending to one.  On this event, consider any
binary vector whose error set satisfies
\[
 \left\lceil\frac{N}{(\log N)^3}\right\rceil
 <|S|\le
 \left\lceil\frac{4N}{(\log N)^{1/4}}\right\rceil.
\]
Then
\begin{equation}\label{eq:outer-gap}
 \frac52\rho|S|
 \le f(\boldsymbol x)-f(\one)
 \le\frac{11}{2}\rho|S|,
\end{equation}
and
\begin{equation}\label{eq:outer-total-gain}
 \sum_{i\in S}\bigl[f(\boldsymbol x)-f(\boldsymbol x^{(i)})\bigr]
 \ge\frac12\rho|S|.
\end{equation}
Hence correcting at least one currently wrong bit decreases the cost by at
least \(\rho/2\).
\end{lemma}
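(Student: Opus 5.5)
Everything here is deterministic once the joint event is fixed. I would substitute the assumed bounds into the two exact identities, \Cref{lem:objective-gap} for the excess cost and \Cref{lem:total-wrong-gain} for the summed wrong-bit gain. No further probability is required. Write \(k=|S|\) and \(\boldsymbol s=\sum_{i\in S}\boldsymbol h_i\).

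\textbf{Objective gap.} By \eqref{eq:gap-identity}, \(f(\boldsymbol x)-f(\one)=\frac{4\rho}{N}\|\boldsymbol s\|_2^2+4\sqrt{\rho/N}\,\boldsymbol w^{\mathsf T}\boldsymbol s\). The channel bound \eqref{eq:outer-channel} places the first term in \([3\rho k,5\rho k]\). The noise bound \eqref{eq:outer-noise} places the second term in \([-\rho k/2,\rho k/2]\). Adding these ranges gives \(\frac52\rho k\le f(\boldsymbol x)-f(\one)\le\frac{11}{2}\rho k\), which is \eqref{eq:outer-gap}.

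\textbf{Total gain.} By \eqref{eq:gain-identity}, the summed gain equals
\[
\frac{4\rho}{N}\Bigl(2\|\boldsymbol s\|_2^2-\sum_{i\in S}\|\boldsymbol h_i\|_2^2\Bigr)+4\sqrt{\rho/N}\,\boldsymbol w^{\mathsf T}\boldsymbol s .
\]
The lower channel bound gives \(2\|\boldsymbol s\|_2^2\ge\frac32Nk\). The singleton case of \Cref{lem:outer-channel} gives \(\|\boldsymbol h_i\|_2^2\le\frac54N\) for every column, so \(\sum_{i\in S}\|\boldsymbol h_i\|_2^2\le\frac54Nk\). The bracket is therefore at least \(\frac14Nk\), and the signal part is at least \(\rho k\). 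The noise part is at least \(-\rho k/2\). The summed gain is thus at least \(\frac12\rho k\), which is \eqref{eq:outer-total-gain}.

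\textbf{Conclusion.} \Cref{lem:average-wrong-gain} with \(a=\rho/2\) then produces a wrong bit whose correction lowers the cost by at least \(\rho/2\).

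\textbf{Where care is needed.} The only delicate point is bookkeeping rather than a real obstacle: the two events must be the ones holding simultaneously for all relevant sets and all \(\rho\ge2\log N\). The channel event is independent of \(\rho\), and \Cref{lem:outer-noise} was proved uniformly in \(\rho\) on one realization. The singleton column bound is a single-set case that is also covered by \Cref{lem:outer-channel}. The constant margin \(\frac14Nk\) is positive, so no tuning is needed.
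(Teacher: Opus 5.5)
Your proposal is correct and follows the paper's proof essentially line for line. You substitute the coarse channel bounds, including the singleton column bound from \Cref{lem:outer-channel}, and the noise bound into \eqref{eq:gap-identity} and \eqref{eq:gain-identity}, obtain the coefficients $5/2$, $11/2$ and $4\bigl(2\cdot\tfrac34-\tfrac54-\tfrac18\bigr)=\tfrac12$, and then average over the $|S|$ wrong bits via \Cref{lem:average-wrong-gain}.
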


\begin{proof}
The exact objective-gap identity \eqref{eq:gap-identity} is
\[
 f(\boldsymbol x)-f(\one)
 =\frac{4\rho}{N}
 \left\|\sum_{i\in S}\boldsymbol h_i\right\|_2^2
 +4\sqrt{\frac\rho N}\,
 \boldsymbol w^{\mathsf T}\sum_{i\in S}\boldsymbol h_i.
\]
The channel-sum norm lies between \((3/4)N|S|\) and
\((5/4)N|S|\), and the scaled noise projection has magnitude at most
\(\rho|S|/8\).  The smallest possible coefficient of \(\rho|S|\) is
\[
 4\left(\frac34-\frac18\right)=\frac52,
\]
and the largest possible coefficient is
\(4(5/4+1/8)=11/2\).  This proves the displayed objective bounds.

For progress, recall the exact identity for the sum of all gains offered by
the currently wrong bits:
\[
4\left[
 \frac\rho N\left(
 2\left\|\sum_{i\in S}\boldsymbol h_i\right\|_2^2
 -\sum_{i\in S}\|\boldsymbol h_i\|_2^2\right)
 +\sqrt{\frac\rho N}\,
 \boldsymbol w^{\mathsf T}\sum_{i\in S}\boldsymbol h_i
\right].
\]
The gain identity contains three contributions inside its outer factor of
four: twice the squared channel-sum norm, minus the sum of the individual
column norms, plus the noise projection.  To obtain a lower bound, use the
lower channel-sum bound, the upper singleton bound
\(\sum_{i\in S}\|\boldsymbol h_i\|_2^2\le(5/4)N|S|\), and the most negative
allowed noise projection.  Substituting these three bounds directly into
\eqref{eq:gain-identity} gives
\[
 \begin{aligned}
 \sum_{i\in S}\Delta_i(\boldsymbol x)
 &\ge4\rho|S|\left(
    2\cdot\frac34-\frac54-\frac18\right)\\
 &=\frac12\rho|S|.
 \end{aligned}
\]
Thus the sum of the cost decreases obtained by correcting the \(|S|\)
currently wrong bits is at least
\((\rho/2)|S|\).  At least one of those gains is therefore at least their
average, \(\rho/2\).  Since the greedy detector chooses the largest cost
decrease among
all \(N\) coordinates, its next accepted step decreases the objective by at
least \(\rho/2\).  This statement guarantees objective progress; it does not
claim that every bit selected by the greedy rule must currently be wrong.
\end{proof}

\begin{keybox}
\textbf{Outer-region stage complete.}
Uniform constant-factor channel and noise bounds place every outer-region
vector strictly above the truth and certify a wrong-bit gain of at least
\(\rho/2\).  The lower objective bound at the outer boundary will become the
barrier that keeps the descent path inside the controlled ball.
\end{keybox}

\section{The inner region: sharp control near the truth}\label{sec:inner-region}

We now treat every nonempty error set satisfying
\[
 1\le |S|\le\left\lceil\frac{N}{(\log N)^3}\right\rceil.
\]
This is the \emph{inner region}.  It meets the outer region without a gap:
the boundary cardinality belongs to the inner region, and the next larger
cardinality belongs to the outer region.

The terminology can otherwise be misleading.  The inner region contains the
vectors \emph{closer} to the truth and therefore has the smaller error
cardinalities.  Its concentration conclusion is tighter, not weaker: the
channel norm must have relative error only \((\log N)^{-5/4}\).  Such a tight
statement can hold uniformly here precisely because the inner region contains
far fewer candidate error sets than the full controlled ball.

The exact numbers of candidate binary vectors are
\[
 \sum_{k=1}^{\lceil N/(\log N)^3\rceil}\binom Nk
 \quad\text{in the inner region}
\]
and
\[
 \sum_{k=\lceil N/(\log N)^3\rceil+1}^{
             \lceil4N/(\log N)^{1/4}\rceil}
 \binom Nk
 \quad\text{in the outer region}.
\]
Their logarithms are bounded respectively by
\[
 O\left(\frac{N\log\log N}{(\log N)^3}\right)
 \quad\text{and}\quad
 O\left(\frac{N\log\log N}{(\log N)^{1/4}}\right).
\]
The outer region has many more vectors; this is why it supports only the
coarser uniform channel estimate.

Before proving concentration inequalities, we identify the exact Gaussian
tail that matters.  Suppose temporarily that \(|S|=k\) and that the erroneous
channel sum has its typical squared length,
\[
 \left\|\sum_{i\in S}\boldsymbol h_i\right\|_2^2\approx Nk.
\]
The objective gap is
\[
 f(\boldsymbol x)-f(\one)
 =\frac{4\rho}{N}
 \left\|\sum_{i\in S}\boldsymbol h_i\right\|_2^2
 +4\sqrt{\frac\rho N}\,
 \boldsymbol w^{\mathsf T}\sum_{i\in S}\boldsymbol h_i.
\]
The positive signal term is therefore approximately \(4\rho k\).  For the
noise to cancel it, one would need approximately
\[
 \boldsymbol w^{\mathsf T}\sum_{i\in S}\boldsymbol h_i
 \le-\sqrt{\rho N}\,k.
\]
Conditional on the channel, the random variable on the left is Gaussian with
standard deviation approximately \(\sqrt{Nk}\).  Dividing the required
negative fluctuation by this standard deviation gives
\[
 \frac{-\sqrt{\rho N}\,k}{\sqrt{Nk}}=-\sqrt{\rho k}.
\]
Thus the dangerous one-set probability is approximately
\[
 Q(\sqrt{\rho k}).
\]
Here ``dangerous one-set probability'' means the probability, for one fixed
error set \(S\), that the negative noise projection is large enough to nearly
cancel the positive signal contribution to the excess cost.  If full
cancellation occurred, that particular nontruth vector could have cost no
larger than the truth.  The rigorous lemma controls both signs of the noise
because the later gain calculation also needs an upper-magnitude bound.

This calculation already reveals the constant two.  If
\(\rho=c\log N\), the Gaussian tail is, up to its prefactor,
\[
 Q(\sqrt{c k\log N})
 \approx\frac{N^{-ck/2}}{\sqrt{k\log N}}.
\]
There are approximately \(N^k/k!\) error sets of size \(k\).  Their total
union-bound contribution is therefore approximately
\[
 \frac{N^{k(1-c/2)}}{k!\sqrt{k\log N}}.
\]
For \(c>2\), a negative power of \(N\) remains.  For \(c=2\), the powers of
\(N\) cancel, but the Gaussian prefactor and \(1/k!\) still make the sum tend
to zero.  For \(c<2\), the power of \(N\) has the wrong sign; already the
one-bit contribution grows instead of vanishing.  The rigorous proof below adds a
small relative slack to this calculation but changes none of its leading
terms.

Constant-factor concentration would perturb the exponent \(ck/2\) by a
constant factor and would therefore lose the boundary \(c=2\).  We instead
keep relative error only \((\log N)^{-5/4}\).  For sets of size \(k\), this
changes the exponent by \(9k/(\log N)^{1/4}\).  The correction per unit
\(k\) tends to zero; the total correction need not.  Equivalently, it
contributes only an \(e^{o(k)}\) factor, which the factorial in the subset
count absorbs.

The achievability proof does not require a global comparison with all
\(2^N-1\) nontruth words.  It requires two properties only inside the
controlled Hamming ball: every nontruth vector there must lie above the
truth, and some correcting flip must lower its cost.  In
\Cref{sec:descent}, the objective barrier will prove that the algorithm never
leaves this ball, so no landscape information outside it is needed.

\paragraph{Why vanishing relative error is needed, and why the exponent
\(5/4\) is convenient.}
Suppose more generally that the relative-error level were \((\log N)^{-b}\).
Preserving the sharp Gaussian exponent without a constant loss is automatic
for \(b>1\).  The endpoint \(b=1\) leaves only an \(e^{Ck}\) factor, which
is still summable against \(k!\).  On the other hand, the
fixed-set chi-square exponent would be
\(N/(\log N)^{2b}\).  To union-bound over the inner sets, this must dominate
their counting exponent \(N\log\log N/(\log N)^3\), which holds when
\(b<3/2\).  Any fixed choice
\[
 1\le b<\frac32
\]
works.  The exponent \(5/4\) is the midpoint of this admissible interval and
keeps the later arithmetic simple; it is not a fundamental constant.

\subsection{Establish uniform sharp channel geometry}

The threshold calculation requires the squared channel-sum norm to remain
within a vanishing relative error of its mean \(N|S|\).  We first prove that
this sharper channel event holds simultaneously for every inner error set;
the following subsection will then condition on it to control the noise.

\begin{lemma}[Sharp concentration of the erroneous channel sum]
\label{lem:inner-channel}
For all sufficiently large \(N\), with probability tending to one,
simultaneously for every nonempty set with
\[
 |S|\le\left\lceil\frac{N}{(\log N)^3}\right\rceil,
\]
we have
\begin{equation}\label{eq:inner-channel}
 \left(1-\frac1{(\log N)^{5/4}}\right)N|S|
 \le\left\|\sum_{i\in S}\boldsymbol h_i\right\|_2^2
 \le
 \left(1+\frac1{(\log N)^{5/4}}\right)N|S|.
\end{equation}
\end{lemma}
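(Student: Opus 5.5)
The plan mirrors the proof of \Cref{lem:outer-channel}: one chi-square bound for each fixed set, followed by a union bound over the inner sets. The only change is that the relative error is now vanishing instead of constant.

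\textbf{Fixed-set bound.} Fix a nonempty set \(S\) with \(|S|=k\). Because the columns are independent standard Gaussians, \(\sum_{i\in S}\boldsymbol h_i\sim\mathcal N(\zero,k\boldsymbol I_N)\), so \(k^{-1}\|\sum_{i\in S}\boldsymbol h_i\|_2^2\sim\chi^2_N\). Apply the chi-square tail \eqref{eq:chi-tail} with \(\varepsilon=(\log N)^{-5/4}\). This is admissible because \(0<\varepsilon<1\) for \(N\ge3\). The probability that this one set violates \eqref{eq:inner-channel} is then at most
\[
 2\exp\left\{-\frac{N}{8(\log N)^{5/2}}\right\},
\]
which is the fixed-set exponent already announced before the lemma.

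\textbf{Counting the inner sets.} Put \(r=\lceil N/(\log N)^3\rceil\). For \(N\) large we have \(r<N/e\), so the map \(k\mapsto k\log(eN/k)\) is increasing on \(1\le k\le r\). By the subset count \eqref{eq:subset-count}, every inner cardinality satisfies
\[
 \binom Nk\le\exp\{r\log(eN/r)\}.
\]
Since \(N/r\le(\log N)^3\), this gives \(\log\binom Nk\le r(1+3\log\log N)\). There are at most \(N\) cardinalities, so the total number of inner sets is at most \(N\exp\{r(1+3\log\log N)\}\). Using \(r\le N/(\log N)^3+1\), the logarithm of this count is
\[
 O\!\left(\frac{N\log\log N}{(\log N)^3}\right)+\log N+O(\log\log N).
\]

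\textbf{Union bound.} The union bound gives a total failure probability of at most
\[
 2\exp\left\{O\!\left(\frac{N\log\log N}{(\log N)^3}\right)+2\log N-\frac{N}{8(\log N)^{5/2}}\right\}.
\]
This tends to zero because \(N\log\log N/(\log N)^3=o\bigl(N/(\log N)^{5/2}\bigr)\), as the ratio is \(\log\log N/(\log N)^{1/2}\to0\), and because \(\log N=o\bigl(N/(\log N)^{5/2}\bigr)\). Only \(\boldsymbol H\) is involved, so the event does not depend on \(\rho\).

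\textbf{Main obstacle.} The one real point is the exponent comparison in the union bound, namely that \(5/2<3\). This is exactly the constraint \(b<3/2\) discussed in the paragraph on the exponent \(5/4\); everything else is routine. The only care needed is to keep the ceiling in \(r\) and the factor \(N\) from summing over cardinalities, both of which are lower-order. Explicit constants such as \(N\ge N_0\) with \(16\log\log N<(\log N)^{1/2}\) can be fixed at the end.
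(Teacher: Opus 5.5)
Your proposal is correct and follows the paper's proof essentially step by step. Both use a fixed-set \(\chi^2_N\) tail with \(\varepsilon=(\log N)^{-5/4}\), bound every inner subset count by its value at the largest inner cardinality via monotonicity of \(k\log(eN/k)\), and finish with a union bound in which the counting exponent \(O\bigl(N\log\log N/(\log N)^3\bigr)\) is dominated by \(N/[8(\log N)^{5/2}]\). The only cosmetic differences are that you bound the number of cardinalities by \(N\) rather than by \(\lceil N/(\log N)^3\rceil\), and that you track the ceiling correction explicitly.
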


\begin{proof}
Fix a set of size \(|S|\).  The sum of its independent Gaussian columns has
distribution \(\mathcal N(\zero,|S|\boldsymbol I_N)\).  Therefore
\[
 \frac1{|S|}\left\|\sum_{i\in S}\boldsymbol h_i\right\|_2^2
 \sim\chi_N^2.
\]
Apply \eqref{eq:chi-tail} with
\(\varepsilon=(\log N)^{-5/4}\).  The failure probability for one fixed set
is at most
\[
 2\exp\left(-\frac{N}{8(\log N)^{5/2}}\right).
\]

We next count the sets over which this fixed-set estimate must be union-bounded.
For every integer
\(1\le k\le\lceil N/(\log N)^3\rceil\),
\[
 \binom Nk\le\exp\left\{k\log\frac{eN}{k}\right\}.
\]
The exponent \(k\log(eN/k)\) increases with \(k\) for \(k<N\).  Hence every
term in the sum over cardinalities is at most the value obtained at
\(k=\lceil N/(\log N)^3\rceil\).  There are at most
\(\lceil N/(\log N)^3\rceil\) terms.  Therefore the total number of nonempty
inner-region error sets is at most
\[
 \left\lceil\frac{N}{(\log N)^3}\right\rceil
 \exp\left\{
 \left\lceil\frac{N}{(\log N)^3}\right\rceil
 \log\left(
 \frac{eN}{\lceil N/(\log N)^3\rceil}
 \right)\right\}.
\]
Ignoring only ceiling corrections, the logarithm inside the exponential is
\[
 \log\left(\frac{eN}{N/(\log N)^3}\right)
 =1+3\log\log N.
\]
It follows that the logarithm of the total number of inner-region sets is
\[
 O\left(\frac{N\log\log N}{(\log N)^3}\right).
\]
The union bound is the number of sets multiplied by the fixed-set failure
probability.  In particular, the total failure probability is at most
\[
 2\left\lceil\frac{N}{(\log N)^3}\right\rceil
 \exp\left\{
 \left\lceil\frac{N}{(\log N)^3}\right\rceil
 \log\left(
 \frac{eN}{\lceil N/(\log N)^3\rceil}
 \right)
 -\frac{N}{8(\log N)^{5/2}}
 \right\}.
\]
To see directly that this expression tends to zero, divide its positive
counting exponent by the negative fixed-set concentration exponent
\(N/[8(\log N)^{5/2}]\).  The ratio is
\[
 O\left(\frac{\log\log N}{\sqrt{\log N}}\right)\longrightarrow0.
\]
Thus the negative concentration exponent dominates the logarithm of the
number of sets, and the total failure probability tends to zero.
\end{proof}

\subsection{Control all inner noise projections at the threshold scale}

The preceding lemma fixes the conditional variance of every relevant noise
projection up to the required relative error.  We now use the resulting
Gaussian tail at \(\rho=2\log N\), followed by a union bound over the inner
error sets.  This is the step at which the leading constant \(2\) matters.

\begin{lemma}[Sharp uniform bound on the noise projection]
\label{lem:inner-noise}
For all sufficiently large \(N\), there is one event, depending only on
\(\boldsymbol H\) and \(\boldsymbol w\),
whose probability tends to one and on which the following inequality holds
for every \(\rho\ge2\log N\) and every nonempty set satisfying
\[
 |S|\le\left\lceil\frac{N}{(\log N)^3}\right\rceil,
\]:
\begin{equation}\label{eq:inner-noise}
 \left|\boldsymbol w^{\mathsf T}
 \sum_{i\in S}\boldsymbol h_i\right|
 \le
 \left(1-\frac4{(\log N)^{5/4}}\right)
 \sqrt{\rho N}\,|S|.
\end{equation}
\end{lemma}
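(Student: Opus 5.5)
The plan is to condition on the sharp channel event of \Cref{lem:inner-channel} and control each noise projection as a one-dimensional Gaussian. We then union-bound over the inner error sets at the boundary SNR \(\rho=2\log N\), and extend to larger \(\rho\) by monotonicity.

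\textbf{Reduction to the boundary SNR.} The right side of \eqref{eq:inner-noise} is increasing in \(\rho\), while the left side does not depend on \(\rho\). Hence the event proved at \(\rho=2\log N\) implies the event for every \(\rho\ge2\log N\) on the same realization. It therefore suffices to fix \(\rho=2\log N\) and write \(\varepsilon=(\log N)^{-5/4}\).

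\textbf{Fixed-set tail.} Condition on \(\boldsymbol H\) lying in the event of \Cref{lem:inner-channel}, and fix a set \(S\) with \(|S|=k\). Since \(\boldsymbol w\) is independent of \(\boldsymbol H\), the projection \(\boldsymbol w^{\mathsf T}\sum_{i\in S}\boldsymbol h_i\) is a centered Gaussian with variance \(\sigma^2=\|\sum_{i\in S}\boldsymbol h_i\|_2^2\le(1+\varepsilon)Nk\). The threshold in standard-deviation units is therefore at least
\[
 t=\frac{(1-4\varepsilon)\sqrt{2N\log N}\,k}{\sqrt{(1+\varepsilon)Nk}}
 =\sqrt{2k\log N}\,\frac{1-4\varepsilon}{\sqrt{1+\varepsilon}} .
\]
Using \((1-4\varepsilon)^2/(1+\varepsilon)\ge1-9\varepsilon\) for large \(N\), we get \(t^2/2\ge k\log N(1-9\varepsilon)\). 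The Mills-ratio form of \eqref{eq:gaussian-tail}, \(2\phi(t)/t\), then bounds the conditional fixed-set failure probability by
\[
 \frac{2}{\sqrt{2\pi}\,t}\,N^{-k(1-9\varepsilon)}
 \le\frac{2}{\sqrt{k\log N}}\,N^{-k(1-9/(\log N)^{5/4})},
\]
since \(t\ge\sqrt{k\log N}\) once \(N\) is large. This matches the roadmap bound.

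\textbf{Union bound.} By \eqref{eq:subset-count} there are at most \(N^k/k!\) sets of size \(k\). The powers of \(N\) cancel, leaving the factor
\[
 N^{9k\varepsilon}=\exp\{9k(\log N)^{-1/4}\}\le 2^k
\]
for large \(N\); this is where the exponent \(5/4>1\) is used. The total conditional failure probability is therefore at most
\[
 \frac{2}{\sqrt{\log N}}\sum_{k\ge1}\frac{2^k}{k!\sqrt k}\le\frac{2(e^2-1)}{\sqrt{\log N}}\to0 .
\]
Adding the failure probability of \Cref{lem:inner-channel}, which depends only on \(\boldsymbol H\) and does not need to be independent of the noise event, gives the unconditional statement.

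\textbf{Main obstacle.} The hard step is the fixed-set tail at exactly the threshold. Two things must be kept sharp there. First, the Mills-ratio prefactor \(1/\sqrt{k\log N}\) must be retained, since without it the \(k=1\) term is of order one. Second, the combined relative slack from the factor \((1-4\varepsilon)\) and from the variance inflation \(1+\varepsilon\) must enter the exponent only as an \(e^{O(k)}\) correction, so that the \(1/k!\) from the subset count can absorb it. The remaining steps are routine bookkeeping.
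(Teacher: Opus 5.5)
Your proposal is correct and follows the paper's proof essentially step for step. Both arguments reduce to \(\rho=2\log N\) by monotonicity, condition on the sharp channel event, apply the Mills-ratio fixed-set bound with relative slack \(1-9(\log N)^{-5/4}\), and union-bound using \(\binom Nk\le N^k/k!\), where the powers of \(N\) cancel and the residual \(e^{9k/(\log N)^{1/4}}\le 2^k\) is absorbed by \(1/k!\). The only cosmetic difference is that you pass to the prefactor \(2/\sqrt{k\log N}\) through the cruder bound \(t\ge\sqrt{k\log N}\), which is equally valid.
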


\begin{proof}
It is enough to prove the statement at \(\rho=2\log N\), because its
right-hand side increases with \(\rho\).  Thus the event at the boundary SNR
is contained in the corresponding event at every larger SNR, which is again
the precise meaning of uniformity over \(\rho\).

Condition on a channel satisfying \eqref{eq:inner-channel}.  For one fixed
set, the projection is conditionally Gaussian:
\[
 \left.\boldsymbol w^{\mathsf T}\sum_{i\in S}\boldsymbol h_i
 \ \middle|\ \boldsymbol H\right.
 \sim\mathcal N\left(
 0,\left\|\sum_{i\in S}\boldsymbol h_i\right\|_2^2\right).
\]
The forbidden magnitude in the lemma is
\[
 \left(1-\frac4{(\log N)^{5/4}}\right)
 \sqrt{2N\log N}\,|S|.
\]
The conditional standard deviation is the norm of the channel sum.  The
preceding lemma bounds that standard deviation by
\[
 \sqrt{\left(1+\frac1{(\log N)^{5/4}}\right)N|S|}.
\]
Dividing the forbidden magnitude by this upper bound produces
\[
 \sqrt{
 2|S|\log N\,
 \frac{(1-4/(\log N)^{5/4})^2}
      {1+1/(\log N)^{5/4}}}.
\]
This is the rigorous version of the heuristic threshold
\(\sqrt{\rho|S|}\).  The appearance of \(\log N\) is not an additional
assumption or a new concentration slack: we have simply substituted the
boundary value \(\rho=2\log N\), so
\(\sqrt{\rho|S|}=\sqrt{2|S|\log N}\).  The extra fraction under the square
root is the small relative slack created by the channel relative-error bound and by
placing the permitted noise magnitude slightly below complete cancellation.
The conditional probability of violating the desired
two-sided bound is therefore at most
\begin{equation}\label{eq:inner-q}
 2Q\left(\sqrt{
 2|S|\log N\,
 \frac{(1-4/(\log N)^{5/4})^2}
      {1+1/(\log N)^{5/4}}}
 \right).
\end{equation}
We now quantify that small relative slack.  The fraction inside the
square root satisfies
\begin{align*}
 \frac{(1-4/(\log N)^{5/4})^2}
      {1+1/(\log N)^{5/4}}
 &=1-\frac9{(\log N)^{5/4}}
 +\frac{25/(\log N)^{5/2}}
       {1+1/(\log N)^{5/4}}\\
 &\ge1-\frac9{(\log N)^{5/4}}.
\end{align*}
Thus the squared standardized threshold is at least
\(2|S|\log N[1-9/(\log N)^{5/4}]\).  Mills' bound gives the following
intermediate upper bound for the failure probability of this one fixed set:
\[
 \frac{1}{\sqrt{\pi(1-9/(\log N)^{5/4})|S|\log N}}
 N^{-|S|(1-9/(\log N)^{5/4})}.
\]
For all sufficiently large \(N\), this is at most
\begin{equation}\label{eq:inner-one-set}
 \frac{2}{\sqrt{|S|\log N}}
 N^{-|S|(1-9/(\log N)^{5/4})}.
\end{equation}

We now require the estimate for every error set simultaneously.  Fix a
cardinality \(k\).  There are \(\binom Nk\) sets of this size, so the union
bound at this one cardinality is at most
\begin{align*}
 &\binom Nk\,
 \frac{2}{\sqrt{k\log N}}
 N^{-k(1-9/(\log N)^{5/4})}\\
 &\quad\le
 \frac{N^k}{k!}\,
 \frac{2}{\sqrt{k\log N}}
 N^{-k}N^{9k/(\log N)^{5/4}}\\
 &\quad=
 \frac{2}{k!\sqrt{k\log N}}
 \exp\left(\frac{9k}{(\log N)^{1/4}}\right).
\end{align*}
The first line is ``number of sets'' times ``failure probability for one
fixed set.''  The second line uses \(\binom Nk\le N^k/k!\) and separates the
Gaussian power into its leading \(N^{-k}\) term and the relative-slack term.
The factors \(N^k\) and \(N^{-k}\) cancel exactly.  This is the only place
where the constant two is used sharply.

Because \(9/(\log N)^{1/4}\to0\), for all sufficiently large \(N\),
\[
 \exp\left(\frac{9k}{(\log N)^{1/4}}\right)\le2^k.
\]
Therefore the contribution from all sets of cardinality \(k\) is at most
\[
 \frac{2}{\sqrt{\log N}}\frac{2^k}{k!\sqrt{k}}.
\]
Summing this expression over the actual inner cardinalities can only be
smaller than summing it over every positive integer.  Hence the complete
conditional union bound is at most
\[
 \frac{2}{\sqrt{\log N}}
 \sum_{k=1}^{\infty}\frac{2^k}{k!\sqrt{k}}
 =O\left(\frac1{\sqrt{\log N}}\right).
\]
The infinite series is a fixed finite constant, and the factor
\(1/\sqrt{\log N}\) tends to zero.  Finally add the probability that the
sharp channel-sum event fails.  Thus an explicit upper bound for this lemma's
failure probability is
\[
 \Pp\{\text{some sharp channel-sum bound fails}\}
 +\frac{2}{\sqrt{\log N}}
 \sum_{k=1}^{\infty}\frac{2^k}{k!\sqrt{k}},
\]
and both terms tend to zero.
\end{proof}

\paragraph{Where \(2\log N\) is used.}
For an error set of size \(k\), the Gaussian tail supplies essentially
\(N^{-k}\), while the number of sets supplies essentially \(N^k/k!\).
The powers of \(N\) cancel.  The Gaussian prefactor
\(1/\sqrt{k\log N}\) and the factorial \(1/k!\) make the remaining sum
vanish.  This is the threshold-critical calculation.

\subsection{Convert the uniform bounds into objective progress}

We have now controlled the only random quantities appearing in the exact
objective-gap and total-gain identities.  Substituting those bounds into the
identities yields the two deterministic landscape conclusions needed by the
descent argument: positive excess cost and a quantified improving flip.

\begin{lemma}[Excess cost and one-step progress in the inner region]
\label{lem:inner-gain}
Suppose the following two statements hold simultaneously for every inner
error set: the squared channel-sum norm lies between
\([1-(\log N)^{-5/4}]N|S|\) and
\([1+(\log N)^{-5/4}]N|S|\), and the magnitude of the unscaled noise
projection is at most
\([1-4(\log N)^{-5/4}]\sqrt{\rho N}|S|\).  The preceding two lemmas show
that this joint event has probability tending to one.  On this event, for
every error set satisfying
\[
 1\le |S|\le\left\lceil\frac{N}{(\log N)^3}\right\rceil,
\]
\begin{equation}\label{eq:inner-gap}
 \frac{12\rho}{(\log N)^{5/4}}|S|
 \le f(\boldsymbol x)-f(\one)
 \le8\rho|S|,
\end{equation}
and
\begin{equation}\label{eq:inner-total-gain}
 \sum_{i\in S}\bigl[f(\boldsymbol x)-f(\boldsymbol x^{(i)})\bigr]
 \ge\frac{4\rho}{(\log N)^{5/4}}|S|.
\end{equation}
Hence correcting at least one currently wrong bit decreases the cost by at
least \(4\rho/(\log N)^{5/4}\).
\end{lemma}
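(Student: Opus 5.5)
This is a deterministic substitution into the two exact identities, \Cref{lem:objective-gap} and \Cref{lem:total-wrong-gain}, using the two assumed uniform bounds. Write $\varepsilon=(\log N)^{-5/4}$ and $k=|S|$. The channel event gives $(1-\varepsilon)Nk\le\|\sum_{i\in S}\boldsymbol h_i\|_2^2\le(1+\varepsilon)Nk$. The noise event gives $|\boldsymbol w^{\mathsf T}\sum_{i\in S}\boldsymbol h_i|\le(1-4\varepsilon)\sqrt{\rho N}\,k$. Multiplying the noise bound by $\sqrt{\rho/N}$, the scaled projection is at most $(1-4\varepsilon)\rho k$ in magnitude. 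Every term in both identities is then a multiple of $\rho k$.

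\textbf{Objective gap.} In \eqref{eq:gap-identity} the signal term $\frac{4\rho}{N}\|\sum_{i\in S}\boldsymbol h_i\|_2^2$ lies in $[4(1-\varepsilon)\rho k,\,4(1+\varepsilon)\rho k]$. The noise term has magnitude at most $4(1-4\varepsilon)\rho k$. For the lower bound, take the most negative noise:
\[
4\rho k\bigl[(1-\varepsilon)-(1-4\varepsilon)\bigr]=12\varepsilon\rho k .
\]
For the upper bound,
\[
4\rho k\bigl[(1+\varepsilon)+(1-4\varepsilon)\bigr]=4\rho k(2-3\varepsilon)\le 8\rho k .
\]
Together these give \eqref{eq:inner-gap}.

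\textbf{Total wrong-bit gain.} In \eqref{eq:gain-identity} we need an upper bound on $\sum_{i\in S}\|\boldsymbol h_i\|_2^2$. Apply \Cref{lem:inner-channel} to the singletons $S=\{i\}$, which are inner sets since $1\le\lceil N/(\log N)^3\rceil$. This gives $\|\boldsymbol h_i\|_2^2\le(1+\varepsilon)N$ for every column, so $\sum_{i\in S}\|\boldsymbol h_i\|_2^2\le(1+\varepsilon)Nk$. The hypothesis ``for every inner error set'' includes these singletons. Substituting the lower channel-sum bound, this column bound, and the most negative noise projection gives
\[
4\rho k\bigl[2(1-\varepsilon)-(1+\varepsilon)-(1-4\varepsilon)\bigr]=4\rho k\cdot\varepsilon ,
\]
which is \eqref{eq:inner-total-gain}.

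\textbf{Conclusion and main point of care.} \Cref{lem:average-wrong-gain} then yields a wrong bit $i\in S$ with gain at least $4\rho/(\log N)^{5/4}$. The only real subtlety is the bookkeeping of the $\varepsilon$ coefficients. The leading constants must cancel exactly: $1-1=0$ in the gap, and $2-1-1=0$ in the gain. The surviving $\varepsilon$-terms, $3\varepsilon$ in the gap and $\varepsilon$ in the gain, must be positive. This is exactly why the noise threshold was set at $1-4\varepsilon$ rather than $1-\varepsilon$: the slack $4\varepsilon$ has to beat the $3\varepsilon$ lost to the channel bounds in the gain identity.
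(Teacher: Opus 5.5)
Your proposal is correct and follows the paper's proof essentially line for line. You substitute the lower (resp.\ upper) channel-sum bound and the most negative (resp.\ most positive) scaled noise projection $\pm(1-4\varepsilon)\rho|S|$ into the gap identity, and use the singleton case of the channel event to bound $\sum_{i\in S}\|\boldsymbol h_i\|_2^2$ in the total-gain identity; your coefficients $12\varepsilon$, $4(2-3\varepsilon)\le 8$, and $4\bigl[2(1-\varepsilon)-(1+\varepsilon)-(1-4\varepsilon)\bigr]=4\varepsilon$ match the paper's.
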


\begin{proof}
Recall that the objective gap equals
\[
 \frac{4\rho}{N}
 \left\|\sum_{i\in S}\boldsymbol h_i\right\|_2^2
 +4\sqrt{\frac\rho N}\,
 \boldsymbol w^{\mathsf T}\sum_{i\in S}\boldsymbol h_i.
\]
The squared channel-sum norm is at least
\([1-(\log N)^{-5/4}]N|S|\).  The scaled noise projection can be as negative
as
\[
 -\left(1-\frac4{(\log N)^{5/4}}\right)\rho|S|.
\]
Substituting these two worst-case values, the coefficient of \(\rho|S|\) is
\[
 4\left[
 1-\frac1{(\log N)^{5/4}}
 -\left(1-\frac4{(\log N)^{5/4}}\right)
 \right]
 =\frac{12}{(\log N)^{5/4}}.
\]
Using the upper channel bound and the most positive allowed noise projection
gives a coefficient no larger than \(8\).  Hence every nontruth vector in the
inner region lies strictly above the truth, and its excess cost is at
most \(8\rho|S|\).

For one-step progress, the exact total wrong-bit gain is
\[
4\left[
 \frac\rho N\left(
 2\left\|\sum_{i\in S}\boldsymbol h_i\right\|_2^2
 -\sum_{i\in S}\|\boldsymbol h_i\|_2^2\right)
 +\sqrt{\frac\rho N}\,
 \boldsymbol w^{\mathsf T}\sum_{i\in S}\boldsymbol h_i
\right].
\]
Applying the sharp channel statement to singleton sets gives
\[
 \sum_{i\in S}\|\boldsymbol h_i\|_2^2
 \le\left(1+\frac1{(\log N)^{5/4}}\right)N|S|.
\]
To lower-bound the total gain, combine this upper bound with the lower bound
on the squared channel-sum norm and the most negative permitted noise
projection.  Substitution into \eqref{eq:gain-identity} gives
\begin{align*}
 \sum_{i\in S}\Delta_i(\boldsymbol x)
 &\ge4\rho|S|\left[
 2\left(1-\frac1{(\log N)^{5/4}}\right)
 -\left(1+\frac1{(\log N)^{5/4}}\right)
 -\left(1-\frac4{(\log N)^{5/4}}\right)
 \right]\\
 &=\frac{4\rho|S|}{(\log N)^{5/4}}.
\end{align*}
Thus the sum of the \(|S|\) wrong-bit gains is at least
\[
 \frac{4\rho|S|}{(\log N)^{5/4}}.
\]
At least one wrong bit therefore offers the average gain
\[
 \frac{4\rho}{(\log N)^{5/4}}.
\]
At the boundary SNR \(\rho=2\log N\), this is
\(8/(\log N)^{1/4}\).  It may tend slowly to zero, but it remains much larger
than the algorithm's acceptance floor \(1/N\).  No increase of the SNR is
required.
\end{proof}

\begin{keybox}
\textbf{The two-region landscape is complete.}
For every error cardinality from one through
\(\lceil4N/(\log N)^{1/4}\rceil\), exactly one of the two region statements applies.
Every such nontruth vector lies strictly above the truth.  Some wrong bit
offers gain at least \(\rho/2\) in the outer region and at least
\(4\rho/(\log N)^{5/4}\) in the inner region.

The second conclusion is stronger than merely saying that the truth has
lower cost.  It says that every controlled nontruth vector has an adjacent
binary vector, obtained by correcting one currently wrong coordinate, with
strictly smaller cost.  Hence the controlled ball contains no spurious
one-bit local minimum.  The landscape may remain globally nonconvex, and the
greedy path may temporarily flip a correct coordinate, but it cannot become
trapped inside this basin.

These are uniform statements: one event asserts the inequalities for every
eligible set \(S\) simultaneously.  After that event has occurred, any set
visited by the data-dependent greedy path is already covered.  We never
condition on which path the algorithm chooses, and independence between the
path and the landscape is neither asserted nor needed.
\end{keybox}

\section{The deterministic descent argument}\label{sec:descent}

The probabilistic work is now complete.  Fix a realization on which the
warm-start statement and all four uniform geometric statements hold.  From
this point onward, the argument is deterministic.  Written out, the
assumptions are:
\begin{enumerate}[label=\textbf{\arabic*.}]
\item For every outer error set, the squared channel-sum norm lies between
\((3/4)N|S|\) and \((5/4)N|S|\).
\item For every outer error set, the scaled noise projection has magnitude
at most \(\rho|S|/8\).
\item For every inner error set, the squared channel-sum norm lies within the
relative factor \(1\pm(\log N)^{-5/4}\) of \(N|S|\).
\item For every inner error set, the unscaled noise projection has magnitude
at most
\[
 \left[1-\frac4{(\log N)^{5/4}}\right]\sqrt{\rho N}|S|.
\]
\end{enumerate}
The probability sections prove that these four statements and the LMMSE warm
start hold together with probability tending to one.  We now freeze any
realization on which they all hold and show that the actual steepest-descent
path reaches the truth before either stopping rule can activate.

The proof uses the cost \(f\) as a Lyapunov function.  Hamming distance may
move in either direction, but every accepted flip strictly decreases
\(f\).  The uniform geometric events give three deterministic facts at every
point in the controlled ball: every nontruth point has cost above the truth;
the outer boundary has cost higher than the initialization; and some bit
always offers a quantified decrease.  Confinement and termination follow
from these three comparisons.

The \emph{cost budget} at a point \(\boldsymbol x\) is simply its excess cost
above the truth,
\[
 f(\boldsymbol x)-f(\one).
\]
Because every controlled nontruth point has cost above \(f(\one)\), a
downhill path starting from \(\boldsymbol x\) can decrease its cost by at
most this amount before reaching the truth's cost level.  This is why an
upper bound on excess cost, divided by a lower bound on the decrease per
step, bounds the number of remaining steps.

\begin{proposition}[Confinement and termination]\label{prop:descent}
For all sufficiently large \(N\) and \(\rho\ge2\log N\), if
\[
 \dH(\boldsymbol x^{\mathrm{init}},\one)
 \le\left\lceil\frac{N}{(\log N)^{1/4}}\right\rceil
\]
and the conclusions of
\Cref{lem:inner-channel,lem:inner-noise,lem:outer-channel,lem:outer-noise}
hold, then the detector reaches \(\one\) before accepting
\(M_N=4NL_N\) flips, with \(L_N\) as defined in Stage~2.
\end{proposition}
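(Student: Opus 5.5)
The plan is to use \(f\) as a Lyapunov function on the fixed realization, and to combine the landscape conclusions of \Cref{lem:outer-gain,lem:inner-gain} (which follow deterministically from the four assumed uniform statements) into three comparisons: an initial cost level, a barrier level on the boundary sphere, and a uniform per-step decrease. Write \(r_0=\lceil N/(\log N)^{1/4}\rceil\) and \(R=\lceil4N/(\log N)^{1/4}\rceil\). The first step bounds the starting level. Since \(\dH(\boldsymbol x^{\mathrm{init}},\one)\le r_0\), the initial error set lies either in the inner region, where \eqref{eq:inner-gap} gives excess cost at most \(8\rho|S|\), or in the outer region, where \eqref{eq:outer-gap} gives at most \(\tfrac{11}{2}\rho|S|\). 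Either way, \(f(\boldsymbol x^{\mathrm{init}})-f(\one)\le8\rho r_0\). If the initialization is already \(\one\), there is nothing to prove.

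The second step is confinement. Every point at distance exactly \(R\) is in the outer region for large \(N\), because \(R>\lceil N/(\log N)^3\rceil\). So \eqref{eq:outer-gap} gives excess at least \(\tfrac52\rho R\ge10\rho N/(\log N)^{1/4}\). I must check that this exceeds \(8\rho r_0\le8\rho N/(\log N)^{1/4}+8\rho\); it does for large \(N\) by the margin \(2\rho N/(\log N)^{1/4}-8\rho>0\). Next I argue by induction along the path. Every accepted flip has gain at least \(1/N>0\), so \(f\) strictly decreases and every iterate has cost at most \(f(\boldsymbol x^{\mathrm{init}})\). One flip changes Hamming distance by exactly one, so a path starting at distance at most \(r_0<R\) that ever reached distance greater than \(R\) would first hit distance exactly \(R\). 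That point would have cost above the initial level, which is a contradiction. Hence every iterate stays within distance \(R\) of the truth.

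The third step rules out wrong stopping and counts flips. At any iterate \(\boldsymbol x\ne\one\) in the ball, \Cref{lem:average-wrong-gain} together with \eqref{eq:outer-total-gain} or \eqref{eq:inner-total-gain} yields a wrong bit with gain at least \(\min\{\rho/2,\,4\rho/(\log N)^{5/4}\}=4\rho/(\log N)^{5/4}\). This exceeds \(1/N\) for large \(N\), since \(\rho\ge2\log N\). So the stopping test fails, and the accepted flip (the maximal gain) decreases \(f\) by at least \(\delta:=4\rho/(\log N)^{5/4}\). Suppose the first \(T\) accepted flips all start from nontruth points and the truth has not been reached. Then the \(T\)-th iterate is a nontruth point in the ball and so has positive excess cost. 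Therefore \(T\delta<8\rho r_0\), giving \(T<2(\log N)^{5/4}r_0\le2N\log N+2(\log N)^{5/4}\). This is below \(4N\log N\le M_N\) for large \(N\), using \(L_N\ge\log_2N\ge\log N\). Hence the truth is reached after fewer than \(M_N\) accepted flips. The only non-routine point in this part is confirming that the flip cap cannot trigger first.

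The main obstacle is bookkeeping rather than analysis. The probabilistic content is already packaged in the earlier lemmas, so the care goes into the ceilings in the barrier comparison and into the uniformity over \(\rho\ge2\log N\). Every threshold scales linearly in \(\rho\), so \(\rho\) cancels from each comparison, and "sufficiently large \(N\)" can be chosen independently of \(\rho\) and of \(\boldsymbol x^\star\).
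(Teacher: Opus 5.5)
Your proposal is correct and follows the paper's proof essentially step for step. It uses the same starting height $8\rho\lceil N/(\log N)^{1/4}\rceil$, the same outer-boundary barrier $\tfrac52\rho\lceil 4N/(\log N)^{1/4}\rceil$ with the same ceiling bookkeeping, the same uniform wrong-bit gain $4\rho/(\log N)^{5/4}>1/N$ to rule out a wrong stop, and the same budget count $T<2(\log N)^{5/4}\lceil N/(\log N)^{1/4}\rceil<4N\log N<M_N$.

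The paper also notes that the detector halts once it reaches $\one$: the singleton inner lower bound gives every one-bit neighbor positive excess cost. This is not needed for the statement as written, but it is what makes the returned vector equal $\one$ in \Cref{thm:main}\textup{(a)}.
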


\begin{proof}
If \(\boldsymbol x^{\mathrm{init}}=\one\), the detector is already at the
truth.  The inner-region excess-cost lower bound, applied to every singleton
error set, says that every one-bit neighbor has strictly larger cost.  Hence
every flip from the truth would increase the cost, and the detector stops.  We
henceforth assume that the initialization is not the truth.

\textbf{Step 1: bound the starting excess cost.}
It does not matter for correctness whether the initialization begins in the
inner or outer region; both are contained in the controlled ball.  The
distinction mattered only in the preceding probability proof, where the two
regions required different estimates.  The deterministic argument uses their
common conclusions.  In the inner region we proved
\(f(\boldsymbol x)-f(\one)\le8\rho|S|\).  In the outer region we proved the
smaller upper bound
\(f(\boldsymbol x)-f(\one)\le(11/2)\rho|S|\).  Since the warm start has at
most \(\lceil N/(\log N)^{1/4}\rceil\) errors, either case gives
\begin{equation}\label{eq:start-height}
 f(\boldsymbol x^{\mathrm{init}})-f(\one)
 \le8\rho\left\lceil\frac{N}{(\log N)^{1/4}}\right\rceil.
\end{equation}

\textbf{Step 2: the decreasing-cost path cannot leave the controlled ball.}
Suppose, for the sake of the argument, that the path eventually exited the
controlled ball.  A bit flip changes Hamming distance by exactly one, so
before exiting it would have to visit a vector at distance
\(\lceil4N/(\log N)^{1/4}\rceil\).  The outer-region lower bound says that
the excess cost of every vector on this boundary is at least
\begin{equation}\label{eq:outer-wall}
 f(\boldsymbol x)-f(\one)
 \ge\frac52\rho
 \left\lceil\frac{4N}{(\log N)^{1/4}}\right\rceil.
\end{equation}
This is a lower bound on \(f(\boldsymbol x)-f(\one)\), not on Hamming
distance.  To retain the ceiling terms explicitly, put
\(a_N=N/(\log N)^{1/4}\).  The boundary height is at least
\[
 \frac52\rho\lceil4a_N\rceil\ge10\rho a_N,
\]
whereas \eqref{eq:start-height} is at most
\[
 8\rho\lceil a_N\rceil\le8\rho(a_N+1).
\]
Because \(a_N>4\) for all sufficiently large \(N\), the boundary height is
strictly larger than the starting height.

Every accepted flip lowers \(f\).  Therefore every later iterate has cost at
most the initialization's cost.  But every point through which an escaping
single-bit path would have to pass has cost strictly larger than the
initialization's cost.  Such a point is unreachable by a decreasing-cost
path.  The proposed exit is impossible, and every nontruth iterate remains
in one of the two controlled regions.

\textbf{Step 3: a nontruth vector cannot be a stopping point.}
Every nontruth iterate is now known to lie in one of the two regions.  If it
is in the outer region, some wrong bit offers gain at least \(\rho/2\).  If
it is in the inner region, some wrong bit offers gain at least
\(4\rho/(\log N)^{5/4}\).  The algorithm chooses the largest gain over all
bits, so its selected cost decrease is at least the applicable value.  Since
\(\rho\ge2\log N\),
\[
 \frac{4\rho}{(\log N)^{5/4}}
 \ge\frac8{(\log N)^{1/4}}>\frac1N
\]
for all sufficiently large \(N\).
The detector's ordinary stopping rule is: stop if no bit offers a cost
decrease of at least \(1/N\).  At every nontruth point, the displayed lower
bound provides such a bit.  Therefore the \(1/N\) rule cannot stop the
algorithm at a nontruth vector.  Equivalently, on this good landscape there
is no one-bit local minimum other than the truth anywhere in the controlled
ball.

\textbf{Step 4: recovery occurs within the allowed number of flips.}
At every confined nontruth point, the outer-region gain is at least
\(\rho/2\), while the inner-region gain is at least
\(4\rho/(\log N)^{5/4}\). For all sufficiently large \(N\),
\[
 \frac\rho2\ge\frac{4\rho}{(\log N)^{5/4}}.
\]
Thus, regardless of which region contains the current iterate, every accepted
flip whose starting point is not \(\one\) decreases the objective by at least
\[
 g_N:=\frac{4\rho}{(\log N)^{5/4}}.
\]
This is a local abbreviation for the proof, not a quantity computed by the
detector.

Suppose that \(T\) accepted flips occur without the path reaching \(\one\).
Step~2 confines the endpoint \(\boldsymbol x_T\) to the controlled ball.  Since
it is still a nontruth point, the uniform landscape lower bound gives
\(f(\boldsymbol x_T)>f(\one)\). Summing the decrease over the \(T\) flips and
using \eqref{eq:start-height} therefore yields
\[
 \begin{aligned}
 0<f(\boldsymbol x_T)-f(\one)
 &\le f(\boldsymbol x^{\mathrm{init}})-f(\one)-Tg_N\\
 &\le 8\rho\left\lceil\frac{N}{(\log N)^{1/4}}\right\rceil
 -T\frac{4\rho}{(\log N)^{5/4}}.
 \end{aligned}
\]
Consequently,
\[
 \begin{aligned}
 T
 &<2(\log N)^{5/4}
 \left\lceil\frac{N}{(\log N)^{1/4}}\right\rceil\\
 &\le2N\log N+2(\log N)^{5/4}
 <4N\log N
 \end{aligned}
\]
for all sufficiently large \(N\).

By definition,
\[
 L_N\ge\log_2N=\frac{\log N}{\log2}>\log N,
 \qquad
 M_N=4NL_N>4N\log N.
\]

By Step~3, the algorithm continues to accept flips at every nontruth
iterate.  If it failed to reach \(\one\) by the cap, then \(T=M_N\)
accepted flips would occur without a hit, contradicting
\(T<4N\log N<M_N\).  Hence a first accepted flip \(R\ge1\) whose post-flip
iterate equals \(\one\) exists.  Applying the preceding no-hit inequality
to the first \(R-1\) flips gives
\[
 R<2(\log N)^{5/4}
 \left\lceil\frac{N}{(\log N)^{1/4}}\right\rceil+1
 <4N\log N<M_N
\]
for all sufficiently large \(N\).  Thus the detector cannot exhaust its
allowed number of flips while the iterate is a nontruth vector.

Thus the detector reaches \(\one\) strictly before the flip limit.  At
\(\one\), the singleton
inner-region lower bound says every one-bit neighbor has larger cost, so every
flip has negative gain and the detector stops.
\end{proof}

\section{Proof of polynomial-time recovery and runtime}

\begin{proof}[Proof of \Cref{thm:main}\textup{(a)}]
By \Cref{prop:warm}, the rounded LMMSE initialization satisfies
\[
 \dH(\boldsymbol x^{\mathrm{init}},\boldsymbol x^\star)
 \le\left\lceil\frac{N}{(\log N)^{1/4}}\right\rceil
\]
with probability tending to one.  The outer channel and noise conclusions
follow from \Cref{lem:outer-channel,lem:outer-noise}; the corresponding inner
conclusions follow from \Cref{lem:inner-channel,lem:inner-noise}.  All these conclusions
hold together with probability tending to one by the union bound; the events
need not be independent.

On that intersection, \Cref{prop:descent} is deterministic and gives exact
recovery.  Column-sign symmetry transfers the conclusion from
\(\boldsymbol x^\star=\one\) to every deterministic transmitted vector, with
the same failure probability.  Uniformity over \(\rho\ge2\log N\) follows
from the boundary reductions already made in the proof: the LMMSE matrix
decreases in the positive-semidefinite order as \(\rho\) increases, and the
outer and inner noise events at \(2\log N\) imply their counterparts at every
larger SNR.  Thus the sum of the displayed failure bounds controls the
supremum in \eqref{eq:uniform-risk} and tends to zero.

For the runtime, form
\(\boldsymbol G=\boldsymbol H^{\mathsf T}\boldsymbol H\) and solve the dense ridge system in
\eqref{eq:detector-lmmse}; these operations cost \(O(N^3)\).  Put
\(a=\sqrt{\rho/N}\), maintain the residual
\(\boldsymbol r=\boldsymbol y-a\boldsymbol H\boldsymbol x\), and set
\(\boldsymbol q=\boldsymbol H^{\mathsf T}\boldsymbol r\).  Direct expansion
gives every gain as
\[
 \Delta_j(\boldsymbol x)
 =-4a x_jq_j-4a^2G_{jj}.
\]
After flipping coordinate \(i\), if \(s=x_i\) denotes its pre-flip sign,
\[
 \boldsymbol q'
 =\boldsymbol q+2as\boldsymbol G_{:i}.
\]
Thus updating \(\boldsymbol q\) requires one Gram-column addition.  All
gains are then recomputed and scanned from \(\boldsymbol q\), the current
signs, and the stored diagonal of \(\boldsymbol G\), also in \(O(N)\)
operations.  Hence each accepted flip costs \(O(N)\).  The deterministic
flip limit \(M_N=4NL_N=O(N\log N)\)
therefore makes the descent stage cost \(O(N^2\log N)\).  The dense LMMSE solve still dominates, so the overall
runtime is \(O(N^3)\).
These are unit-cost exact-real arithmetic operations in the model stated in
\Cref{sec:model-result}; no finite-precision or bit-complexity conclusion is
asserted.
\end{proof}

\begin{keybox}
\textbf{Achievability proof complete.}
On the common high-probability event, rounded LMMSE starts inside the
controlled ball, the objective barrier prevents escape, every nontruth
iterate offers a decrease larger than the stopping threshold, and the total
objective budget permits fewer than \(M_N\) accepted flips before recovery.
Thus the detector returns \(\boldsymbol x^\star\) with uniformly vanishing
failure probability and worst-case arithmetic cost \(O(N^3)\).
\end{keybox}

\section{Maximum likelihood fails below the first-order boundary}

The recovery proof is complete.  We now establish the converse in
\Cref{thm:main}(b).  The obstruction is deliberately local: if even one
Hamming-distance-one neighbor has smaller objective value than the
transmitted vector, then the transmitted vector cannot be an ML minimizer.
The same \(2\log N\) nearest-neighbor scale appears in the earlier ML analyses
cited in the introduction \citep{HansenEtAl2009,HassibiEtAl2014}.  The task
here is the converse: show that, with high probability, at least one of these
dependent events occurs.
The proof proceeds through four conceptual reductions, implemented in six
lemmas:
\[
 \begin{gathered}
 \text{one-bit objective gap}
 \ \Longrightarrow\ 
 \text{one short Gaussian interval}
 \ \Longrightarrow\ 
 \text{one bad column}\\[1mm]
 \Longrightarrow\ 
 \text{at least one of the \(N\) columns is bad}.
 \end{gathered}
\]

As before, column-sign symmetry reduces the proof to
\(\boldsymbol x^\star=\one\).  Let \(\boldsymbol h_i\) denote column \(i\)
of \(\boldsymbol H\).

\subsection{Reduce ML failure to a one-bit inequality}

It is enough to identify a one-bit neighbor whose objective is below the
truth.  The first lemma writes that comparison as the sign of a single
scalar.

\begin{lemma}[Exact cost of one wrong flip]\label{lem:converse-one-bit}
Let \(\one^{(i)}\) be obtained from \(\one\) by flipping coordinate \(i\).
Then
\begin{equation}\label{eq:converse-one-bit}
 f(\one^{(i)})-f(\one)=4T_i(\rho),
 \qquad
 T_i(\rho)=\frac{\rho}{N}\|\boldsymbol h_i\|_2^2
 +\sqrt{\frac\rho N}\,\boldsymbol h_i^{\mathsf T}\boldsymbol w.
\end{equation}
Thus \(T_i(\rho)<0\) certifies that the truth is not an ML minimizer.
\end{lemma}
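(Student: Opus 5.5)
The plan is to compute the objective at $\one^{(i)}$ directly from the residual representation already derived in \eqref{eq:residual}, and then read off the sign condition.

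\textbf{Residual at the flipped vector.} Work in the normalized setting $\boldsymbol x^\star=\one$. Flipping coordinate $i$ of $\one$ produces a vector whose error set is the singleton $S=\{i\}$. By \eqref{eq:residual}, the residual at $\one^{(i)}$ is therefore
\[
 \boldsymbol w+2\sqrt{\rho/N}\,\boldsymbol h_i .
\]

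\textbf{The objective gap.} Apply \Cref{lem:objective-gap} with $S=\{i\}$, which is just expanding this square and subtracting $f(\one)=\|\boldsymbol w\|_2^2$. This gives
\[
 f(\one^{(i)})-f(\one)=\frac{4\rho}{N}\|\boldsymbol h_i\|_2^2+4\sqrt{\frac\rho N}\,\boldsymbol h_i^{\mathsf T}\boldsymbol w .
\]
Factoring out $4$ yields exactly $4T_i(\rho)$, which is \eqref{eq:converse-one-bit}. There is no probabilistic content here; the identity holds for every realization of $(\boldsymbol H,\boldsymbol w)$.

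\textbf{The certificate.} If $T_i(\rho)<0$, then $f(\one^{(i)})<f(\one)$. So $\one$ fails to attain the minimum in \eqref{eq:ml-detector}, and the truth is not an ML minimizer. Since column-sign symmetry maps a general $\boldsymbol x^\star$ to $\one$ while preserving $f$-differences between corresponding vectors, the same conclusion transfers to every transmitted word.

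There is no real obstacle. The only care needed is to use the same sign convention as \eqref{eq:residual}: wrong coordinates contribute $+2\sqrt{\rho/N}\,\boldsymbol h_i$ to the residual. With that convention the cross term carries a plus sign, as in \eqref{eq:converse-one-bit}.
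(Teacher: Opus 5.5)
Your proposal is correct and follows the paper's proof: both write the residual at \(\one^{(i)}\) as \(\boldsymbol w+2\sqrt{\rho/N}\,\boldsymbol h_i\), expand the square, and subtract \(f(\one)=\|\boldsymbol w\|_2^2\). Routing this through the \(S=\{i\}\) case of the objective-gap identity is only a repackaging of that expansion, and your certificate step is the same immediate observation the paper makes.
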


\begin{proof}
At the truth, the residual is \(\boldsymbol w\).  After flipping bit \(i\),
it is \(\boldsymbol w+2\sqrt{\rho/N}\,\boldsymbol h_i\).  Expanding the
difference of the two squared norms gives
\[
 \|\boldsymbol w+2\sqrt{\rho/N}\,\boldsymbol h_i\|_2^2
 -\|\boldsymbol w\|_2^2
 =4\frac\rho N\|\boldsymbol h_i\|_2^2
 +4\sqrt{\frac\rho N}\,\boldsymbol h_i^{\mathsf T}\boldsymbol w.
\]
\end{proof}

It is enough to work at the largest SNR in the converse range.  The next
one-line monotonicity statement explains why.

\begin{lemma}[A bad bit remains bad at smaller SNR]\label{lem:converse-monotone}
For fixed \((\boldsymbol H,\boldsymbol w)\), if
\(T_i(\rho_0)<0\), then \(T_i(\rho)<0\) for every
\(0<\rho\le\rho_0\).
\end{lemma}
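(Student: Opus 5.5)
The plan is to divide \(T_i(\rho)\) by the positive quantity \(\sqrt{\rho}\), which reduces the claim to monotonicity of an affine function of \(\sqrt\rho\). For \(\rho>0\) write \(u=\sqrt{\rho}>0\). From \eqref{eq:converse-one-bit},
\[
 \frac{T_i(\rho)}{\sqrt\rho}
 =\frac{u}{N}\|\boldsymbol h_i\|_2^2
 +\frac{1}{\sqrt N}\,\boldsymbol h_i^{\mathsf T}\boldsymbol w,
\]
and \(T_i(\rho)\) has the same sign as this expression. With \((\boldsymbol H,\boldsymbol w)\) fixed, the right side is affine in \(u\), and its slope \(\|\boldsymbol h_i\|_2^2/N\) is nonnegative. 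It is therefore nondecreasing in \(u\), hence in \(\rho\).

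Now suppose \(T_i(\rho_0)<0\) and \(0<\rho\le\rho_0\). Then
\[
 \frac{T_i(\rho)}{\sqrt\rho}\le\frac{T_i(\rho_0)}{\sqrt{\rho_0}}<0,
\]
and multiplying by \(\sqrt\rho>0\) gives \(T_i(\rho)<0\).

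The only point that needs care is the degenerate case \(\boldsymbol h_i=\zero\). There the slope is zero, but the expression is then identically zero, which contradicts the hypothesis \(T_i(\rho_0)<0\). So this case cannot occur, and in any event the monotone inequality above already covers it. I expect no real obstacle in this argument: it uses only the fact that the signal term scales as \(\rho\) while the noise term scales as \(\sqrt\rho\).
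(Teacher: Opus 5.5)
Your proof is correct and is essentially the paper's argument: dividing by $\sqrt\rho>0$ turns $T_i(\rho)$ into $\sqrt\rho\,\|\boldsymbol h_i\|_2^2/N+\boldsymbol h_i^{\mathsf T}\boldsymbol w/\sqrt N$, which is nondecreasing in $\rho$, so negativity at $\rho_0$ carries over to every smaller $\rho$. Your separate treatment of $\boldsymbol h_i=\zero$ is harmless but not needed, since the monotonicity inequality already covers that case.
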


\begin{proof}
Divide the displayed expression for \(T_i(\rho)\) in
\eqref{eq:converse-one-bit} by \(\sqrt\rho>0\):
\[
 \frac{T_i(\rho)}{\sqrt\rho}
 =\sqrt\rho\,\frac{\|\boldsymbol h_i\|_2^2}{N}
 +\frac{\boldsymbol h_i^{\mathsf T}\boldsymbol w}{\sqrt N}.
\]
The right side is nondecreasing in \(\rho\).
\end{proof}

For the rest of the converse, set
\begin{equation}\label{eq:converse-boundary-rho}
 \rho_0=2\log N-\log\log N-s_N.
\end{equation}
Because \(s_N=o(\log N)\), this value is positive for all sufficiently
large \(N\), and \(\rho_0\sim2\log N\).

\subsection{Construct a lower-cost neighbor from a Gaussian interval}

Condition temporarily on a fixed noise vector.  For a standard Gaussian
column \(\boldsymbol h\) independent of \(\boldsymbol w\), its signed projection onto the noise
direction is
\begin{equation}\label{eq:converse-projection}
 g=\boldsymbol h^{\mathsf T}
 \frac{\boldsymbol w}{\|\boldsymbol w\|_2}.
\end{equation}
Conditional on \(\boldsymbol w\), rotational invariance gives
\(g\sim\mathcal N(0,1)\).  We use the relative-error scale
\(N^{-1/4}\): it tends to zero, its chi-square exponent
\(N(N^{-1/4})^2=\sqrt N\) diverges, and
\(\rho_0N^{-1/4}\to0\).  These are precisely the three properties needed
below.

\begin{lemma}[A sufficient Gaussian interval]\label{lem:converse-slab}
Suppose
\begin{equation}\label{eq:converse-typical-lengths}
 \|\boldsymbol w\|_2^2\ge(1-N^{-1/4})N,
 \qquad
 \|\boldsymbol h\|_2^2\le(1+N^{-1/4})N.
\end{equation}
If
\begin{equation}\label{eq:converse-slab}
 -(1+5N^{-1/4})\sqrt{\rho_0}-\frac1{\sqrt{\rho_0}}
 \le g\le
 -(1+5N^{-1/4})\sqrt{\rho_0},
\end{equation}
then the corresponding \(T_i(\rho_0)\) is negative for all sufficiently
large \(N\).
\end{lemma}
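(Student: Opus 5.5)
Rewrite the sign condition as a condition on \(g\) alone.  Using \(\boldsymbol h^{\mathsf T}\boldsymbol w=g\|\boldsymbol w\|_2\), divide \(T_i(\rho_0)\) in \eqref{eq:converse-one-bit} by \(\sqrt{\rho_0}>0\), as in the proof of \Cref{lem:converse-monotone}:
\[
 \frac{T_i(\rho_0)}{\sqrt{\rho_0}}
 =\sqrt{\rho_0}\,\frac{\|\boldsymbol h\|_2^2}{N}
 +g\,\frac{\|\boldsymbol w\|_2}{\sqrt N}.
\]
Only the upper endpoint of \eqref{eq:converse-slab} is needed; the lower endpoint is not used here and serves only to make the interval short for the later probability estimate.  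Since the upper endpoint is negative, \(g<0\), so the second term is increasing in \(g\) and decreasing in \(\|\boldsymbol w\|_2\).  Its worst case therefore uses the upper endpoint of \(g\) and the lower bound on \(\|\boldsymbol w\|_2\).  The first term is largest under the upper bound on \(\|\boldsymbol h\|_2^2\).

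Substituting the bounds in \eqref{eq:converse-typical-lengths}, with \(\varepsilon=N^{-1/4}\), gives
\[
 \frac{T_i(\rho_0)}{\sqrt{\rho_0}}
 \le\sqrt{\rho_0}\Bigl[(1+\varepsilon)-(1+5\varepsilon)\sqrt{1-\varepsilon}\Bigr].
\]
It remains to show the bracket is negative for large \(N\).  For \(0<\varepsilon\le1\), \(\sqrt{1-\varepsilon}\ge1-\varepsilon\).  Hence
\[
 (1+5\varepsilon)\sqrt{1-\varepsilon}\ge(1+5\varepsilon)(1-\varepsilon)=1+4\varepsilon-5\varepsilon^2.
\]
The bracket is therefore at most \(-3\varepsilon+5\varepsilon^2\), which is negative once \(\varepsilon<3/5\), i.e. for all sufficiently large \(N\).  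Since \(\rho_0>0\) for large \(N\) (noted after \eqref{eq:converse-boundary-rho}), \(T_i(\rho_0)<0\) follows.

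There is no real obstacle; the argument is a deterministic inequality.  The only care needed is the monotonicity direction: bound \(\|\boldsymbol w\|_2\) from below, not above, because \(g\) is negative.  The constant \(5\) in \eqref{eq:converse-slab} leaves ample room over the first-order terms \(\varepsilon/2+\varepsilon\) contributed by the two length bounds.
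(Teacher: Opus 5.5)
Your proof is correct and is essentially the paper's argument: both rewrite $T_i(\rho_0)<0$ using $\boldsymbol h^{\mathsf T}\boldsymbol w=g\|\boldsymbol w\|_2$, use only the upper endpoint of the interval, and compare it against the worst case allowed by the two length bounds. The only difference is cosmetic: the paper bounds $(1+N^{-1/4})/\sqrt{1-N^{-1/4}}\le 1+4N^{-1/4}$ for large $N$, whereas your estimate $\sqrt{1-\varepsilon}\ge 1-\varepsilon$ makes the condition on $N$ explicit, namely $N^{-1/4}<3/5$ together with $\rho_0>0$.
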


\begin{proof}
Using
\(\boldsymbol h^{\mathsf T}\boldsymbol w=g\|\boldsymbol w\|_2\),
the inequality \(T_i(\rho_0)<0\) is exactly
\begin{equation}\label{eq:converse-threshold}
 g<-\sqrt{\frac{\rho_0}{N}}
 \frac{\|\boldsymbol h\|_2^2}{\|\boldsymbol w\|_2}.
\end{equation}
Under \eqref{eq:converse-typical-lengths}, the magnitude of the threshold on
the right is at most
\[
 \sqrt{\rho_0}\,
 \frac{1+N^{-1/4}}{\sqrt{1-N^{-1/4}}}
 \le(1+4N^{-1/4})\sqrt{\rho_0}
\]
for all sufficiently large \(N\).  Every \(g\) in
\eqref{eq:converse-slab} is no greater than
\(-(1+5N^{-1/4})\sqrt{\rho_0}\), which is strictly below the required
threshold.
\end{proof}

The interval has width \(1/\sqrt{\rho_0}\).  Keeping that width is what
produces the \(-\log\log N\) correction.

All asymptotic remainders in the next three probability bounds depend only
on the deterministic sequences \(N\) and \(\rho_0\), and hence on the fixed
sequence \(s_N\), but not on the conditioned noise vector.  Taking a
decreasing envelope of the absolute values of the three deterministic
\(o(1)\) remainders, fix a deterministic sequence
\(\varepsilon_N\downarrow0\) large enough that all three bounds below hold;
in particular, \(e^{-1}-\varepsilon_N>0\) eventually.

\begin{lemma}[Probability of the sufficient interval]
\label{lem:converse-slab-probability}
For a standard Gaussian \(g\),
\begin{equation}\label{eq:converse-slab-probability}
 \Pp\{g\text{ satisfies \eqref{eq:converse-slab}}\}
 \ge(e^{-1}-\varepsilon_N)
       \frac{\phi(\sqrt{\rho_0})}{\sqrt{\rho_0}}.
\end{equation}
\end{lemma}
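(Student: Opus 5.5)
By symmetry of the standard Gaussian, the probability that \(g\) lies in the interval \eqref{eq:converse-slab} equals \(Q(a)-Q(a+1/\sqrt{\rho_0})\), where \(a=(1+5N^{-1/4})\sqrt{\rho_0}\). We lower-bound this by the length of the interval times the smallest density value on it. Since \(\phi\) is decreasing on \([0,\infty)\), that smallest value is \(\phi(a+1/\sqrt{\rho_0})\), so
\[
 \Pp\{g\in\eqref{eq:converse-slab}\}
 \ge\frac{1}{\sqrt{\rho_0}}\,\phi\!\left(a+\frac1{\sqrt{\rho_0}}\right).
\]

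It remains to compare \(\phi(a+1/\sqrt{\rho_0})\) with \(\phi(\sqrt{\rho_0})\). Write \(a+1/\sqrt{\rho_0}=\sqrt{\rho_0}+\delta\) with \(\delta=5N^{-1/4}\sqrt{\rho_0}+1/\sqrt{\rho_0}\). Then
\[
 \frac{\phi(\sqrt{\rho_0}+\delta)}{\phi(\sqrt{\rho_0})}
 =\exp\!\left(-\sqrt{\rho_0}\,\delta-\frac{\delta^2}{2}\right).
\]
Expanding the exponent gives
\[
 \sqrt{\rho_0}\,\delta=5N^{-1/4}\rho_0+1,
 \qquad
 \delta^2\le 2\bigl(25N^{-1/2}\rho_0+1/\rho_0\bigr).
\]
Since \(\rho_0\sim2\log N\to\infty\), both \(N^{-1/4}\rho_0\to0\) and \(\delta\to0\). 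The exponent is therefore \(-1-o(1)\), and the ratio is \(e^{-1}\bigl(1-o(1)\bigr)\).

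The \(o(1)\) remainder here depends only on \(N\) and \(\rho_0\), hence only on the sequence \(s_N\). It is therefore absorbed into the deterministic sequence \(\varepsilon_N\) fixed before the lemma, which gives \eqref{eq:converse-slab-probability}.

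There is no real obstacle. The one point to check is that the interval width \(1/\sqrt{\rho_0}\) multiplied by the shift \(\sqrt{\rho_0}\) produces exactly the constant factor \(e^{-1}\), with no further loss. It does, because the other perturbation term \(5N^{-1/4}\rho_0\) vanishes. This uses the hypothesis that \(\rho_0\) grows only logarithmically, so that \(\rho_0N^{-1/4}\to0\).
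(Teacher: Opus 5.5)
Your proof is correct and follows the paper's own argument. You bound the probability by the interval length $1/\sqrt{\rho_0}$ times the density at the endpoint farthest from zero, show the ratio to $\phi(\sqrt{\rho_0})$ is $e^{-1+o(1)}$ with a deterministic remainder, and absorb that remainder into $\varepsilon_N$; the only cosmetic difference is that you bound $\delta^2$ via $(x+y)^2\le 2x^2+2y^2$, whereas the paper expands the squared argument exactly.
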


\begin{proof}
On the negative half-line, the smallest density in the interval occurs at
its left endpoint.  Hence interval length times minimum density gives
\[
 \Pp\{g\text{ satisfies \eqref{eq:converse-slab}}\}
 \ge\frac1{\sqrt{\rho_0}}
 \phi\!\left((1+5N^{-1/4})\sqrt{\rho_0}
             +\frac1{\sqrt{\rho_0}}\right).
\]
The square of the displayed argument minus \(\rho_0\) equals
\[
 10N^{-1/4}\rho_0+25N^{-1/2}\rho_0
 +2+10N^{-1/4}+\frac1{\rho_0}=2+o(1).
\]
Since \(\phi(t)=(2\pi)^{-1/2}e^{-t^2/2}\), the density ratio is
\(e^{-1+o(1)}\).  By the deterministic choice of \(\varepsilon_N\), it is
at least \(e^{-1}-\varepsilon_N\) for all sufficiently large \(N\), proving
the claim.
\end{proof}

The projection \(g\) and the column length are not independent.  The next
lemma avoids any independence claim by subtracting the probability of an
atypically long column.

\begin{lemma}[One-column probability]\label{lem:converse-one-column}
For every fixed \(\boldsymbol w\) satisfying the first inequality in
\eqref{eq:converse-typical-lengths},
\begin{equation}\label{eq:converse-one-column-probability}
 \Pp\{T_i(\rho_0)<0\mid\boldsymbol w\}
 \ge(e^{-1}-\varepsilon_N)
       \frac{\phi(\sqrt{\rho_0})}{\sqrt{\rho_0}},
\end{equation}
uniformly over all such \(\boldsymbol w\).
Here only the independent column \(\boldsymbol h_i\) remains random under
the conditional probability.
\end{lemma}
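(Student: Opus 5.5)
Fix \(\boldsymbol w\) with \(\|\boldsymbol w\|_2^2\ge(1-N^{-1/4})N\), and work under the conditional law in which only the column \(\boldsymbol h=\boldsymbol h_i\sim\mathcal N(\zero,\boldsymbol I_N)\) is random. Define the two events
\[
 E_1=\{g\text{ satisfies \eqref{eq:converse-slab}}\},\qquad
 E_2=\{\|\boldsymbol h\|_2^2\le(1+N^{-1/4})N\},
\]
with \(g\) as in \eqref{eq:converse-projection}. On \(E_1\cap E_2\), both length conditions in \eqref{eq:converse-typical-lengths} hold, so \Cref{lem:converse-slab} gives \(T_i(\rho_0)<0\) once \(N\) is large. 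Hence
\[
 \Pp\{T_i(\rho_0)<0\mid\boldsymbol w\}\ge\Pp(E_1\mid\boldsymbol w)-\Pp(E_2^c\mid\boldsymbol w).
\]
No independence between \(g\) and \(\|\boldsymbol h\|_2\) is needed for this inequality.

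For the first term, rotational invariance shows that conditional on \(\boldsymbol w\), \(g\sim\mathcal N(0,1)\), and this law does not depend on \(\boldsymbol w\). \Cref{lem:converse-slab-probability} therefore gives \(\Pp(E_1\mid\boldsymbol w)\ge(e^{-1}-\varepsilon_N')\phi(\sqrt{\rho_0})/\sqrt{\rho_0}\), where \(\varepsilon_N'\) denotes the remainder from that lemma. For the second term, \(\|\boldsymbol h\|_2^2\sim\chi^2_N\) independently of \(\boldsymbol w\). The chi-square tail \eqref{eq:chi-tail} with \(\varepsilon=N^{-1/4}\) gives \(\Pp(E_2^c)\le2e^{-\sqrt N/8}\).

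The step that needs care is showing this subtracted term is negligible on the relative scale of the main term. That main term is
\[
 \frac{\phi(\sqrt{\rho_0})}{\sqrt{\rho_0}}\asymp\frac{e^{-\rho_0/2}}{\sqrt{\log N}}
 =\frac{\sqrt{\log N}\,e^{s_N/2}}{N\sqrt{\log N}}\ge N^{-1},
\]
since \(s_N\to\infty\). Meanwhile \(e^{-\sqrt N/8}=o(N^{-2})\), so \(\Pp(E_2^c)\) is \(o(1)\) times the main term. This \(o(1)\) ratio is deterministic: it depends only on \(N\) and \(s_N\), not on \(\boldsymbol w\).

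Absorbing it, together with \(\varepsilon_N'\), into the deterministic envelope \(\varepsilon_N\) (the paper already allows \(\varepsilon_N\) to be chosen large enough to cover all three bounds) yields \eqref{eq:converse-one-column-probability} uniformly over every admissible \(\boldsymbol w\).
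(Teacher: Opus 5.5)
Your proof is correct and follows the paper's argument essentially step for step. Both arguments use the inclusion of the interval-and-length event in $\{T_i(\rho_0)<0\}$ via \Cref{lem:converse-slab}, then the bound $\Pp(E_1\cap E_2)\ge\Pp(E_1)-\Pp(E_2^c)$ with the chi-square tail $2e^{-\sqrt N/8}$, and finally absorb that stretched-exponential term, relative to $\phi(\sqrt{\rho_0})/\sqrt{\rho_0}$, into the deterministic envelope $\varepsilon_N$. The only cosmetic difference is how you lower-bound $\phi(\sqrt{\rho_0})/\sqrt{\rho_0}$: you use $s_N\to\infty$ to get order $N^{-1}e^{s_N/2}$, whereas the paper uses only $\rho_0<2\log N$ to get $1/(N\sqrt{4\pi\log N})$. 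Either bound suffices.
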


\begin{proof}
Let \(A\) be the event that the column-length inequality in
\eqref{eq:converse-typical-lengths} holds, and let \(C\) be the Gaussian
interval event \eqref{eq:converse-slab}.  The preceding sufficient-interval
lemma gives \(A\cap C\subseteq\{T_i(\rho_0)<0\}\).  For arbitrary events,
\(\Pp(A\cap C)\ge\Pp(C)-\Pp(A^c)\).  The chi-square bound
\eqref{eq:chi-tail} therefore gives
\begin{align*}
 \Pp\{T_i(\rho_0)<0\mid\boldsymbol w\}
 &\ge \Pp(C)-2e^{-\sqrt N/8}.
\end{align*}
The preceding density calculation shows, deterministically, that
\[
 \frac{\Pp(C)}{\phi(\sqrt{\rho_0})/\sqrt{\rho_0}}
 \longrightarrow e^{-1}.
\]
Because \(\rho_0<2\log N\),
\[
 \frac{\phi(\sqrt{\rho_0})}{\sqrt{\rho_0}}
 \ge\frac1{N\sqrt{4\pi\log N}}.
\]
The stretched-exponentially small term is therefore negligible relative to
\(\phi(\sqrt{\rho_0})/\sqrt{\rho_0}\).  Both deterministic convergence
errors are absorbed in the chosen \(\varepsilon_N\), which proves
\eqref{eq:converse-one-column-probability} uniformly in \(\boldsymbol w\).
\end{proof}

\paragraph{Where the \(\log\log N\) correction comes from.}
One column is bad with probability on the scale
\[
 \frac{\phi(\sqrt\rho)}{\sqrt\rho}
 =\frac{e^{-\rho/2}}{\sqrt{2\pi\rho}}.
\]
The factor \(e^{-\rho/2}\) produces \(2\log N\).  The interval has width
\(1/\sqrt\rho\); after multiplication by \(N\), this factor produces the
additional \(-\log\log N\) term.  Dropping the prefactor would lose the
lower-order boundary proved here.

\subsection{Use conditional independence across the columns}

Conditional on the common noise vector, different one-bit tests use
different independent channel columns.  This turns the one-column estimate
into \(N\) independent opportunities for a lower-cost neighbor.  It remains
to bound the probability that none of them occurs and then evaluate the
result at \(\rho_0\).

\begin{lemma}[At least one lower-cost neighbor appears]\label{lem:converse-amplification}
For every \(\boldsymbol w\) satisfying the first inequality in
\eqref{eq:converse-typical-lengths},
\begin{equation}\label{eq:converse-no-bad-column}
 \Pp\{T_i(\rho_0)\ge0\text{ for every }i\mid\boldsymbol w\}
 \le
 \exp\!\left\{-(e^{-1}-\varepsilon_N)
 N\frac{\phi(\sqrt{\rho_0})}{\sqrt{\rho_0}}\right\}.
\end{equation}
\end{lemma}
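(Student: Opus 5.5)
Conditional on the noise vector \(\boldsymbol w\), the events \(\{T_i(\rho_0)\ge0\}\) for \(i=1,\ldots,N\) depend on the distinct columns \(\boldsymbol h_1,\ldots,\boldsymbol h_N\), which are independent of each other and of \(\boldsymbol w\). The plan is therefore to factor the conditional probability into a product and then apply \(1-p\le e^{-p}\) to each factor.

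\emph{Step 1 (factorization).} Fix \(\boldsymbol w\) satisfying \(\|\boldsymbol w\|_2^2\ge(1-N^{-1/4})N\). For each \(i\), the event \(\{T_i(\rho_0)\ge0\}\) is a measurable function of \((\boldsymbol h_i,\boldsymbol w)\). Since the columns are independent and independent of \(\boldsymbol w\), conditioning on \(\boldsymbol w\) leaves them independent. Hence
\[
 \Pp\{T_i(\rho_0)\ge0\text{ for every }i\mid\boldsymbol w\}
 =\prod_{i=1}^N\bigl(1-\Pp\{T_i(\rho_0)<0\mid\boldsymbol w\}\bigr).
\]

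\emph{Step 2 (per-factor bound).} By \Cref{lem:converse-one-column}, each conditional probability \(\Pp\{T_i(\rho_0)<0\mid\boldsymbol w\}\) is at least \(p_N:=(e^{-1}-\varepsilon_N)\phi(\sqrt{\rho_0})/\sqrt{\rho_0}\). This bound is uniform over admissible \(\boldsymbol w\) and identical for all \(i\), because every column has the same law. For sufficiently large \(N\) we have \(p_N\in(0,1]\), so each factor is at most \(1-p_N\le e^{-p_N}\). Multiplying the \(N\) factors gives \(e^{-Np_N}\), which is exactly \eqref{eq:converse-no-bad-column}.

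\emph{Main obstacle.} The only delicate point is justifying the factorization rigorously. The conditional law used must be the regular conditional distribution given \(\boldsymbol w\), under which the \(\boldsymbol h_i\) remain i.i.d. standard Gaussian. This holds by independence of \(\boldsymbol H\) and \(\boldsymbol w\): the conditional law can be taken to be the unconditional law of \(\boldsymbol H\) for every \(\boldsymbol w\), so the product formula holds pointwise in \(\boldsymbol w\), not merely almost surely. Everything else reduces directly to \Cref{lem:converse-one-column}.
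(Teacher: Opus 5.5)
Your proposal is correct and follows the paper's proof: conditional on \(\boldsymbol w\), the \(N\) events depend on independent columns, so the no-bad-column probability factors and is at most \((1-p_N)^N\le e^{-Np_N}\), with \(p_N\) supplied by \Cref{lem:converse-one-column}. The restriction \(p_N\in(0,1]\) is not needed, because \(1-p\le e^{-p}\) holds for every real \(p\) and each factor is already nonnegative.
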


\begin{proof}
Given \(\boldsymbol w\), the \(N\) events depend on independent and
identically distributed columns.  If the probability that one column gives
\(T_i(\rho_0)<0\) is \(p\), the probability that no column does so is
\((1-p)^N\le e^{-Np}\).
Apply \Cref{lem:converse-one-column}.
\end{proof}

\begin{proof}[Proof of \Cref{thm:main}\textup{(b)}]
First take \(\boldsymbol x^\star=\one\) and \(\rho=\rho_0\).  Split according
to whether \(\|\boldsymbol w\|_2^2\ge(1-N^{-1/4})N\).  The chi-square bound
and \Cref{lem:converse-amplification} give
\begin{align}
 &\Pp\{T_i(\rho_0)\ge0\text{ for every }i\}\notag\\
 &\quad\le2e^{-\sqrt N/8}
 +\exp\!\left\{-(e^{-1}-\varepsilon_N)
 N\frac{\phi(\sqrt{\rho_0})}{\sqrt{\rho_0}}\right\}.
 \label{eq:converse-final-bound}
\end{align}
It remains to evaluate the exponent.  Substituting
\eqref{eq:converse-boundary-rho} gives
\begin{align*}
 N\frac{\phi(\sqrt{\rho_0})}{\sqrt{\rho_0}}
 &=\frac{N}{\sqrt{2\pi\rho_0}}e^{-\rho_0/2}\\
 &=\frac{1+o(1)}{\sqrt{4\pi}}e^{s_N/2}
 \longrightarrow\infty.
\end{align*}
Both terms in \eqref{eq:converse-final-bound} vanish.  With probability
tending to one, some \(T_i(\rho_0)<0\), and
\Cref{lem:converse-one-bit} then gives a strictly lower-cost one-bit
neighbor.

In particular,
\[
 \{\one\in\arg\min_{\boldsymbol x}f(\boldsymbol x)\}
 \subseteq\bigcap_{i=1}^N\{T_i(\rho_0)\ge0\},
\]
so the preceding probability bound is exactly sufficient for the claimed
ML-minimizer failure.

By \Cref{lem:converse-monotone}, the same coupled event implies failure at
every \(0<\rho\le\rho_0\).  Column-sign symmetry gives the same probability
for every deterministic transmitted word.  Consequently the bound controls
the supremum in \eqref{eq:uniform-converse-risk}, proving the claimed uniform
ML failure.
\end{proof}

\begin{keybox}
\textbf{Converse proof complete.}
At or above \(2\log N\), the explicit two-stage detector achieves exact block
recovery with uniformly vanishing error probability.  At or below
\(2\log N-\log\log N-s_N\), even one-bit competitors prevent ML recovery
with uniformly high probability.  The two statements meet at the sharp
first-order constant \(2\); the remaining lower-order window is not resolved
here.
\end{keybox}

\appendix
\section{Decision-theoretic consequence of the converse}
\label{app:decision}

First-order ML achievability was already identified in
\citet{HansenEtAl2009} and stated with a diverging additive slack in
\citet[Lemma~IV.2]{HassibiEtAl2014}.  The new constructive content of this
paper is therefore \Cref{thm:main}\textup{(a)}, not another proof of the known
first-order ML upper bound.  This appendix records the standard
decision-theoretic implication needed only if ``statistical threshold'' is
read in the minimax sense.  It is not used anywhere in the constructive proof.

\begin{corollary}[Minimax consequence]
\label{cor:minimax}
Write \(\Pp_{\boldsymbol x^\star,\rho}\) for probability under
\eqref{eq:model} with the displayed deterministic word and SNR.  For the
sequence in \Cref{thm:main}\textup{(b)}, put
\(\rho_{0,N}=2\log N-\log\log N-s_N\) and define the minimax risk at a
fixed SNR by
\[
 R_N^\star(\rho)
 =\inf_{\delta_N}\ \sup_{\boldsymbol x^\star\in\{\pm1\}^N}
 \Pp_{\boldsymbol x^\star,\rho}
 \{\delta_N(\boldsymbol H,\boldsymbol y,\rho)
       \ne\boldsymbol x^\star\},
\]
where the infimum is over measurable deterministic decision rules and
randomized Markov kernels from
\((\boldsymbol H,\boldsymbol y,\rho)\) to \(\{\pm1\}^N\).  For a randomized
detector, the displayed probability also averages over its independent
internal randomization.  Then
\begin{equation}\label{eq:minimax-lower-risk}
 \lim_{N\to\infty}\ \inf_{0<\rho\le\rho_{0,N}}
 R_N^\star(\rho)=1.
\end{equation}
Thus the statistical threshold and the polynomial arithmetic-complexity
threshold have the same first-order coefficient \(2\).  The unit-cost
exact-real qualification applies to the algorithmic achievability statement,
not to the statistical lower bound.
\end{corollary}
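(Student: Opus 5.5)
The plan is to lower-bound the minimax risk by a Bayes risk under the uniform prior on \(\{\pm1\}^N\), identify the Bayes-optimal rule with ML, and then invoke \Cref{thm:main}(b).

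\textbf{Step 1: reduce to the Bayes risk.} Fix \(N\) and \(0<\rho\le\rho_{0,N}\), and let \(\delta_N\) be any measurable, possibly randomized, rule. The worst-word error probability dominates the average over a uniformly drawn \(\boldsymbol x^\star\). Hence \(R_N^\star(\rho)\) is at least the infimum over rules of the Bayes error under the uniform prior.

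\textbf{Step 2: identify the Bayes-optimal rule.} The posterior of \(\boldsymbol x^\star\) given \((\boldsymbol H,\boldsymbol y)\) is proportional to \(\exp\{-f(\boldsymbol x)/2\}\), because the prior is flat and \(\boldsymbol H\) does not depend on \(\boldsymbol x^\star\). For any randomized kernel, the conditional probability of a correct decision is an average of posterior masses. It is therefore at most \(\max_{\boldsymbol x}\pi(\boldsymbol x\mid\boldsymbol H,\boldsymbol y)\), the mass of any ML minimizer. This bound gives
\[
 1-R_N^\star(\rho)\le\E\Bigl[\max_{\boldsymbol x}\pi(\boldsymbol x\mid\boldsymbol H,\boldsymbol y)\Bigr].
\]

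\textbf{Step 3: bound the ML posterior mass.} This is the only point needing care. The posterior mass of an ML minimizer need not be small even when ML errs. So I would not bound it by an indicator of the event that \(\boldsymbol x^\star\) is an ML minimizer. Instead I would write the right side of Step 2 as the Bayes success probability of a fixed MAP (=ML) selector \(\widehat{\boldsymbol x}_{\mathrm{ML}}\), namely
\[
 \E\bigl[\pi(\widehat{\boldsymbol x}_{\mathrm{ML}}\mid\boldsymbol H,\boldsymbol y)\bigr]=\Pp\{\widehat{\boldsymbol x}_{\mathrm{ML}}=\boldsymbol x^\star\}.
\]
This holds because, conditionally on \((\boldsymbol H,\boldsymbol y)\), the true word is distributed according to the posterior. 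The event \(\widehat{\boldsymbol x}_{\mathrm{ML}}=\boldsymbol x^\star\) is contained in the event that \(\boldsymbol x^\star\) is an ML minimizer of \(f\). Averaging the uniform bound of \Cref{thm:main}(b) over the prior, the probability of that event is at most the supremum over words in \eqref{eq:uniform-converse-risk}.

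\textbf{Step 4: uniformity and the limit.} The bound from \Cref{thm:main}(b) is already uniform over \(\rho\le\rho_{0,N}\), so Steps 1–3 give \(\inf_{0<\rho\le\rho_{0,N}}R_N^\star(\rho)\ge1-o(1)\). Since \(R_N^\star\le1\) trivially, the limit \eqref{eq:minimax-lower-risk} follows.

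The main obstacle is the measure-theoretic care in Step 2 for randomized kernels and the tie-breaking in Step 3. I would handle both by a measurable choice of ML selector, which exists because the domain \(\{\pm1\}^N\) is finite. The first-order-coefficient remark then combines \eqref{eq:minimax-lower-risk} with \Cref{thm:main}(a).
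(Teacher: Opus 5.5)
Your proposal is correct and takes the same route as the paper. You lower-bound the worst-word risk by the uniform-prior Bayes risk, identify ML with MAP via the posterior $\propto\exp\{-f(\boldsymbol x)/2\}$, and bound MAP's average success by the probability that $\boldsymbol x^\star$ is an ML minimizer, which \Cref{thm:main}\textup{(b)} controls uniformly over words and $0<\rho\le\rho_{0,N}$. Your posterior-mass argument for randomized kernels simply spells out the Bayes-optimality step that the paper cites directly.
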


\begin{proof}
Give \(X\in\{\pm1\}^N\) the uniform prior, independently of
\((\boldsymbol H,\boldsymbol w)\).  Conditional on \(\boldsymbol H\),
\[
 p(\boldsymbol y\mid\boldsymbol H,X=\boldsymbol x)
 \propto\exp\{-f(\boldsymbol x)/2\}.
\]
Consequently, least-squares ML is a MAP rule and minimizes the
uniform-prior average block-error probability.  Define
\[
 \eta_N=
 \sup_{\substack{\boldsymbol x\in\{\pm1\}^N\\0<\rho\le\rho_{0,N}}}
 \Pp_{\boldsymbol x,\rho}
 \{\boldsymbol x\in\arg\min_{\boldsymbol z}f(\boldsymbol z)\}.
\]
By \Cref{thm:main}\textup{(b)}, \(\eta_N\to0\).  At any fixed SNR in
the lower regime, the uniform-prior average success probability of MAP is
at most \(\eta_N\).  Bayes optimality implies that the average success
probability of every detector is also at most \(\eta_N\).  Hence, for
every detector,
\[
 \sup_{\boldsymbol x}
 \Pp_{\boldsymbol x,\rho}\{\delta_N\ne\boldsymbol x\}
 \ge 2^{-N}\sum_{\boldsymbol x}
 \Pp_{\boldsymbol x,\rho}\{\delta_N\ne\boldsymbol x\}
 \ge1-\eta_N.
\]
The bound is uniform for \(0<\rho\le\rho_{0,N}\), proving
\eqref{eq:minimax-lower-risk}.
\end{proof}

\begin{remark}[The ML upper endpoint]
The preceding corollary concerns the lower, minimax side only.  For
completeness, \Cref{thm:main}\textup{(a)} also implies
\[
 \lim_{N\to\infty}
 \sup_{\substack{
       \boldsymbol x^\star\in\{\pm1\}^N\\
       \rho\ge2\log N}}
 \Pp_{\boldsymbol x^\star,\rho}
 \{\widehat{\boldsymbol x}_{\rm ML}\ne\boldsymbol x^\star\}=0
\]
for a measurable least-squares ML selector.  This is a risk comparison,
not an eventwise implication from successful local descent.  Indeed, under
the uniform prior used above, MAP optimality makes the average ML error no
larger than the constructive detector's average error.  Objective ties have
probability zero: conditional on an almost surely nonsingular
\(\boldsymbol H\), every pairwise cost difference is a nondegenerate affine
Gaussian function of \(\boldsymbol w\).  Finally, column-sign symmetry makes
the ML error probability identical for every deterministic transmitted word,
so its average and worst-word risks coincide.  This observation is included
only for completeness; first-order ML achievability was already known from
the works cited at the start of this appendix.
\end{remark}

\section{Equivalent formulations and adjacent literatures}
\label{app:adjacent}

The observation law itself has several exact non-MIMO descriptions.
Writing \(A=H/\sqrt N\), it is the load-one uncoded binary-CDMA model
with user energy \(\rho\), and it is the \(B=1\), \(L=n=N\) signed
Gaussian superposition construction
\citep{JosephBarron2012,LiuHsiehVenkataramanan2024}.  Moreover, if
\(b^\star=(x^\star+\boldsymbol 1)/2\), then
\[
 y+\sqrt{\rho/N}\,H\boldsymbol 1
 =H\bigl(2\sqrt{\rho/N}\,b^\star\bigr)+w,
 \qquad b^\star\in\{0,1\}^N,
\]
so the same data also form a dense binary Gaussian-regression problem.
These are identities of the finite-dimensional joint observation law,
not merely analogies between codeword or covariance distributions.

\subsection*{Gaussian random linear estimation and AMP}

The AMP connection is also an exact model identification, once the prior is
specified.  Divide \eqref{eq:model} by \(\sqrt{\rho}\) and write
\[
 \widetilde y=\frac{1}{\sqrt N}H x^\star+\widetilde w,
 \qquad
 \widetilde w\sim\mathcal N(0,\Delta I_N),
 \qquad \Delta=\frac1\rho.
\]
If the coordinates of \(x^\star\) are assigned an i.i.d. Rademacher prior,
this is precisely the scalar \(B=1\), load \(\alpha=1\) Gaussian
random-linear-estimation model studied in the AMP, CDMA, and replica
literatures \citep{Tanaka2002,BayatiMontanari2011,
JavanmardMontanari2013,BarbierMacrisDiaKrzakala2020}.  The theorem in this
paper is instead stated as a worst-word risk bound for every deterministic
\(x^\star\); column-sign symmetry is what makes that formulation compatible
with the usual Rademacher-prior viewpoint.

The earlier results rigorously establish state evolution for fixed iteration
and fixed model parameters.  For scalar discrete priors, the
replica-symmetric formulas for normalized mutual information and MMSE and,
outside the hard phase, AMP optimality in normalized mean-square error hold
under the at-most-three-stationary-points condition on the
replica-symmetric potential stated in
\citet{BarbierMacrisDiaKrzakala2020}; the Rademacher model here falls under
that conclusion when the stated condition holds.  At the good low-noise fixed point, the scalar-channel prediction has hard-decision error
\(Q(\sqrt{\rho})\) to first order.  If one informally treats the coordinate
errors as rare opportunities, then
\[
 NQ(\sqrt{\rho})\asymp1
 \quad\Longleftrightarrow\quad
 \rho=2\log N-\log\log N+O(1),
\]
which explains why AMP and replica calculations expose the same threshold
scale as the one-bit-neighbor calculation in the introduction.  For a recent
discussion of this prediction and its relationship to the present
block-recovery question, see \citet{Krzakala2026MIMOAMP}.

The logical gap is the order of limits and the loss function.  A statement
that first fixes \(\Delta>0\) and the iteration index, lets \(N\to\infty\),
and controls an empirical mean does not automatically remain valid when
\(\Delta_N\asymp1/\log N\), nor does an \(o(1)\) normalized error imply that
all \(N\) coordinates are correct.  One direct AMP route based on a
coordinatewise union bound would require a nonasymptotic or triangular-array
specialization strong enough to give simultaneous \(o(1/N)\)-scale
coordinate tails and to control the required number of iterations; a
different route would need comparably strong all-coordinate control.
Nonasymptotic AMP
representations such as \citet{BaoHanXu2025} make that route plausible, but
we are not aware of a published argument that carries out this specialization
at \(\rho_N=2\log N\).

\subsection*{Comparison with \(\mathbb Z_2\) synchronization}

Gaussian \(\mathbb Z_2\) synchronization is a close planted-recovery
analogue, but it is not the same observation experiment.  In one standard
normalization it observes a symmetric spiked-Wigner matrix
\[
 M=\lambda v^\star v^{\star\mathsf T}+W,
 \qquad
 v_i^\star\in\{\pm N^{-1/2}\},
\]
where \(W=W^{\mathsf T}\) is GOE-scaled: independently for \(i<j\),
\(W_{ij}\sim\mathcal N(0,1/N)\), and independently on the diagonal,
\(W_{ii}\sim\mathcal N(0,2/N)\).
Thus the data are noisy pairwise products of the unknown signs.  By contrast,
the MIMO model observes \(N\) noisy linear measurements through a known
rectangular design.  Even the matched-filter statistic
\(H^{\mathsf T}y\) contains a Wishart Gram matrix and noise correlated
through that same design; it is not a spiked-Wigner observation.

The synchronization literature nevertheless provides important conceptual
precedents: sharp exact-recovery thresholds can be attained by single-stage
polynomial-time relaxations, in community recovery and in synchronization
\citep{HajekWuXu2016,FeiChen2019}, and AMP admits finite-sample analyses in the
spiked-Wigner model \citep{LiFanWei2023}.  Closely related ``rough recovery
followed by local cleanup'' architectures also appear in exact community
recovery \citep{AbbeBandeiraHall2016}.  These works help explain
why the proof strategy here is natural.  They do not directly imply the MIMO
theorem, because a transfer would still have to map the observation law,
the SNR normalization, the whole-vector loss, and the algorithmic trajectory.

The closest coding and CDMA results do not provide the block-recovery
theorem needed here.  \citet{JosephBarron2012} identify the \(B=1\)
signed construction as a Gaussian generator-matrix code, but their
reliability theorem assumes \(B=L^a\) with a positive section-size
exponent.  Thus it does not include \(B=1\); at that endpoint their
least-squares decoder searches all sign vectors, and their outer-code
conversion to block reliability changes the message set.  Likewise,
the nonasymptotic AMP theorem of \citet{RushVenkataramanan2019} controls
a positive section-error fraction for sparse superposition codes at
fixed rate with growing section size; it cannot be specialized by
setting the section alphabet to two to obtain exact recovery here.
The uncoded specialization of \citet{LiuHsiehVenkataramanan2024} has
exactly the present load-one CDMA law, but its state-evolution theorem
characterizes fixed-iteration empirical user- and bit-error rates at
fixed system parameters.  It does not control the event that all users
are correct when the energy grows as \(\rho_N\asymp\log N\).

The regression identity gives another close comparison.  For balanced
words, which are included in our uniform theorem, it has
\(p=n=N\), support size \(s=N/2\), and nonzero coefficient
\(a_N=2\sqrt{\rho/N}\).  The polynomial selector of
\citet{NdaoudTsybakov2020} splits the observations into estimation and
thresholding samples; its estimation-sample condition involves an
unspecified absolute constant, and its guarantee is not shown to cover the
dense endpoint \(s=p/2\) with \(n=p\).
The iterative compressed-sensing theorem of \citet{GaoZhang2022}
avoids that split, but assumes both \(\limsup s/p<1/2\) and
\(s(\log p)^4=o(n)\), excluding \(s=p/2\) and \(n=p=N\).
The polynomial-time correlation guarantee of
\citet{MazumdarSangwan2025} specializes to an \(N\log N\) observation
scale in the dense case, while its sharper linear-regression likelihood
result uses exhaustive fixed-weight decoding.  Finally, the lattice
method of \citet{GamarnikKizildagZadik2021} gives bit-polynomial exact
recovery for dense rational regression, but at \(n=p=N\) its proved
additive-noise tolerance is exponentially small.  In the corresponding
variance-one-design normalization, the noise standard deviation here is
\(\sqrt{N/\rho}=\Theta(\sqrt{N/\log N})\).

There is also an exact conditional negative-Wishart statistic.  Put
\(u=x^\star/\sqrt N\), let
\(U_y\in\mathbb R^{(N-1)\times N}\) have orthonormal rows spanning
\(y^\perp\), and form \(B=U_yH\).  Gaussian conditioning gives,
conditional on \(y\), independent rows
\[
 B_{k,:}\sim
 \mathcal N\!\left(0,
 I_N-\frac{\rho}{1+\rho}uu^{\mathsf T}\right),
 \qquad k=1,\ldots,N-1.
\]
Thus \(B\) is a negative-spiked Rademacher Wishart instance with
\(p=N\), \(n_{\rm samp}=N-1\), and
\(\beta_N=-\rho/(1+\rho)\).  At \(\rho\asymp2\log N\), its residual
variance in the planted direction is
\(1+\beta_N=(1+\rho)^{-1}\asymp(2\log N)^{-1}\).
The statistic contains information about the planted direction only up to a
global sign, which the original observation can orient by comparing the two
residual costs.
This is a one-way statistic, not an equivalence of experiments, because
the projection may discard information.

The neighboring Wishart literature treats different tasks or different
parts of the hard edge.  \citet{BandeiraKuniskyWein2020} study detection
and constrained-PCA certification, including low-degree evidence rather
than a general computational-hardness theorem; they do not give a spike
recovery algorithm for the triangular sequence above.
\citet{ZadikSongWeinBruna2022} use lattice basis reduction to recover a
planted hypercube at the exactly singular endpoint \(\beta=-1\), where
the planted vector is an exact null direction.  Their guarantee relies
on that noiseless algebraic structure.  \citet{Venkat2023} gives
lattice-based discrepancy certificates and negative-Wishart detection
in a much more nearly singular regime; at square aspect its guarantee
requires exponentially small residual variance and does not recover the
spike.  None therefore yields recovery at
\(\beta_N=-1+\Theta(1/\log N)\).

Accordingly, we do not claim novelty for the observation law, its
coding, CDMA, or regression interpretations, the conditional Wishart
statistic, the AMP/replica prediction of the threshold scale, the broad
warm-start-and-cleanup architecture, or local search itself.  To our
knowledge,
no prior published theorem gives a detector that, using polynomially
many unit-cost exact-real operations, has vanishing whole-vector error
for the i.i.d. square signed-Gaussian code uniformly over every
deterministic transmitted word and every \(\rho\ge2\log N\).

\section*{Acknowledgments}

The author thanks Kangwook Lee, Jason D.~Lee, and Samet Oymak for early
conversations and encouragement; Mahdi Soltanolkotabi for discussions of AMP
and fixed-signal versus uniform guarantees; Jy-yong Sohn for reading an early
draft; Christos Thrampoulidis for discussions of box relaxation and related
literature; and Andrea Montanari, Florent Krzakala, and Lenka Zdeborov\'a for
public comments that sharpened the comparison with AMP, rigorous replica
formulas, and \(\mathbb Z_2\) synchronization.  The author thanks David Tse for
historical perspective on the problem and for emphasizing its uncoded scope;
Babak Hassibi for encouraging feedback and for raising questions about
extensions and the MCMC perspective; and Alexandros G.~Dimakis for discussing
this problem with the author during the author's PhD and for his mentorship.

\begingroup
\small
\bibliographystyle{plainnat}
\bibliography{references}
\endgroup

\end{document}